\theoremstyle{definition}
\newtheorem{Thm}{Theorem}
\newtheorem{ccorollary}[Thm]{Corollary}
\newtheorem{ddefinition}[Thm]{Definition}
\newtheorem{llemma}[Thm]{Lemma}
\newtheorem{ttheorem}[Thm]{Theorem}
\newtheorem{eexample}[Thm]{Example}
 \newcommand {\Ab}{\mathbf{A}}  
\newcommand {\Bb}{\mathbf{B}}
\newcommand {\Cb}{\mathbb{C}}
\newcommand {\Cbf}{\mathbf{C}}
\newcommand {\op}{\mathbf{\oplus}} 
\newcommand {\om}{\mathbf{\ominus}} 
\newcommand {\opp}{\op^\prime}
\newcommand {\omp}{\om^\prime}
\newcommand {\asubM}{\|\ab\|_{\lower.1ex \hbox {\scriptsize {M}}}}
\newcommand {\hsubM}{\|\hb\|_{\lower.1ex \hbox {\scriptsize {M}}}}
\newcommand {\hsubMs}{\|\hb\|_{\lower.1ex \hbox {\scriptsize {M}}}^2}
\newcommand {\hasubM}{\|\hb_{\ab}\|_{\lower.1ex \hbox {\scriptsize {M}}}}
\newcommand {\hbsubM}{\|\hb_{\bb}\|_{\lower.1ex \hbox {\scriptsize {M}}}}
\newcommand {\hcsubM}{\|\hb_{\cb}\|_{\lower.1ex \hbox {\scriptsize {M}}}}
\newcommand {\AsubM}{\|\Ab\|_{\lower.1ex \hbox {\scriptsize {M}}}}
\newcommand {\AsubMs}{\|\Ab\|_{\lower.1ex \hbox {\scriptsize {M}}}^2}
\newcommand {\AosubM}{\|A_1\|_{\lower.1ex \hbox {\scriptsize {M}}}}
\newcommand {\AtsubM}{\|A_2\|_{\lower.1ex \hbox {\scriptsize {M}}}}
\newcommand {\BsubM}{\|\Bb\|_{\lower.1ex \hbox {\scriptsize {M}}}}
\newcommand {\CsubM}{\|\Cb\|_{\lower.1ex \hbox {\scriptsize {M}}}}
\newcommand {\CsubMf}{\|\Cbf\|_{\lower.1ex \hbox {\scriptsize {M}}}}
\newcommand {\CsubMs}{\|\Cb\|_{\lower.1ex \hbox {\scriptsize {M}}}^2}
\newcommand {\asupM}{\|\ab\|^{\lower.1ex \hbox {\scriptsize {M}}}}
\newcommand {\aspM}{\ab^{\lower.1ex \hbox {\scriptsize {M}}}}
\newcommand {\asbM}{\ab_{\lower.1ex \hbox {\scriptsize {M}}}}
\newcommand {\AsupM}{\|\Ab\|^{\lower.1ex \hbox {\scriptsize {M}}}}
\newcommand {\BsupM}{\|\Bb\|^{\lower.1ex \hbox {\scriptsize {M}}}}
\newcommand {\CsupM}{\|\Cb\|^{\lower.0ex \hbox {\scriptsize {M}}}}
\newcommand {\CsupMf}{\|\Cbf\|^{\lower.0ex \hbox {\scriptsize {M}}}}
\newcommand {\asubP}{\|\ab\|_{\lower.1ex \hbox {\scriptsize {P}}}}
\newcommand {\asbP}{\ab_{\lower.1ex \hbox {\scriptsize {P}}}}
\newcommand {\AsubP}{\|\Ab\|_{\lower.1ex \hbox {\scriptsize {P}}}}
\newcommand {\BsubP}{\|\Bb\|_{\lower.1ex \hbox {\scriptsize {P}}}}
\newcommand {\CsubP}{\|\Cb\|_{\lower.1ex \hbox {\scriptsize {P}}}}
\newcommand {\phue}{\lower.3ex \hbox {\scriptsize {UE}}}
\newcommand {\pheu}{\lower.3ex \hbox {\scriptsize {EU}}}
\newcommand {\phum}{\lower.3ex \hbox {\scriptsize {UM}}}
\newcommand {\phmu}{\lower.3ex \hbox {\scriptsize {MU}}}
\newcommand {\phme}{\lower.3ex \hbox {\scriptsize {ME}}}
\newcommand {\phem}{\lower.3ex \hbox {\scriptsize {EM}}}
\newcommand {\lowerkluma}{\lower1.5ex \hbox {\phantom{K}}}
\newcommand {\lowerklumb}{\lower3.5ex \hbox {\phantom{K}}}
\newcommand {\lowerklumc}{\lower7.0ex \hbox {\phantom{K}}}
\newcommand {\pp}{\lower.6ex \hbox {\footnotesize {$P_{1} P_{2}$}}}
 \newcommand {\lowAA}{\lower.3ex \hbox {\scriptsize {$\Ab$}}}
 \newcommand {\lowBB}{\lower.3ex \hbox {\scriptsize {$\Bb$}}}
 \newcommand {\lowCC}{\lower.3ex \hbox {\scriptsize {$\Cb$}}}
 \newcommand {\lowA}{\lower.3ex \hbox {\scriptsize {$A$}}}
 \newcommand {\lowB}{\lower.3ex \hbox {\scriptsize {$B$}}}
 \newcommand {\lowC}{\lower.3ex \hbox {\scriptsize {$C$}}}
 \newcommand {\lowE}{\lower.3ex \hbox {\tiny       {$\rm E$}}}
 \newcommand {\lowM}{\lower.3ex \hbox {\tiny       {$\rm M$}}}
 \newcommand {\lowtM}{\lower.01ex \hbox {\tiny       {$\rm M$}}}
 \newcommand {\lowtE}{\lower.01ex \hbox {\tiny       {$\rm E$}}}
 \newcommand {\lowtU}{\lower.01ex \hbox {\tiny       {$\rm U$}}}
 \newcommand {\lowEC}{\lower.3ex \hbox {\scriptsize {$EC$}}}
 \newcommand {\lowER}{\lower.3ex \hbox {\scriptsize {$ER$}}}
 \newcommand {\lowU}{\lower.3ex \hbox {\tiny       {\rm U}}}
 \newcommand {\lowDU}{\lower.3ex \hbox {\scriptsize {\rm DU}}}
 \newcommand {\lowCU}{\lower.3ex \hbox {\scriptsize {\rm CU}}}
 \newcommand {\lowDM}{\lower.3ex \hbox {\scriptsize {\rm DM}}}
 \newcommand {\lowCM}{\lower.3ex \hbox {\scriptsize {\rm CM}}}
 \newcommand {\lowo}{\lower.3ex \hbox {\scriptsize {\rm 0}}}
 \newcommand {\lowf}{\lower.3ex \hbox {\scriptsize {\rm f}}}
\newcommand {\lowmbpa}{\lower.6ex \hbox {\footnotesize {$\ome\bb\ope\ab$}}}
\newcommand {\uvc}{\displaystyle\frac{\lower.6ex \hbox {$\ub\ccdot\vb$}}{c^2}}
\newcommand {\uvs}{\displaystyle\frac{\lower.6ex \hbox {$\ub\ccdot\vb$}}{s^2}}
\newcommand {\unpuvc}{ \lower.6ex \hbox {$1 + \uvc$} }
\newcommand {\unpuvs}{ \lower.6ex \hbox {$1 + \uvs$} }
\newcommand {\uvcbar}{\displaystyle\frac{\lower.6ex \hbox
            {$\ubar\ccdot\vb$}}{c^2}}
\newcommand {\unpuvcbar}{ \lower.6ex \hbox {$1 + \uvcbar$} }
\newcommand {\vwc}{\displaystyle\frac{\lower.6ex\hbox{$\vb\ccdot\wb$}}{c^2}}
\newcommand {\vws}{\displaystyle\frac{\lower.6ex\hbox{$\vb\ccdot\wb$}}{s^2}}
\newcommand {\unpvwc}{ \lower.6ex \hbox {$1 + \vwc$} }
\newcommand {\unpvws}{ \lower.6ex \hbox {$1 + \vws$} }
\newcommand {\subE}{\!\lower.1ex \hbox {\tiny E}}
\newcommand {\subG}{\!\lower.1ex \hbox {\tiny G}}
\newcommand {\subH}{\!\lower.1ex \hbox {\tiny H}}
\newcommand {\subEs}{\!\lower.1ex \hbox {\tiny {E,S}}}
\newcommand {\subEt}{\!\lower.1ex \hbox {\tiny {E,2}}}
\newcommand {\subEC}{\!\lower.1ex \hbox {\tiny EC}}
\newcommand {\subU}{\!\lower.1ex \hbox {\tiny U}}
\newcommand {\subM}{\!\lower.1ex \hbox {\tiny M}}
\newcommand {\subC}{\!\lower.1ex \hbox {\tiny C}}
\newcommand {\subbE}{\!\lower.01ex \hbox {\tiny E}}
\newcommand {\subbU}{\!\lower.01ex \hbox {\tiny U}}
\newcommand {\subbC}{\!\lower.01ex \hbox {\tiny C}}
\newcommand {\subbM}{\!\lower.01ex \hbox {\tiny M}}
\newcommand {\subbG}{\!\lower.01ex \hbox {\tiny G}}
\newcommand {\subbH}{\!\lower.01ex \hbox {\tiny H}}
\newcommand {\ope}{\op_{_{\subE}}\!\,}
\newcommand {\ome}{\om_{_{\subE}}\!\,}
\newcommand {\ccdot}{\mathbf{\cdot }} 
\newcommand {\ab}{\mathbf{a}}
\newcommand {\bb}{\mathbf{b}}
\newcommand {\cb}{\mathbf{c}}
\newcommand {\hb}{\mathbf{h}}
\newcommand {\ub}{\mathbf{u}}
\newcommand {\vb}{\mathbf{v}}
\newcommand {\wb}{\mathbf{w}}
\newcommand {\xb}{\mathbf{x}}
\newcommand {\yb}{\mathbf{y}}
\newcommand {\zerb}{\mathbf{0}}
\newcommand {\omegab}{\pmb{\omega}}
\newcommand {\ubar}{\bar{\ub}}
\newcommand {\Nb}{\mathbb{N}}
\newcommand {\Rb}{\mathbb{R}}
\newcommand {\Rmm}{\Rb^{m\times m}}
\newcommand {\Rnn}{\Rb^{n\times n}}
\newcommand {\Rmn}{\Rb^{m\times n}}
\newcommand {\Rnm}{\Rb^{n\times m}}
\newcommand {\Rmcn}{\Rb^{m,n}}
\newcommand {\Rn}{\Rb^n}
\newcommand {\gyr}{{\rm gyr}}
\newcommand {\rgyr}{{\rm rgyr}}
\newcommand {\rgyrab}{\rgyr[P_1,P_2]}
\newcommand {\rgyrba}{\rgyr[P_2,P_1]}
\newcommand {\lgyr}{{\rm lgyr}}
\newcommand {\lgyrab}{\lgyr[P_1,P_2]}
\newcommand {\gyrabp}{\gyr[P_1,P_2]}
\newcommand {\gyrbap}{\gyr[P_2,P_1]}
\newcommand {\dett}{{\textstyle det}}
\newcommand {\Aut}{{\rm Aut}}
\newcommand {\Auto}{{\Aut_0}}
\newcommand {\gyrab}{\gyr[a,b]}
\newcommand {\timess}{\!\times\!}
\newcommand {\half}{\textstyle\frac{1}{2}}
\newcommand {\inn}{\hspace{-0.1cm}\in\hspace{-0.1cm}}
\newcommand {\subEA}{\!\lower.1ex \hbox {\tiny EA}}
 \newcommand {\hA}{\hat{A}}
 \newcommand {\hB}{\hat{B}}
 \newcommand {\hO}{\hat{O}}
 \newcommand {\hP}{\hat{P}}
 \newcommand {\hS}{\hat{S}}
 \newcommand {\lab}{L_{P_1,P_2}}
 \newcommand {\lba}{L_{P_2,P_1}}
 \newcommand {\rab}{R_{P_1,P_2}}
 \newcommand {\rba}{R_{P_2,P_1}}
\begin{document}

\pagenumbering{arabic}
\begin{center}
\huge{
Parametric Realization of the \\
Lorentz Transformation Group in \\
Pseudo-Euclidean Spaces
     }
\end{center}
\begin{center}
Abraham A. Ungar\\
Department of Mathematics\\
North Dakota State University\\
Fargo, ND 58105, USA\\
Email: {\tt Abraham.Ungar@ndsu.edu}
\end{center}

\begin{quotation}
{\bf ABSTRACT}\phantom{OO}
The Lorentz transformation group $SO(m,n)$, $m,n\in\Nb$,
is a group of Lorentz transformations of order $(m,n)$, that is,
a group of special linear transformations
in a pseudo-Euclidean space $\Rb^{m,n}$ of signature $(m,n)$
that leave the pseudo-Euclidean inner product invariant.
A parametrization of $SO(m,n)$ is presented, giving rise to the
composition law of Lorentz transformations of order $(m,n)$ in terms of
parameter composition.
The parameter composition, in turn, gives rise to a novel group-like
structure that underlies $\Rmcn$, called a bi-gyrogroup. Bi-gyrogroups
form a natural generalization of gyrogroups where the latter
form a natural generalization of groups.
Like the abstract gyrogroup, the abstract bi-gyrogroup can play
a universal computational role which extends far beyond the domain of
pseudo-Euclidean spaces.
\end{quotation}

\section{Introduction} \label{secint}

A pseudo-Euclidean space $\Rmcn$ of signature $(m,n)$, $m,n\in\Nb$, is an
$(m+n)$-dimensional space with the pseudo-Euclidean inner product of
signature $(m,n)$. A Lorentz transformation of order $(m,n)$ is
a special linear transformation $\Lambda\in SO(m,n)$ in $\Rmcn$
that leaves the
pseudo-Euclidean inner product invariant. It is special in the sense that
the determinant of the $(m+n)\times (m+n)$ real matrix $\Lambda$ is 1,
and the determinant of its first $m$ rows and columns is positive
\cite[p.~478]{hamermesh62}.
The group of all Lorentz transformations of order $(m,n)$ is also known
as the special pseudo-orthogonal group, denoted by $SO(m,n)$.

A Lorentz transformation without rotations is called a boost. In general,
two successive boosts are not equivalent to a boost. Rather, they are
equivalent to a boost associated with two rotations, called a left rotation
and a right rotation, or collectively, a {\it bi-rotation}.
The two rotations of a bi-rotation are nontrivial if both $m>1$ and $n>1$.
The special case when $m=1$ and $n>1$ was studied in 1988
in \cite{parametrization}.
The study in \cite{parametrization} resulted in the discovery of two
novel algebraic structures that became known as a gyrogroup and
a gyrovector space. Subsequent study of gyrovector spaces reveals in
\cite{mybook01,mybook02,mybook03,mybook04,mybook05,mybook06,mybook07}
that gyrovector spaces form the algebraic setting for hyperbolic geometry,
just as vector spaces form the algebraic setting for Euclidean geometry.
The aim of this paper is to extend the study
of the parametric realization of the Lorentz group
in \cite{parametrization} from $m=1$ to $m\ge1$, and to reveal the
resulting new algebraic structure, called a {\it bi-gyrogroup}.

In order to emphasize that when $m>1$ and $n>1$ a successive application
of two boosts generates a bi-rotation, a Lorentz boost of order $(m,n)$,
$m,n>1$, is called a {\it bi-boost}.
The composition law of two bi-boosts gives rise in this article to a
{\it bi-gyrocommutative bi-gyrogroup} operation, just as the
composition law of two boosts gives rise in \cite{parametrization} to a
gyrocommutative gyrogroup operation, as demonstrated in \cite{mybook01}.
Accordingly, a bi-gyrogroup of order $(m,n)$, $m,n\in\Nb$, is a
group-like structure that specializes to a gyrogroup when
either $m=1$ or $n=1$.

We show in Theorem \ref{dokrn1}
that a Lorentz transformation $\Lambda$ of order $(m,n)$ possesses
the unique parametrization $\Lambda=\Lambda(P,O_n,O_m)$, where
\begin{enumerate}
\item\label{itb01}
$P\in\Rnm$ is any real $n\times m$ matrix; where
\item\label{itb02}
$O_n\in SO(n)$ is any $n\times n$ special orthogonal matrix, taking
$P$ into $O_nP$; and, similarly, where
\item\label{itb03}
$O_m\in SO(m)$ is any $m\times m$ special orthogonal matrix, taking
$P$ into $PO_m$.
\end{enumerate}

In the special case when $m=1$, the Lorentz transformation
of order $(m,n)$
specializes to the Lorentz transformation of
special relativity theory ($n=3$ in physical applications), where
the parameter $P$
is a vector that represents relativistic proper velocities.

The parametrization of the Lorentz transformation $\Lambda$
enables in Theorem \ref{dokrt} the Lorentz transformation
composition (or, product) law to be expressed in terms of
parameter composition.
Under the resulting parameter composition, the parameter $O_n$ of
$\Lambda$, called a {\it left rotation} (of $P\in\Rnm$), forms a group.
The group that the left rotations form
is the special orthogonal group $SO(n)$. Similarly,
under the parameter composition, the parameter $O_m$ of
$\Lambda$, called a {\it right rotation} (of $P\in\Rnm$), forms a group.
The group that the right rotations form
is the special orthogonal group $SO(m)$.
The pair $(O_n,O_m)\in SO(n)\times SO(m)$
is called a bi-rotation, taking $P\in\Rnm$ into $O_n PO_m\in\Rnm$.

Contrasting the left and right rotation parameters, the parameter $P$
does not form a group under parameter composition. Rather, it forms
a novel algebraic structure, called
a {\it bi-gyrocommutative bi-gyrogroup}, defined in
Def.~\ref{defgyrocomy}.
A bi-gyrocommutative bi-gyrogroup is a group-like structure that generalizes
the gyrocommutative gyrogroup structure. The latter, in turn,
is a group-like
structure that forms a natural generalization of the commutative group.

The concept of the gyrogroup emerged from the 1988 study of the
parametrization of the Lorentz group in \cite{parametrization}.
Presently, the gyrogroup concept plays
a universal computational role, which extends far beyond the
domain of special relativity, as noted by
Chatelin in \cite[p.~523]{chatelin12a} and in references therein
and as evidenced, for instance, from
\cite{demirel13,ferreira09,ferreira14,ferreira11,ferreira13,solarin14,
suksumran15b,suksumran15a,suksumran15c}
and
\cite{kasparian04,ungardem05,ungarhyp13,ungar15}.
In a similar way, the concept of the bi-gyrogroup emerges in this paper
from the study of the parametrization of the
Lorentz group $SO(m,n)$, $m,n\in\Nb$.
Hence, like gyrogroups, bi-gyrogroups are capable of playing
a universal computational role that extends far beyond the
domain of Lorentz transformations in pseudo-Euclidean spaces.

\newpage

\section{Lorentz Transformations of Order $(m,n)$}
\label{secc02}

Let $\Rmcn$ be an arbitrary $(m+n)$-dimensional pseudo-Euclidean space
of a signature $(m,n)$, $m,n\in\Nb$, with an
orthonormal basis $e_i$, $i=1,\ldots,m+n$,
\begin{equation} \label{anim01}
e_i\ccdot e_j = \epsilon_i \delta_{ij}
\end{equation}
where
\begin{equation} \label{anim02}
\epsilon_i =
\begin{cases}
+1, & i=1,\ldots,m \\
-1, & i=m+1,\ldots,m+n \,.
\end{cases}
\end{equation}

The inner product $\xb\ccdot\yb$ of two vectors $\xb,\yb\in\Rmcn$,
\begin{equation} \label{anim03}
\begin{split}
\xb &= \sum_{i=1}^{m+n} x_i e_i
\\
\yb &= \sum_{i=1}^{m+n} y_i e_i
\,,
\end{split}
\end{equation}
is
\begin{equation} \label{anim04}
\xb\ccdot\yb =
\sum_{i=1}^{m+n} \epsilon_i x_i y_i
=
\sum_{i=1}^{m} x_i y_i - \sum_{i=m+1}^{m+n} x_i y_i
\,.
\end{equation}

Let $I_m$ be the $m\times m$ identity matrix, and let $\eta$ be the
$(m+n)\times(m+n)$ diagonal matrix
\begin{equation} \label{anim05}
\eta = \begin{pmatrix} I_m & 0_{m,n} \\[4pt] 0_{n,m} & -I_n \end{pmatrix}
\end{equation}
where $0_{m,n}$ is the $m\times n$ zero matrix.
Then, the matrix representation of the inner product \eqref{anim04} is
\begin{equation} \label{anim06}
\xb\ccdot\yb = \xb^t\eta\yb
\end{equation}
where $\xb$ and $\yb$ are the column vectors
\begin{equation} \label{anim07}
\xb=\begin{pmatrix} x_1 \\ x_2 \\ \vdots \\ x_{m+n} \end{pmatrix}
\hspace{0.8cm} {\rm and} \hspace{0.8cm}
\yb=\begin{pmatrix} y_1 \\ y_2 \\ \vdots \\ y_{m+n} \end{pmatrix}
\end{equation}
and exponent $t$ denotes transposition.

Let $\Lambda$ be an $(m+n)\times(m+n)$ matrix that leaves the
inner product \eqref{anim06} invariant. Then, for all $\xb,\yb\in\Rmcn$,
\begin{equation} \label{anim08}
(\Lambda\xb)^t\eta\Lambda\yb = \xb^t\eta\yb
\,,
\end{equation}
implying
$\xb^t\Lambda^t\eta\Lambda\yb=\xb^t\eta\yb$, so that
\begin{equation} \label{anim09}
\Lambda^t\eta\Lambda = \eta
\,.
\end{equation}
The determinant of the matrix equation \eqref{anim09} yields
\begin{equation} \label{anim10}
(\dett\Lambda)^2=1
\,,
\end{equation}
noting that
$\dett(\Lambda^t\eta\Lambda)=(\dett\Lambda^t)(\dett\eta)(\dett\Lambda)$
and $\dett\Lambda^t=\dett\Lambda$.
Hence,
\begin{equation} \label{anim11}
\dett\Lambda = \pm1
\,.
\end{equation}

The special transformations $\Lambda$ that can be reached continuously
from the identity transformation in $\Rmcn$
constitute the special pseudo-orthogonal group $SO(m,n)$, also known as the
{\it (generalized) Lorentz transformation group} of order $(m,n)$.
Each element $\Lambda$ of $SO(m,n)$ is a
Lorentz transformation of order $(m,n)$. It has determinant 1,
\begin{equation} \label{anim12}
\dett\Lambda = 1
\,,
\end{equation}
and the determinant of its first $m$ rows and columns is positive
\cite[p.~478]{hamermesh62}.
The Lorentz transformation of order $(1,3)$ turns out to be
the common
homogeneous, proper, orthochronous Lorentz transformation
of Einstein's special theory of relativity \cite{mybook03}.

Let $\Rmn$ be the set of all $m\timess n$ real matrices.
Following Norbert Dragon \cite{dragon12}, in order to parametrize the
special pseudo-orthogonal group $SO(m,n)$, we partition each
$(m+n)\timess(m+n)$ matrix $\Lambda\in SO(m,n)$
into four blocks consisting of
the submatrices
(i) $A\in \Rmm$,
(ii) $\hA\in \Rnn$,
(iii) $B\in \Rnm$, and
(iv) $\hB\in \Rmn$,
so that
\begin{equation} \label{anim13}
\Lambda = \begin{pmatrix} A&\hat{B} \\ B & \hat{A} \end{pmatrix}
\,.
\end{equation}

By means of \eqref{anim13} and \eqref{anim05}, the matrix equation
\eqref{anim09} takes the form
\begin{equation} \label{anim14}
\begin{pmatrix} A^t & B^t \\ \hat{B}^t & \hat{A}^t \end{pmatrix}
\begin{pmatrix} I_m & 0_{m,n} \\[4pt] 0_{n,m} & -I_n \end{pmatrix}
\begin{pmatrix} A&\hat{B} \\ B & \hat{A} \end{pmatrix}
=
\begin{pmatrix} I_m & 0_{m,n} \\[4pt] 0_{n,m} & -I_n \end{pmatrix}
\end{equation}
or, equivalently,
\begin{equation} \label{anim15}
\begin{pmatrix} A^t A - B^t B & A^t\hB - B^t \hA \\
   \hB^t A - \hA^t B & \hB^t \hB - \hA^t \hA \end{pmatrix}
=
\begin{pmatrix} I_m & 0_{m,n} \\[4pt] 0_{n,m} & -I_n \end{pmatrix}
\,,
\end{equation}
implying
\begin{equation} \label{anim16}
\begin{split}
A^t A &= I_m + B^t B \\
\hA^t \hA &= I_n + \hB^t \hB \\
A^t \hB &= B^t \hA
\,.
\end{split}
\end{equation}

The symmetric matrix $B^tB$ is diagonalizable by an orthogonal matrix
with nonnegative diagonal elements \cite[pp.~171, 396-398, 402]{horn90}.
Hence, the eigenvalues of $I_m+B^tB$ are not smaller than 1, so that
$(\dett A)^2=\dett(I_m+B^tB)\ge1$.
This, in turn, implies that $A$ is invertible.
Similarly,
$(\dett \hA)^2=\dett(I_n+\hB\hB^t)\ge1$, so that $\hA$ is invertible.

An invertible real matrix $A$ can be uniquely decomposed
into the product of an orthogonal matrix
$O\in SO(m)$, $O^t=O^{-1}$, and  a
positive-definite symmetric matrix $S$, $S^t=S$,
with positive eigenvalues \cite[p.~286]{gantmacher59},
\begin{equation} \label{anim17}
A=OS \,.
\end{equation}
Following \eqref{anim17} we have
\begin{equation} \label{anim18}
A^tA = S^t O^t O S = S^2
\end{equation}
with positive eigenvalues $\lambda_i>0$, $i=1,\ldots,m$. Hence,
\begin{equation} \label{anim19}
S=\sqrt{A^tA}
\end{equation}
has the positive eigenvalues $\sqrt{\lambda_i}$
and the same eigenvectors as $S^2$.

The matrix $S$ given by \eqref{anim19} satisfies \eqref{anim17} since
$AS^{-1}$ is orthogonal, as it should be, by \eqref{anim17}.
Indeed,
\begin{equation} \label{anim20}
(AS^{-1})^t AS^{-1} = (S^{-1})^t A^tAS^{-1}=S^{-1}S^2S^{-1} = I_m
\end{equation}
Similarly, $\hA$ is invertible and possesses the decomposition
\begin{equation} \label{anim21}
\hA = \hO\hS
\,,
\end{equation}
where $\hO\in SO(n)$ is an orthogonal matrix and $\hS$ is a
positive-definite symmetric matrix.

By means of \eqref{anim17} and \eqref{anim21}, the the block matrix
\eqref{anim13} possesses the decomposition
\begin{equation} \label{anim22}
\Lambda =
\begin{pmatrix} O & 0_{m,n} \\ 0_{n,m} & \hO \end{pmatrix}
\begin{pmatrix} S & \hP \\ P & \hS \end{pmatrix}
\,,
\end{equation}
where the submatrices $P$ and $\hP$ are to be determined in
\eqref{anim24} below.

Following \eqref{anim22} and \eqref{anim13}, along with
\eqref{anim17} and \eqref{anim21}, we have
\begin{equation} \label{anim23}
\Lambda =
\begin{pmatrix} OS & O\hP \\ \hO P & \hO \hS \end{pmatrix}
=
\begin{pmatrix} A & \hB \\ B & \hA \end{pmatrix}
\,,
\end{equation}
so that $\hO P=B$ and $O\hP=\hB$, that is
\begin{equation} \label{anim24}
\begin{split}
P &= \hO^{-1} B
\\
\hP &= O^{-1} \hB
\,.
\end{split}
\end{equation}

In \eqref{anim22}, $S$ and $\hS$ are invertible symmetric matrices,
and $O$ and $\hO$ are orthogonal matrices with determinant 1.

By means of \eqref{anim09} and \eqref{anim22} we have the
matrix equation
\begin{equation} \label{anim25}
\begin{split}
\begin{pmatrix} S^t & P^t \\ \hP^t & \hS^t \end{pmatrix} &
\begin{pmatrix} O^t & 0_{m,n} \\ 0_{n,m} & \hO^t \end{pmatrix}
\begin{pmatrix} I_m & 0_{m,n} \\ 0_{n,m} & -I_n \end{pmatrix}
\begin{pmatrix} O & 0_{m,n} \\ 0_{n,m} & \hO \end{pmatrix}
\begin{pmatrix} S & \hP \\ P & \hS \end{pmatrix}
\\[8pt]
&=
\begin{pmatrix} I_m & 0_{m,n} \\ 0_{n,m} & -I_n \end{pmatrix}
\,.
\end{split}
\end{equation}

Noting that
\begin{equation} \label{anim26}
\begin{pmatrix} O^t & 0_{m,n} \\ 0_{n,m} & \hO^t \end{pmatrix}
\begin{pmatrix} I_m & 0_{m,n} \\ 0_{n,m} & -I_n \end{pmatrix}
\begin{pmatrix} O & 0_{m,n} \\ 0_{n,m} & \hO \end{pmatrix}
=
\begin{pmatrix} I_m & 0_{m,n} \\ 0_{n,m} & -I_n \end{pmatrix}
\,,
\end{equation}
the matrix equation \eqref{anim25} yields
\begin{equation} \label{anim27}
\begin{pmatrix} S^t & P^t \\ \hP^t & \hS^t \end{pmatrix}
\begin{pmatrix} I_m & 0_{m,n} \\ 0_{n,m} & -I_n \end{pmatrix}
\begin{pmatrix} S & \hP \\ P & \hS \end{pmatrix}
=
\begin{pmatrix} I_m & 0_{m,n} \\ 0_{n,m} & -I_n \end{pmatrix}
\,,
\end{equation}
so that
\begin{equation} \label{anim28}
\begin{pmatrix} S^t & -P^t \\ \hP^t & -\hS^t \end{pmatrix}
\begin{pmatrix} S & \hP \\ P & \hS \end{pmatrix}
=
\begin{pmatrix} I_m & 0_{m,n} \\ 0_{n,m} & -I_n \end{pmatrix}
\end{equation}
and hence
\begin{equation} \label{anim29}
\begin{pmatrix} S^tS-P^tP & S^t\hP-P^t\hS \\
                \hP^tS-\hS^tP & \hP^t\hP-\hS^t\hS \end{pmatrix}
=
\begin{pmatrix} I_m & 0_{m,n} \\ 0_{n,m} & -I_n \end{pmatrix}
\,.
\end{equation}

Noting that $S^tS=S^2$ and $\hS^t\hS=\hS^2$, \eqref{anim29} yields
the equations
\begin{equation} \label{anim30}
\begin{split}
S^2 &= I_m + P^tP \\
\hS^2 &= I_n + \hP^t\hP \\
S^t\hP &= P^t\hS
\end{split}
\end{equation}

Noting that the matrix $S$ is symmetric,
the third and the first equations in \eqref{anim30} imply
\begin{equation} \label{anim31}
\begin{split}
\hP &= S^{-1} P^t \hS \\
S^{-2} &= ( I_m +P^tP)^{-1}
\,.
\end{split}
\end{equation}

Inserting \eqref{anim31} into the second equation in \eqref{anim30},
\begin{equation} \label{anim32}
\begin{split}
\hS^2 &= I_n+\hP^t\hP \\
&= I_n+(\hS^tP(S^t)^{-1}) S^{-1}P^t\hS \\
&= I_n+\hS P S^{-2} P^t \hS \\
&= I_n+\hS P (I_m+P^tP)^{-1} P^t\hS
\,.
\end{split}
\end{equation}

Multiplying both sides of \eqref{anim32} by $\hS^{-2}$, we have
\begin{equation} \label{anim33}
I_n = \hS^{-2} + P(I_m+P^tP)^{-1} P^t
\,.
\end{equation}

Let $\omegab$ be an eigenvector of the matrix $PP^t$, and let
$\lambda$ be its associated eigenvalue,
\begin{equation} \label{anim34}
PP^t\omegab = \lambda\omegab
\,.
\end{equation}
If $P^t\omegab$ is not zero, $P^t\omegab\ne\zerb$, then it is an
eigenvector of $P^tP$ with the same eigenvalue $\lambda$,
\begin{equation} \label{anim35}
(P^t P)P^t\omegab = \lambda P^t\omegab
\,.
\end{equation}
Adding $P^t\omegab$ to both sides of \eqref{anim35} we have
\begin{equation} \label{anim36}
(I_m+P^tP)P^t\omegab = (1+\lambda)P^t\omegab
\,,
\end{equation}
so that
\begin{equation} \label{anim37}
\frac{1}{1+\lambda} P^t\omegab = (I_m+P^tP)^{-1} P^t\omegab
\,.
\end{equation}
Multiplying both sides of \eqref{anim37} by $P$,
\begin{equation} \label{anim38}
\frac{1}{1+\lambda} PP^t\omegab = P(I_m+P^tP)^{-1} P^t\omegab
\,,
\end{equation}
so that, by means of \eqref{anim34},
\begin{equation} \label{anim39}
P(I_m+P^tP)^{-1} P^t\omegab =  \frac{\lambda}{1+\lambda} \omegab
\end{equation}
for any eigenvector $\omegab$ of $PP^t$ for which
$P^t\omegab\ne0$.
Equation \eqref{anim39} remains valid also when $P^t\omegab=0$ since
in this case $\lambda=0$ by \eqref{anim34}.

By means of \eqref{anim33} and \eqref{anim39} we have
\begin{equation} \label{anim40}
\omegab=\hS^{-2}\omegab+P(I_m+P^tP)^{-1}P^t\omegab
=\hS^{-2}\omegab+\frac{\lambda}{1+\lambda} \omegab
\,,
\end{equation}
so that
\begin{equation} \label{anim41}
\hS^{-2}\omegab =\frac{1}{1+\lambda} \omegab
\end{equation}
and, by \eqref{anim34},
\begin{equation} \label{anim42}
\hS^2\omegab = (1+\lambda)\omegab = I_n\omegab + \lambda\omegab
= I_n\omegab + PP^t\omegab = (I_n+PP^t)\omegab
\end{equation}
for any eigenvector $\omegab$ of $PP^t$.

The eigenvectors $\omegab$ constitute a basis of $\Rn$. Hence, it
follows from \eqref{anim42} that
\begin{equation} \label{anim43}
\hS^2 = I_n+PP^t
\end{equation}
and, hence,
\begin{equation} \label{anim44}
\hS = \sqrt{I_n+PP^t}
\,.
\end{equation}

Following \eqref{anim30}\,--\,\eqref{anim31} and \eqref{anim44} we have
\begin{equation} \label{anim45}
\hP = S^{-1} P^t \hS =
\sqrt{I_m+P^tP}^{~-1} P^t \sqrt{I_n+PP^t}
\,.
\end{equation}
Employing \eqref{anim45} and the eigenvectors $\omegab$ of $PP^t$,
we will show in \eqref{anim50} below that $\hP=P^t$.

As in \eqref{anim34}, $\omegab$ is an eigenvector of the matrix
$PP^t$, $P^t\omegab\ne0$, with its associated eigenvalue $\lambda>0$,
implying \eqref{anim37}. Following \eqref{anim37}, the matrix
$(I_m+P^tP)^{-1}$ possesses an eigenvector $P^t\omegab$ with
its associated eigenvalue $1/(1+\lambda)$.
Hence, the matrix
$\sqrt{I_m+P^tP}^{~-1}$ possesses the same eigenvector $P^t\omegab$
with its associated eigenvalue $1/\sqrt{1+\lambda}$,
\begin{equation} \label{anim46}
\sqrt{I_m+P^tP}^{~-1} P^t \omegab
= \frac{1}{\sqrt{1+\lambda}} P^t\omegab
\,.
\end{equation}

Similarly, the matrix $I_n+PP^t$ satisfies the equation
$(I_n+PP^t)\omegab=(1+\lambda)\omegab$,
so that it possesses an eigenvector $\omegab$ with its associated
eigenvalue $1+\lambda$. Hence, the matrix
$\sqrt{I_n+PP^t}$
possesses the same eigenvector $\omegab$ with its associated eigenvalue
$\sqrt{1+\lambda}$,
\begin{equation} \label{anim47}
\sqrt{I_n+PP^t} \, \omegab = \sqrt{1+\lambda} \, \omegab
\,.
\end{equation}

Hence, by \eqref{anim47} and \eqref{anim46},
\begin{equation} \label{anim48} 
\sqrt{I_m+P^tP}^{~-1} P^t \sqrt{I_n+PP^t} \, \omegab
= \sqrt{1+\lambda} \sqrt{I_m+P^tP}^{~-1} P^t \omegab
= P^t \omegab
\end{equation}
for any eigenvector $\omegab$ of $PP^t$ for which $P^t\omegab\ne0$.
Equation \eqref{anim48} remains valid also for $\omegab$
with $P^t\omegab=0$
since in this case $\lambda=0$ by \eqref{anim34}.

The eigenvectors $\omegab$ constitute a basis of $\Rn$. Hence, it
follows from \eqref{anim48} that
\begin{equation} \label{anim49}
\sqrt{I_m+P^tP}^{~-1} P^t \sqrt{I_n+PP^t} = P^t
\,,
\end{equation}
so that, by \eqref{anim45} and \eqref{anim49},
\begin{equation} \label{anim50}
\hP = P^t
\,.
\end{equation}

Following \eqref{anim50} and the first two equations in \eqref{anim30}
we have
\begin{equation} \label{anim51}
\begin{split}
\hP &= P^t \\
S &= \sqrt{I_m+P^tP} \\
\hS &= \sqrt{I_n+PP^t}
\,.
\end{split}
\end{equation}

Inserting \eqref{anim51} into \eqref{anim22} and denoting
$O$ and $\hO$ by $O_m$ and $O_n$, respectively, we obtain the
$(m+n)\timess(m+n)$ matrix $\Lambda$ parametrized by the three
matrix parameters
(i) $P\in \Rnm$,
(ii) $O_m\in SO(m)$, and
(iii) $O_n\in SO(n)$,
\begin{equation} \label{anim52}
\Lambda =
\begin{pmatrix} O_m & 0_{m,n} \\ 0_{n,m} & O_n \end{pmatrix}
\begin{pmatrix} \sqrt{I_m+P^tP} & P^t \\
P & \sqrt{I_n+PP^t} \end{pmatrix}
\,.
\end{equation}

\begin{llemma}\label{mda1}
The following commuting relations hold for all $P\in\Rnm$,
\begin{equation} \label{avir02}
P \sqrt{I_m+P^tP} = \sqrt{I_n+PP^t} \, P
\end{equation}
\begin{equation} \label{avir01}
P^t \sqrt{I_n+PP^t} = \sqrt{I_m+P^tP} \, P^t
\end{equation}
\begin{equation} \label{avir04}
PP^t \sqrt{I_n+PP^t} = \sqrt{I_n+PP^t} \, PP^t
\end{equation}
\begin{equation} \label{avir03}
P^tP \sqrt{I_m+P^tP} = \sqrt{I_m+P^tP} \, P^tP
\,.
\end{equation}
\end{llemma}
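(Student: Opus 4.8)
The plan is to reduce all four identities to the single elementary \emph{intertwining} relation
\[
P\,(I_m+P^tP)\;=\;P+PP^tP\;=\;(I_n+PP^t)\,P,
\]
which exhibits $P$ as carrying the symmetric positive-definite matrix $I_m+P^tP$ to the symmetric positive-definite matrix $I_n+PP^t$, and then to propagate this relation through the square root. First, \eqref{avir01} is nothing but the transpose of \eqref{avir02}: since $\sqrt{I_m+P^tP}$ and $\sqrt{I_n+PP^t}$ are symmetric (being the positive square roots specified in \eqref{anim51}), transposing \eqref{avir02} gives $\sqrt{I_m+P^tP}\,P^t=P^t\sqrt{I_n+PP^t}$, which is \eqref{avir01}. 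The relations \eqref{avir04} and \eqref{avir03} then follow from \eqref{avir02}--\eqref{avir01}: right-multiplying \eqref{avir02} by $P^t$ and left-multiplying \eqref{avir01} by $P$ both yield the matrix $P\sqrt{I_m+P^tP}\,P^t$, whence $\sqrt{I_n+PP^t}\,PP^t=PP^t\sqrt{I_n+PP^t}$; symmetrically, right-multiplying \eqref{avir01} by $P$ and left-multiplying \eqref{avir02} by $P^t$ both yield $P^t\sqrt{I_n+PP^t}\,P$, whence $\sqrt{I_m+P^tP}\,P^tP=P^tP\sqrt{I_m+P^tP}$. (Alternatively, \eqref{avir03} and \eqref{avir04} hold simply because a matrix commutes with every function of itself.) So the whole lemma reduces to \eqref{avir02}.

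To prove \eqref{avir02} I would argue on a basis of eigenvectors, in the same spirit as the computation culminating in \eqref{anim43}--\eqref{anim50}. Let $v\in\mathbb R^m$ be an eigenvector of the symmetric positive-semidefinite matrix $P^tP$, say $P^tP\,v=\mu v$ with $\mu\ge 0$, so that $\sqrt{I_m+P^tP}\,v=\sqrt{1+\mu}\,v$; then the left-hand side of \eqref{avir02} sends $v$ to $\sqrt{1+\mu}\,Pv$. For the right-hand side, note $PP^t(Pv)=P(P^tP\,v)=\mu\,(Pv)$, so either $Pv=0$, in which case both sides of \eqref{avir02} annihilate $v$, or $Pv$ is an eigenvector of $PP^t$ with eigenvalue $\mu$, in which case $\sqrt{I_n+PP^t}\,(Pv)=\sqrt{1+\mu}\,Pv$. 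Either way both sides of \eqref{avir02} send $v$ to $\sqrt{1+\mu}\,Pv$. Since $P^tP$ is symmetric, its eigenvectors span $\mathbb R^m$, so \eqref{avir02} holds as an identity of $n\times m$ matrices.

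I do not expect a genuine obstacle; the only delicate point is the passage from the polynomial-level intertwining identity to the identity involving the matrix square roots, and the eigenvector argument handles it cleanly precisely because $\sqrt{I_m+P^tP}$ and $\sqrt{I_n+PP^t}$ act as the scalar $\sqrt{1+\mu}$ on the matching $\mu$-eigenspaces. A diagonalization-free alternative is to iterate the intertwining relation to get $P\,(I_m+P^tP)^k=(I_n+PP^t)^k\,P$ for all $k\ge 0$, hence $P\,q(I_m+P^tP)=q(I_n+PP^t)\,P$ for every polynomial $q$, and then to approximate $\sqrt{\,\cdot\,}$ uniformly by polynomials on a compact interval of positive reals containing the spectra of $I_m+P^tP$ and $I_n+PP^t$ (both lie in $[1,1+\rho]$, where $\rho$ is the largest eigenvalue of $P^tP$), passing to the limit; but the eigenvector route is shorter and matches the paper's exposition.
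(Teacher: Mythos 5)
Your proof is correct and rests on essentially the same idea as the paper's: the paper's own proof of the lemma merely invokes \eqref{anim45} together with $\hP=P^t$, and that identity was itself established by exactly the eigenvector-intertwining argument (an eigenvector of one of $P^tP$, $PP^t$ is carried by $P$ or $P^t$ to an eigenvector of the other with the same eigenvalue, the case $Pv=0$ treated separately) that you present in self-contained form. The remaining identities are then obtained, just as you do, by transposition and by composing the first two.
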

\begin{proof}
The commuting relation \eqref{avir01} follows from
\eqref{anim45}, noting that $\hP=P^t$.
The commuting relation \eqref{avir02} is obtained from \eqref{avir01}
by matrix transposition.
The commuting relation \eqref{avir03} is obtained by successive
applications of \eqref{avir01} and \eqref{avir02}.
Finally, the commuting relation \eqref{avir04} is obtained by successive
applications of \eqref{avir02} and \eqref{avir01}.
\end{proof}

\section{Parametric Representation of $SO(m,n)$}
\label{secc03}

The block orthogonal matrix in \eqref{anim52} can be uniquely
resolved as a commuting
product of two orthogonal block matrices,
\begin{equation} \label{anim53}
\begin{pmatrix} O_m & 0_{m,n} \\ 0_{n,m} & O_n \end{pmatrix}
=
\begin{pmatrix} O_m & 0_{m,n} \\ 0_{n,m} & I_n \end{pmatrix}
\begin{pmatrix} I_m & 0_{m,n} \\ 0_{n,m} & O_n \end{pmatrix}
\,.
\end{equation}
The first and the second orthogonal matrices on the right side of
\eqref{anim53} represent, respectively,
(i) a right rotation $O_m\in SO(m)$ of
$\Rnm$, $O_m:P\mapsto PO_m$; and 
(ii)a left rotation $O_n\in SO(n)$ of $\Rnm$, $O_n:P\mapsto O_nP$.
Hence, the orthogonal matrix on the left side
of \eqref{anim53}, which represents the composition of the rotations
$O_m$ and $O_n$, is said to be a {\it bi-rotation} of the
pseudo-Euclidean space $\Rmcn$ about its origin.

By means of \eqref{anim53}, \eqref{anim52} can be written as
\begin{equation} \label{anim54}
\Lambda =
\begin{pmatrix} O_m & 0_{m,n} \\ 0_{n,m} & I_n \end{pmatrix}
\begin{pmatrix} I_m & 0_{m,n} \\ 0_{n,m} & O_n \end{pmatrix}
\begin{pmatrix} \sqrt{I_m+P^tP} & P^t \\ 
P & \sqrt{I_n+PP^t} \end{pmatrix}
\,. 
\end{equation}

\begin{llemma}\label{mgn1}
The commuting relations
\begin{equation} \label{anim55}
\begin{split}
\sqrt{I_m+P^tP} O_m &= O_m \sqrt{I_m+(PO_m)^t(PO_m)}
\\[4pt]
O_n\sqrt{I_n+PP^t}  &= \sqrt{I_n+(O_nP)(O_nP)^t} O_n
\end{split}
\end{equation}
hold for all
$P\in \Rnm$, $O_m\in SO(m)$ and $O_n\in SO(n)$.
\end{llemma}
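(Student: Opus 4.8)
The plan is to reduce each of the two asserted identities in \eqref{anim55} to a single elementary fact: conjugation by an orthogonal matrix commutes with the (symmetric positive-definite) square root.

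First I would record the purely algebraic identities
\[
I_m+(PO_m)^t(PO_m)=I_m+O_m^tP^tPO_m=O_m^t(I_m+P^tP)O_m
\]
and
\[
I_n+(O_nP)(O_nP)^t=I_n+O_nPP^tO_n^t=O_n(I_n+PP^t)O_n^t\,,
\]
which use nothing beyond $O_m^tO_m=I_m$ and $O_nO_n^t=I_n$; the matrices on the right are symmetric positive definite, just as in \eqref{anim30}.

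Second, for a symmetric positive-definite matrix $N$ and an orthogonal matrix $O$, I would verify that $\sqrt{O^tNO}=O^t\sqrt{N}\,O$. Indeed $O^t\sqrt{N}\,O$ is symmetric, it is positive definite because $\sqrt{N}$ is, and $(O^t\sqrt{N}\,O)^2=O^t\sqrt{N}\,OO^t\sqrt{N}\,O=O^tNO$; by the uniqueness of the positive-definite symmetric square root already invoked in \eqref{anim19}, it therefore equals $\sqrt{O^tNO}$. (Alternatively one argues on the eigenbasis of $N$, exactly in the style of \eqref{anim34}\,--\,\eqref{anim43}: if $N\omegab=\mu\omegab$ with $\mu>0$ then $O^t\omegab$ is a common eigenvector of $O^tNO$ and of $O^t\sqrt{N}\,O$ with eigenvalues $\mu$ and $\sqrt{\mu}$ respectively, and the vectors $O^t\omegab$ span the space, so the two matrices agree.)

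Third, I would combine these. Applying the second step with $N=I_m+P^tP$ and $O=O_m$ to the first identity gives $\sqrt{I_m+(PO_m)^t(PO_m)}=O_m^t\sqrt{I_m+P^tP}\,O_m$; left-multiplying by $O_m$ and using $O_mO_m^t=I_m$ yields the first relation in \eqref{anim55}. Applying the second step with $N=I_n+PP^t$ and $O=O_n^t$ to the second identity gives $\sqrt{I_n+(O_nP)(O_nP)^t}=O_n\sqrt{I_n+PP^t}\,O_n^t$; right-multiplying by $O_n$ yields the second relation. The computation is routine throughout; the only point that needs any care is the square-root/conjugation interchange in the second step, and even that is immediate from uniqueness of the symmetric positive-definite square root.
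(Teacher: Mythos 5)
Your proof is correct and follows essentially the same route as the paper: the paper's chain of equations \eqref{anim56}\,--\,\eqref{anim59} is precisely your observation that $I_m+P^tP=O_m(I_m+O_m^tP^tPO_m)O_m^t$ together with the fact that conjugation by an orthogonal matrix commutes with the positive-definite square root. The only difference is that you isolate and justify the square-root/conjugation interchange explicitly via uniqueness of the symmetric positive-definite square root, whereas the paper leaves that step implicit when passing from \eqref{anim56} to \eqref{anim57}.
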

\begin{proof}
\begin{equation} \label{anim56}
\begin{split}
I_m + P^tP &= O_mI_mO_m^t+O_mO_m^tP^tPO_mO_m^t
\\ &=
O_m(I_m+O_m^tP^tPO_m)O_m^t
\\ &=
O_m \sqrt{I_m+O_m^tP^tPO_m} \sqrt{I_m+O_m^tP^tPO_m} O_m^t
\\ &=
O_m \sqrt{I_m+O_m^tP^tPO_m} O_m^tO_m \sqrt{I_m+O_m^tP^tPO_m} O_m^t
\\ &=
(O_m \sqrt{I_m+O_m^tP^tPO_m} O_m^t)^2
\,.
\end{split}
\end{equation}
Hence,
\begin{equation} \label{anim57}
\sqrt{I_m+P^tP} = O_m \sqrt{I_m+O_m^tP^tPO_m} O_m^t
\,,
\end{equation}
implying the first matrix identity in \eqref{anim55}.

Similarly,
\begin{equation} \label{anim58}
\begin{split}
I_n + PP^t &=
O_n^tI_nO_n+O_n^tO_nPP^tO_n^tO_n
\\ &=
O_n^t(I_n+O_nPP^tO_n^t)O_n
\\ &=
O_n^t \sqrt{I_n+O_nPP^tO_n^t} \sqrt{I_n+O_nPP^tO_n^t} O_n
\\ &=
O_n^t \sqrt{I_n+O_nPP^tO_n^t} O_nO_n^t \sqrt{I_n+O_nPP^tO_n^t} O_n
\\ &=
(O_n^t \sqrt{I_n+O_nPP^tO_n^t} O_n)^2
\,.
\end{split}
\end{equation}
Hence,
\begin{equation} \label{anim59}
\sqrt{I_n+PP^t} = O_n^t \sqrt{I_n+O_nPP^tO_n^t} O_n
\,,
\end{equation}
implying the second matrix identity in \eqref{anim55}.
\end{proof}

\begin{llemma}\label{mgn1p1}
The commuting relations
\begin{equation} \label{adir01}
\begin{split}
\sqrt{I_m+P^tP}^{~-1} O_m &= O_m \sqrt{I_m+(PO_m)^t(PO_m)}^{~-1}
\\[4pt]
O_n\sqrt{I_n+PP^t}^{~-1}  &= \sqrt{I_n+(O_nP)(O_nP)^t}^{~-1} O_n
\end{split}
\end{equation}
hold for all
$P\in \Rnm$, $O_m\in SO(m)$ and $O_n\in SO(n)$.
\end{llemma}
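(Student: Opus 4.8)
The plan is to obtain Lemma \ref{mgn1p1} directly from Lemma \ref{mgn1} by inverting the commuting relations \eqref{anim55}. First I would record that every matrix occurring in \eqref{anim55} is invertible: the matrices $O_m\in SO(m)$ and $O_n\in SO(n)$ are orthogonal, hence invertible; and $\sqrt{I_m+P^tP}$ and $\sqrt{I_n+PP^t}$ are positive-definite symmetric matrices whose eigenvalues are not smaller than $1$, as already observed before \eqref{anim17}. The same spectral bound applies to $\sqrt{I_m+(PO_m)^t(PO_m)}$ and to $\sqrt{I_n+(O_nP)(O_nP)^t}$, since $I_m+(PO_m)^t(PO_m)$ and $I_n+(O_nP)(O_nP)^t$ are matrices of exactly the form $I+M^tM$ with $M=PO_m$, resp.\ $M=(O_nP)^t$. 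Hence the inverse square roots appearing in \eqref{adir01} are well defined.

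Next I would take the first identity of \eqref{anim55},
\[
\sqrt{I_m+P^tP}\,O_m = O_m\,\sqrt{I_m+(PO_m)^t(PO_m)}\,,
\]
and multiply it on the left by $\sqrt{I_m+P^tP}^{~-1}$ and on the right by $\sqrt{I_m+(PO_m)^t(PO_m)}^{~-1}$. The left-hand side collapses to $O_m\,\sqrt{I_m+(PO_m)^t(PO_m)}^{~-1}$ and the right-hand side collapses to $\sqrt{I_m+P^tP}^{~-1}\,O_m$, which is precisely the first relation in \eqref{adir01}. The second relation follows in the same manner from the second identity in \eqref{anim55}: multiplying $O_n\sqrt{I_n+PP^t}=\sqrt{I_n+(O_nP)(O_nP)^t}\,O_n$ on the left by $\sqrt{I_n+(O_nP)(O_nP)^t}^{~-1}$ and on the right by $\sqrt{I_n+PP^t}^{~-1}$ yields $\sqrt{I_n+(O_nP)(O_nP)^t}^{~-1}\,O_n = O_n\,\sqrt{I_n+PP^t}^{~-1}$, as claimed.

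There is essentially no obstacle here; the only point needing a word of justification is the invertibility of the symmetric square-root factors, which is immediate from the spectral bound noted above. (Equivalently, one could invert each side of \eqref{anim55} as a whole, using $(\sqrt{I_m+P^tP}\,O_m)^{-1}=O_m^{-1}\sqrt{I_m+P^tP}^{~-1}$ and likewise for the right-hand side, and then reorganize; this gives the same conclusion, but the one-line rearrangement above is the cleaner route.)
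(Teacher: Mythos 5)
Your proof is correct and follows essentially the same route as the paper: both derive the lemma from Lemma \ref{mgn1} by inverting the relations \eqref{anim55} (your left/right multiplication by the inverse square-root factors is just a reorganized form of the paper's inversion of each side as a whole, which you also note). The extra remark on the invertibility of the square-root factors is a harmless, correct addition.
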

\begin{proof}
Inverting the first matrix identity in \eqref{anim55}, we have
the matrix identity
\begin{equation} \label{adir02}
O_m^{-1} \sqrt{I_m+P^tP}^{~-1}
=
\sqrt{I_m+(PO_m)^t(PO_m)}^{~-1} O_m^{-1}
\,,
\end{equation}
which implies the first equation in the Lemma.

Similarly,
inverting the second matrix identity in \eqref{anim55}, we obtain
a matrix identity that implies the second equation in the Lemma.
\end{proof}

\begin{llemma}\label{mgn2}
The commuting relation
\begin{equation} \label{anim60}
\begin{split}
\begin{pmatrix} O_m & 0_{m,n} \\[8pt] 0_{n,m} & I_n \end{pmatrix}
&\begin{pmatrix} \sqrt{I_m+(PO_m)^t(PO_m)} & (PO_m)^t \\[8pt]
                PO_m & \sqrt{I_n+(PO_m)(PO_m)^t} \end{pmatrix}
\\[12pt] &=
\begin{pmatrix} \sqrt{I_m+P^tP} & P^t \\[8pt] P & \sqrt{I_n+PP^t}
\end{pmatrix}
\begin{pmatrix} O_m & 0_{m,n} \\[8pt] 0_{n,m} & I_n \end{pmatrix}
\end{split}
\end{equation}
holds for all $P\in \Rnm$ and $O_m\in SO(m)$.
\end{llemma}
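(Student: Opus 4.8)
The plan is to expand both sides of \eqref{anim60} as products of $2\times2$ block matrices and then to compare the four resulting blocks one at a time. Multiplying out the left-hand side yields the block matrix whose $(1,1)$, $(1,2)$, $(2,1)$, $(2,2)$ entries are $O_m\sqrt{I_m+(PO_m)^t(PO_m)}$, $O_m(PO_m)^t$, $PO_m$, and $\sqrt{I_n+(PO_m)(PO_m)^t}$, respectively. Multiplying out the right-hand side yields the block matrix whose entries in the same positions are $\sqrt{I_m+P^tP}\,O_m$, $P^t$, $PO_m$, and $\sqrt{I_n+PP^t}$. It then remains to verify that the corresponding blocks coincide.

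Three of the four comparisons are immediate. The $(2,1)$ blocks are literally identical. For the $(1,2)$ blocks, orthogonality of $O_m$ gives $O_m(PO_m)^t = O_mO_m^tP^t = P^t$. For the $(2,2)$ blocks, the same orthogonality gives $(PO_m)(PO_m)^t = PO_mO_m^tP^t = PP^t$, whence $\sqrt{I_n+(PO_m)(PO_m)^t} = \sqrt{I_n+PP^t}$.

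The only substantive point is the $(1,1)$ block, that is, the identity $O_m\sqrt{I_m+(PO_m)^t(PO_m)} = \sqrt{I_m+P^tP}\,O_m$; but this is precisely the first commuting relation of Lemma \ref{mgn1}, which has already been established. Hence no step constitutes a genuine obstacle, and the lemma follows by assembling these four block identities. (An entirely analogous lemma with the roles of $O_m$ and $O_n$ interchanged would be proved the same way, invoking instead the second relation of Lemma \ref{mgn1}.)
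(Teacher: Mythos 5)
Your proof is correct and follows essentially the same route as the paper: expand both sides as block products and reduce the $(1,1)$ block to the first commuting relation of Lemma \ref{mgn1}, the remaining blocks following from orthogonality of $O_m$. The only difference is that you spell out the $(1,2)$ and $(2,2)$ simplifications explicitly, which the paper leaves implicit.
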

\begin{proof}
Let $J_1$ and $J_2$ denote the left side and the right side of
\eqref{anim60}, respectively.
Clearly,
\begin{equation} \label{anim61}
J_2 =
\begin{pmatrix} \sqrt{I_m+P^tP} O_m & P^t \\[8pt]
                PO_m & \sqrt{I_n+PP^t} \end{pmatrix}
\end{equation}
and, by means of the first commuting relation in Lemma \ref{mgn1},
\begin{equation} \label{anim62}
\begin{split}
J_1 &= \begin{pmatrix} O_m\sqrt{I_m+(PO_m)^t(PO_m)} & O_m(PO_m)^t \\[8pt]
           PO_m & \sqrt{I_n+(PO_m)(PO_m)^t}  \end{pmatrix}
\\[12pt] &=
\begin{pmatrix} \sqrt{I_m+P^tP} O_m & P^t \\[8pt]
           PO_m & \sqrt{I_n+PP^t} \end{pmatrix}
\,.
\end{split}
\end{equation}
Hence, $J_1=J_2$, and the proof is complete.
\end{proof}

\begin{llemma}\label{mgn2a}
The commuting relation
\begin{equation} \label{anim60a}
\begin{split}
&\begin{pmatrix} \sqrt{I_m+(O_nP)^t(O_nP)} & (O_nP)^t \\[8pt]
                O_nP & \sqrt{I_n+(O_nP)(O_nP)^t} \end{pmatrix}
\begin{pmatrix} I_m & 0_{m,n} \\[8pt] 0_{n,m} & O_n \end{pmatrix}
\\[12pt] &=
\begin{pmatrix} I_m & 0_{m,n} \\[8pt] 0_{n,m} & O_n \end{pmatrix}
\begin{pmatrix} \sqrt{I_m+P^tP} & P^t \\[8pt] P & \sqrt{I_n+PP^t}
\end{pmatrix}
\end{split}
\end{equation}
holds for all
$P\in \Rnm$ and $O_n\in SO(n)$.
\end{llemma}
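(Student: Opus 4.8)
The plan is to imitate the proof of Lemma~\ref{mgn2}, which is the right-rotation ($O_m$) counterpart of the present left-rotation ($O_n$) statement, by carrying out the two block-matrix products in \eqref{anim60a} and comparing them block by block. First I would let $K_1$ and $K_2$ denote the left-hand and right-hand sides of \eqref{anim60a}, respectively. Multiplying out $K_2$ is immediate:
\[
K_2 = \begin{pmatrix} \sqrt{I_m+P^tP} & P^t \\[4pt] O_nP & O_n\sqrt{I_n+PP^t} \end{pmatrix}.
\]

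Next I would expand $K_1$ in the same way; here orthogonality does most of the work. Using $O_n^tO_n=I_n$, the $(1,1)$ block of $K_1$ is $\sqrt{I_m+(O_nP)^t(O_nP)}=\sqrt{I_m+P^tO_n^tO_nP}=\sqrt{I_m+P^tP}$, the $(1,2)$ block is $(O_nP)^tO_n=P^tO_n^tO_n=P^t$, and the $(2,1)$ block is just $O_nP$; so these three blocks already agree with the corresponding blocks of $K_2$ after nothing more than the orthogonality substitution.

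The only block that needs an earlier result is the $(2,2)$ block: in $K_1$ it is $\sqrt{I_n+(O_nP)(O_nP)^t}\,O_n$, while in $K_2$ it is $O_n\sqrt{I_n+PP^t}$, and their equality is precisely the second commuting relation of Lemma~\ref{mgn1}. (This mirrors the proof of Lemma~\ref{mgn2}, where the analogous block is settled by the first commuting relation of Lemma~\ref{mgn1}.) Hence $K_1=K_2$, which completes the proof. I do not expect any genuine obstacle here: the one nonroutine ingredient---commuting a rotation past the square-root matrix---has already been established in Lemma~\ref{mgn1}, and everything else is orthogonality bookkeeping.
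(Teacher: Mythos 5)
Your proof is correct and is essentially the same as the paper's: both expand the two block products, reduce the first three blocks by orthogonality of $O_n$, and settle the $(2,2)$ block via the second commuting relation of Lemma \ref{mgn1}. No gaps.
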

\begin{proof}
Let $J_3$ and $J_4$ denote the left side and the right side of
\eqref{anim60a}, respectively.
Clearly,
\begin{equation} \label{anim61a}
J_4 =
\begin{pmatrix} \sqrt{I_m+P^tP} & P^t \\[8pt]
                O_nP & O_n \sqrt{I_n+PP^t} \end{pmatrix}
\end{equation}
and, by means of the second commuting relation in Lemma \ref{mgn1},
\begin{equation} \label{anim62a}
\begin{split}
J_3 &= \begin{pmatrix} \sqrt{I_m+(O_nP)^t(O_nP)} & (O_nP)^t O_n \\[8pt]
           O_nP & \sqrt{I_n+(O_nP)(O_nP)^t} O_n \end{pmatrix}
\\[12pt] &=
\begin{pmatrix} \sqrt{I_m+P^tP} & P^t \\[8pt]
           O_nP & O_n \sqrt{I_n+PP^t} \end{pmatrix}
\,.
\end{split}
\end{equation}
Hence, $J_3=J_4$, and the proof is complete.
\end{proof}

By \eqref{anim54} and Lemma \ref{mgn2a},
\begin{equation} \label{anim63}
\begin{split}
\Lambda &=
\begin{pmatrix} O_m & 0_{m,n} \\[8pt] 0_{n,m} & I_n \end{pmatrix}
\begin{pmatrix} I_m & 0_{m,n} \\[8pt] 0_{n,m} & O_n \end{pmatrix}
\begin{pmatrix} \sqrt{I_m+P^tP} & P^t \\[8pt]
           P & \sqrt{I_n+PP^t}  \end{pmatrix}
\\[12pt] &=
\begin{pmatrix} O_m & 0_{m,n} \\[8pt] 0_{n,m} & I_n \end{pmatrix}
\begin{pmatrix} \sqrt{I_m+(O_nP)^t(O_nP)} & (O_nP)^t \\[8pt]
           O_nP & \sqrt{I_n+(O_nP)(O_nP)^t}  \end{pmatrix}
\begin{pmatrix} I_m & 0_{m,n} \\[8pt] 0_{n,m} & O_n \end{pmatrix}
\,,
\end{split}
\end{equation}
where $P$, $O_m$ and $O_n$ are generic elements of $\Rnm$,
$SO(m)$ and $SO(n)$, respectively, forming the three matrix parameters
that determine $\Lambda\in SO(m,n)$.

The matrix parameter $P$ of $\Lambda$ in \eqref{anim54} is a
generic element of $\Rnm$, and the orthogonal matrix
$O_n\in SO(n)$ maps $\Rnm$ onto itself bijectively,
$O_n:~P\rightarrow O_nP$. Hence,
the generic element $O_nP\in \Rnm$ in \eqref{anim63}
can, equivalently, be replaced by the generic element
$P\in \Rnm$, thus obtaining from \eqref{anim63} the 
parametric representation of the generic Lorentz transformation $\Lambda$,
\begin{equation} \label{anim66a}
\Lambda =
\begin{pmatrix} O_m & 0_{m,n} \\[8pt] 0_{n,m} & I_n \end{pmatrix}
\begin{pmatrix} \sqrt{I_m+P^tP} & P^t \\[8pt]
           P & \sqrt{I_n+PP^t}  \end{pmatrix}
\begin{pmatrix} I_m & 0_{m,n} \\[8pt] 0_{n,m} & O_n \end{pmatrix}
\,,
\end{equation}
called the {\it bi-gyration decomposition of} $\Lambda$.

The generic Lorentz transformation matrix $\Lambda$ of order
$(m+n)\timess(m+n)$ is expressed in \eqref{anim66a} as the product of
the following three matrices in \eqref{anim64}\,--\,\eqref{anim66}.

{\bf The bi-boost:}
The $(m+n)\timess(m+n)$ matrix $B(P)$,
\begin{equation} \label{anim64}
B(P) :=
\begin{pmatrix} \sqrt{I_m+P^tP} & P^t \\[8pt]
           P & \sqrt{I_n+PP^t}  \end{pmatrix}
\,,
\end{equation}
is parametrized by $P\in \Rnm$, $m,n\in\Nb$.
In order to emphasize that $B(P)$ is associated in \eqref{anim66a}
with a bi-rotation $(O_m,O_n)$, we call it a {\it bi-boost}.
If $m=1$ and $n=3$, the bi-boost descends to the common boost of a Lorentz
transformation in special relativity theory, studied for instance in
\cite{parametrization,mybook01,mybook03,mybook05}.

{\bf The right rotation:}
The $(m+n)\timess(m+n)$ block orthogonal matrix
\begin{equation} \label{anim65}
\rho(O_m) :=
\begin{pmatrix} O_m & 0_{m,n} \\ 0_{n,m} & I_n \end{pmatrix}
\in \Rb^{(m+n)\times(m+n)}
\,,
\end{equation}
is parametrized by
$O_m\in SO(m)$. For $m>1$, $O_m$ is an $m\timess m$ orthogonal matrix,
destined to be
{\it right-applied} to the $n\timess m$ matrices $P$, $P\rightarrow PO_m$.
Hence,
$\rho(O_m)$ is called a right rotation of the bi-boost $B(P)$.

{\bf The left rotation:}
The $(m+n)\timess(m+n)$ block orthogonal matrix
\begin{equation} \label{anim66}
\lambda(O_n) :=
\begin{pmatrix} I_m & 0_{m,n} \\ 0_{n,m} & O_n \end{pmatrix}
\in \Rb^{(m+n)\times(m+n)}
\,,
\end{equation}
is parametrized by
$O_n\in SO(n)$. For $n>1$, $O_n$ is an $n\timess n$ orthogonal matrix,
destined to be
{\it left-applied} to the $n\timess m$ matrices $P$,
$P\rightarrow O_nP$. Hence,
$\lambda(O_n)$ is called a left rotation of the bi-boost $B(P)$.

A left and a right rotation are called collectively a {\it bi-rotation}.
Suggestively, the term {\it bi-boost} emphasizes that the generic
bi-boost is associated with a generic bi-rotation
$(O_n,O_m)\in SO(n)\times SO(m)$.

With the notation in \eqref{anim64}\,--\,\eqref{anim66}, the
results of Lemma \ref{mgn2a} and Lemma \ref{mgn2}
can be written as commuting relations between bi-boosts
and left and right rotations, as the following lemma asserts.

\begin{llemma}\label{mgn3}
The commuting relations
\begin{equation} \label{anim80}
\begin{split}
\lambda(O_n) B(P) &= B(O_nP)\lambda(O_n)
\\[8pt]
B(P) \rho(O_m) &= \rho(O_m) B(PO_m)
\\[8pt]
\lambda(O_n) B(P) \rho(O_m) &=
\rho(O_m) B(O_nPO_m) \lambda(O_n)
\end{split}
\end{equation}
hold for any
$P\in \Rnm$, $O_m\in SO(m)$ and $O_n\in SO(n)$.
\end{llemma}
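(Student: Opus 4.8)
The plan is to observe that the first two commuting relations in \eqref{anim80} are merely Lemma \ref{mgn2a} and Lemma \ref{mgn2}, respectively, restated in the compact notation of \eqref{anim64}--\eqref{anim66}, and that the third relation follows by chaining the first two together with the elementary fact that $\lambda(O_n)$ and $\rho(O_m)$ commute.

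First I would verify the relation $\lambda(O_n)B(P) = B(O_nP)\lambda(O_n)$. Unwinding the definitions \eqref{anim64} and \eqref{anim66}, the product $B(O_nP)\lambda(O_n)$ is exactly the matrix called $J_3$ in the proof of Lemma \ref{mgn2a}, namely the left-hand side of \eqref{anim60a}, while $\lambda(O_n)B(P)$ is exactly $J_4$, the right-hand side of \eqref{anim60a}. Hence the asserted identity is nothing but the equality $J_3=J_4$ already established there. In the same manner, the relation $B(P)\rho(O_m) = \rho(O_m)B(PO_m)$ is obtained by reading \eqref{anim60} in the notation \eqref{anim64}, \eqref{anim65}: its two sides are precisely $J_2$ and $J_1$ from the proof of Lemma \ref{mgn2}, so the identity is just the equality $J_1=J_2$.

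For the third relation I would start from $\lambda(O_n)B(P)\rho(O_m)$, push $\lambda(O_n)$ rightward past the bi-boost using the first relation to get $B(O_nP)\lambda(O_n)\rho(O_m)$, then interchange $\lambda(O_n)$ and $\rho(O_m)$ --- these block-diagonal matrices commute since they are supported on complementary diagonal blocks, as is already visible in \eqref{anim53} --- to obtain $B(O_nP)\rho(O_m)\lambda(O_n)$, and finally push $\rho(O_m)$ leftward past $B(O_nP)$ by the second relation, applied with the parameter $P$ replaced by $O_nP$, arriving at $\rho(O_m)B(O_nPO_m)\lambda(O_n)$, which is the desired right-hand side.

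I do not expect any genuine obstacle: all of the substantive matrix analysis --- the square-root commuting relations --- was carried out in Lemmas \ref{mgn1}, \ref{mgn2} and \ref{mgn2a}. The only point worth a remark is the substitution $P\mapsto O_nP$ in the second relation while proving the third; this is legitimate because Lemma \ref{mgn2} holds for every $P\in\Rnm$ and $O_nP$ is again an element of $\Rnm$.
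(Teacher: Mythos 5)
Your proposal is correct and follows the paper's own proof essentially verbatim: the first two relations are Lemmas \ref{mgn2a} and \ref{mgn2} rewritten in the notation \eqref{anim64}--\eqref{anim66}, and the third is obtained by chaining them and using the commutativity of $\lambda(O_n)$ and $\rho(O_m)$. The explicit identification of the two sides with $J_3,J_4$ and $J_1,J_2$, and the remark that the substitution $P\mapsto O_nP$ is legitimate, are correct and slightly more careful than the paper's wording.
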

\begin{proof}
The first matrix identity in \eqref{anim80} is the result of
Lemma \ref{mgn2a}, expressed in the notation in
\eqref{anim64}\,--\,\eqref{anim66}. Similarly,
the second matrix identity in \eqref{anim80} is the result of
Lemma \ref{mgn2}, expressed in the notation in
\eqref{anim64}\,--\,\eqref{anim66}.
The third matrix identity in \eqref{anim80} follows from the
first and second matrix identities in \eqref{anim80},
noting that $\lambda(O_n)$ and $\rho(O_m)$ commute.
\end{proof}


With the notation in \eqref{anim64}\,--\,\eqref{anim66}, the
Lorentz transformation matrix $\Lambda$ in \eqref{anim63},
parametrized by $P,~O_m$ and $O_n$, is given by the equation
\begin{equation} \label{anim67}
\Lambda(O_m,P,O_n) = \rho(O_m)B(P)\lambda(O_n)
\,.
\end{equation}

It proves useful to use the column notation
\begin{equation} \label{anim68}
\Lambda(O_m,P,O_n) =:
\begin{pmatrix} P \\ O_n \\ O_m \end{pmatrix}
\,,
\end{equation}
so that a product of two Lorentz matrices is written as a product
between two column triples. Thus, for instance, the product (or,
composition) of the two Lorentz transformations
$\Lambda_1=\Lambda(O_{n,1},P_1,O_{m,1})$
and
$\Lambda_2=\Lambda(O_{n,2},P_2,O_{m,2})$
is written as
\begin{equation} \label{anim69}
\Lambda_1\Lambda_2
=
\begin{pmatrix} P_1 \\ O_{n,1} \\ O_{m,1} \end{pmatrix}
\begin{pmatrix} P_2 \\ O_{n,2} \\ O_{m,2} \end{pmatrix}
\,.
\end{equation}
The Lorentz transformation product law, written in column notation,
will be presented in Theorem \ref{dokrt}, p.~\pageref{dokrt},
following the study of associated special left and right automorphisms,
called left and right gyrations or, collectively, bi-gyrations.

Formalizing the main result of Sects.~\ref{secc02} and \ref{secc03},
we have the following definition and theorem.

\begin{ddefinition}\label{dokrd1}
{\bf (Special Pseudo-Orthogonal Group).}
A special pseudo-orthogonal transformation $\Lambda$ in the
pseudo-Euclidean space $\Rmcn$ is a linear transformation in $\Rmcn$,
also known as a (generalized, special) Lorentz transformation of
order $(m,n)$, if relative to the basis \eqref{anim01}\,--\,\eqref{anim02},
it leaves the inner product \eqref{anim04} invariant, has
determinant $\det\Lambda=1$, and the determinant of its first $m$ rows
and columns is positive.
The group of all special pseudo-orthogonal transformations in $\Rmcn$,
that is, the group of all Lorentz transformations of order $(m,n)$,
is denoted by $SO(m,n)$.
\end{ddefinition}

\begin{ttheorem}\label{dokrn1}
{\bf (Lorentz Transformation Bi-gyration Decomposition).}
A matrix $\Lambda\in \Rb^{(m+n)\times(m+n)}$ 
is a Lorentz transformation of order $(m,n)$
(that is, a special pseudo-orthogonal transformation in $\Rmcn$),
$\Lambda\in SO(m,n)$, if and only if it is given uniquely by the
bi-gyration decomposition
\begin{equation} \label{anim66b}
\Lambda =
\begin{pmatrix} O_m & 0_{m,n} \\[8pt] 0_{n,m} & I_n \end{pmatrix}
\begin{pmatrix} \sqrt{I_m+P^tP} & P^t \\[8pt]
           P & \sqrt{I_n+PP^t}  \end{pmatrix}
\begin{pmatrix} I_m & 0_{m,n} \\[8pt] 0_{n,m} & O_n \end{pmatrix}
\end{equation}
or, prarmetrically in short,
\begin{equation} \label{anim67t}
\Lambda = \Lambda(O_m,P,O_n) = \rho(O_m)B(P)\lambda(O_n)
=
\begin{pmatrix} P \\ O_n \\ O_m \end{pmatrix}
\,.
\end{equation}
\end{ttheorem}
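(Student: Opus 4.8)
The plan is to prove the two implications of the ``if and only if'' separately, and then establish uniqueness. For the ``only if'' direction, suppose $\Lambda\in SO(m,n)$. Almost all of the work has already been done in Section~\ref{secc02}: starting from the defining relation $\Lambda^t\eta\Lambda=\eta$ in \eqref{anim09}, the block partition \eqref{anim13} gave the relations \eqref{anim16}; the polar decomposition \eqref{anim17}, \eqref{anim21} of the invertible blocks $A$ and $\hA$ produced the factored form \eqref{anim22}; and the chain \eqref{anim29}--\eqref{anim51}, together with the eigenvector arguments \eqref{anim34}--\eqref{anim49}, pinned down $\hP=P^t$, $S=\sqrt{I_m+P^tP}$, $\hS=\sqrt{I_n+PP^t}$, yielding the parametrization \eqref{anim52}. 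The reduction of \eqref{anim52} to the displayed form \eqref{anim66b} is then exactly the content of \eqref{anim53}--\eqref{anim66a}, using Lemma~\ref{mgn2a} to slide $\lambda(O_n)$ to the right of the bi-boost and then relabeling the generic parameter $O_nP$ as $P$ (legitimate since $O_n$ acts bijectively on $\Rnm$). So this direction amounts to citing the already-completed computation and noting that the sign conditions ($\det\Lambda=1$, first $m\times m$ minor positive) are precisely what forces $O_m\in SO(m)$ and $O_n\in SO(n)$ rather than merely $O(m)$, $O(n)$: the block-diagonal rotation factor has determinant $(\det O_m)(\det O_n)$ and the bi-boost has determinant $1$ (its square is $\det(I_m+P^tP)\det(I_n+PP^t)$ divided appropriately—more simply, $B(P)$ is connected to the identity through $B(tP)$), while positivity of the leading $m\times m$ block's determinant along this path forces $\det O_m>0$, hence $\det O_m=\det O_n=1$.

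For the ``if'' direction, suppose $\Lambda$ is given by \eqref{anim66b} for some $P\in\Rnm$, $O_m\in SO(m)$, $O_n\in SO(n)$. I must check three things: $\Lambda^t\eta\Lambda=\eta$, $\det\Lambda=1$, and positivity of the leading minor. For the first, it suffices to verify that the bi-boost $B(P)$ satisfies $B(P)^t\eta B(P)=\eta$, since the block rotations $\rho(O_m)$ and $\lambda(O_n)$ manifestly commute with $\eta$ and are $\eta$-orthogonal (as in \eqref{anim26}); this is a direct block computation that reduces exactly to the identities $S^2=I_m+P^tP$, $\hS^2=I_n+PP^t$, $S\,P^t=P^t\hS$ (equivalently $PS=\hS P$), the last being Lemma~\ref{mda1}, equation \eqref{avir02} or \eqref{avir01}. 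For $\det\Lambda=1$: $\det\rho(O_m)=\det O_m=1$, $\det\lambda(O_n)=\det O_n=1$, and $\det B(P)=1$ follows either from $(\det B(P))^2 = \det(I_m+P^tP)^{1/2}\cdot(\ldots)$—better, from continuity, since $B(0)=I_{m+n}$ and $t\mapsto B(tP)$ is a continuous path of matrices each satisfying the $\eta$-orthogonality relation, hence each of determinant $\pm1$, hence constantly $+1$. The leading-minor positivity is similarly a continuity/path argument along $B(tP)$, noting the leading $m\times m$ block of $\rho(O_m)B(tP)\lambda(O_n)$ is $O_m\sqrt{I_m+t^2P^tP}$ with determinant $\det O_m\cdot\det\sqrt{I_m+t^2P^tP}>0$.

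For uniqueness, suppose $\rho(O_m)B(P)\lambda(O_n)=\rho(O_m')B(P')\lambda(O_n')$. Reading off the $(1,2)$, $(2,1)$, $(1,1)$, $(2,2)$ blocks: the $(2,1)$ block gives $O_nP=O_n'P'$ and the $(1,2)$ block gives $P^tO_m=(P')^tO_m'$; the $(1,1)$ block gives $O_m\sqrt{I_m+P^tP}=O_m'\sqrt{I_m+(P')^tP'}$ and the $(2,2)$ block gives $\sqrt{I_n+PP^t}\,O_n=\sqrt{I_n+P'(P')^t}\,O_n'$. In the $(1,1)$ identity the right factors are positive-definite symmetric, so by uniqueness of the polar decomposition of the invertible matrix $A=O_m\sqrt{I_m+P^tP}$ we get $O_m=O_m'$ and $\sqrt{I_m+P^tP}=\sqrt{I_m+(P')^tP'}$, whence $P^tP=(P')^tP'$; similarly the $(2,2)$ identity forces $O_n=O_n'$. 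Then $O_nP=O_n'P'=O_nP'$ gives $P=P'$. The one point needing a word of care—and the main (minor) obstacle—is making the sign/orientation bookkeeping in the ``if'' direction watertight: specifically confirming $\det B(P)=1$ and the leading-minor positivity cleanly, for which the continuous-path argument through $B(tP)$, $t\in[0,1]$, is the cleanest device and sidesteps any explicit determinant computation.
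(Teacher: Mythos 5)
Your proposal is correct, and its ``only if'' half is exactly the paper's argument: the paper's entire proof is the one-line observation that \eqref{anim66b} is identical with \eqref{anim66a}, i.e.\ it simply cites the derivation \eqref{anim09}--\eqref{anim66a} of Sections \ref{secc02}--\ref{secc03}. Where you genuinely go beyond the paper is in (i) verifying the converse --- that every product $\rho(O_m)B(P)\lambda(O_n)$ actually lies in $SO(m,n)$ --- and (ii) proving uniqueness of the three parameters; the paper asserts both but proves neither, so your continuity argument along $t\mapsto B(tP)$ for $\det B(P)=1$ and the positivity of the leading minor, and your block-by-block uniqueness argument, are welcome additions rather than redundancies. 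Your observation that the hypothesis on the leading $m\times m$ minor is what forces $O_m$ (and then $O_n$) into the special orthogonal group, rather than merely the orthogonal group, also repairs a point the paper glosses over when it invokes the polar decomposition with $O\in SO(m)$. One small correction to the uniqueness step: for the ordering $\rho(O_m)B(P)\lambda(O_n)$ the composite matrix is \eqref{anim86}, whose $(2,1)$ block is $P$ itself (not $O_nP$) and whose $(1,2)$ block is $O_mP^tO_n$ (not $P^tO_m$); so equality of the $(2,1)$ blocks already yields $P=P'$ directly, after which the $(1,1)$ and $(2,2)$ blocks give $O_m=O_m'$ and $O_n=O_n'$ without needing polar-decomposition uniqueness at all. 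This slip does not affect the validity of your argument --- your route through the $(1,1)$ and $(2,2)$ blocks is sound --- it only means the uniqueness proof can be shortened.
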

\begin{proof}
Result \eqref{anim66b} is identical with \eqref{anim66a}.
\end{proof}

\begin{ttheorem}\label{dokrv}
{\bf (Lorentz Transformation Polar Decomposition).}
Any Lorentz Transformation matrix $\Lambda\in SO(m,n)$ possesses
the polar decomposition
\begin{equation} \label{anim66c}
\Lambda =
\begin{pmatrix} \sqrt{I_m+P^tP} & P^t \\[8pt]
           P & \sqrt{I_n+PP^t}  \end{pmatrix}
\begin{pmatrix} O_m & 0_{m,n} \\[8pt] 0_{n,m} & I_n \end{pmatrix}
\begin{pmatrix} I_m & 0_{m,n} \\[8pt] 0_{n,m} & O_n \end{pmatrix}
\,,
\end{equation}
\end{ttheorem}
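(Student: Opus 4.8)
The plan is to read the polar decomposition off the bi-gyration decomposition of Theorem~\ref{dokrn1} by transporting the bi-boost to the left of the right rotation, using only the commuting relations already established in Lemma~\ref{mgn3}; no fresh matrix computation is required.

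First I would invoke Theorem~\ref{dokrn1}: the given $\Lambda\in SO(m,n)$ has the form $\Lambda = \rho(O_m)\,B(P)\,\lambda(O_n)$ for some $P\in\Rnm$, $O_m\in SO(m)$, $O_n\in SO(n)$. Next I would move $B(P)$ past $\rho(O_m)$. The second commuting relation of Lemma~\ref{mgn3}, namely $B(Q)\,\rho(O_m)=\rho(O_m)\,B(QO_m)$, applied with $Q=PO_m^{-1}$ (note $PO_m^{-1}\in\Rnm$ and $O_m^{-1}=O_m^{t}\in SO(m)$), yields $\rho(O_m)\,B(P)=B(PO_m^{-1})\,\rho(O_m)$. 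Substituting into the bi-gyration decomposition gives
\[
\Lambda = B(PO_m^{-1})\,\rho(O_m)\,\lambda(O_n),
\]
which is precisely \eqref{anim66c} once the bi-boost parameter $PO_m^{-1}$ is renamed $P$; since $O_m$ is invertible, $P\mapsto PO_m^{-1}$ is a bijection of $\Rnm$, so this renaming costs nothing and the existence claim follows.

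I do not anticipate any real obstacle: the substantive work is already contained in Lemmas~\ref{mda1}--\ref{mgn3}, and the polar decomposition is essentially a corollary of the bi-gyration decomposition. The one point worth recording, should one wish to strengthen the statement to uniqueness (which is not asserted here), is that $B(P)$ is not merely symmetric but positive definite: its Schur complement with respect to the positive-definite block $\sqrt{I_m+P^tP}$ equals $(\sqrt{I_n+PP^t})^{-1}$, as one checks using the commuting relation \eqref{avir02} of Lemma~\ref{mda1}. Hence \eqref{anim66c} is a bona fide polar factorization, symmetric-positive-definite times orthogonal, which both justifies the terminology and, by uniqueness of the polar decomposition of an invertible matrix, pins the three parameters down uniquely.
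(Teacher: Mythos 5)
Your proof is correct and follows essentially the same route as the paper: both obtain the polar decomposition from the bi-gyration decomposition of Theorem~\ref{dokrn1} by commuting the bi-boost past the right rotation via Lemma~\ref{mgn2} (equivalently the second relation of Lemma~\ref{mgn3}) and then renaming the generic parameter $P\mapsto PO_m^{-1}$, which is a bijection of $\Rnm$. Your explicit choice $Q=PO_m^{-1}$ is in fact a cleaner rendering of the paper's step, and your closing observation that $B(P)$ is positive definite (via the Schur complement $\sqrt{I_n+PP^t}^{\,-1}$) is a correct, worthwhile addition that the paper omits.
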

\begin{proof}
By Lemma \ref{mgn2} and \eqref{anim63}, we have
\begin{equation} \label{anim66d}
\begin{split}
\Lambda &=
\begin{pmatrix} I_m & 0_{m,n} \\[8pt] 0_{n,m} & O_n \end{pmatrix}
\begin{pmatrix} \sqrt{I_m+(PO_m)^t(PO_m)} & (PO_m)^t \\[8pt]
           PO_m & \sqrt{I_n+(PO_m)(PO_m)^t}  \end{pmatrix}
\begin{pmatrix} O_m & 0_{m,n} \\[8pt] 0_{n,m} & I_n \end{pmatrix}
\\[12pt] &=
\begin{pmatrix} \sqrt{I_m+P^tP} & P^t \\[8pt]
           P & \sqrt{I_n+PP^t}  \end{pmatrix}
\begin{pmatrix} I_m & 0_{m,n} \\[8pt] 0_{n,m} & O_n \end{pmatrix}
\begin{pmatrix} O_m & 0_{m,n} \\[8pt] 0_{n,m} & I_n \end{pmatrix}
\,,
\end{split}
\end{equation}
noting that $P\in\Rnm$ is a generic main parameter of $\Lambda\in SO(m,n)$
if and only if $PO_m\in\Rnm$ is a generic main parameter of $\Lambda$
for any $O_m\in SO(m)$.
\end{proof}

\section{Inverse Lorentz Transformation}
\label{secc04}

\begin{ttheorem}\label{dokrn2}
{\bf (The Inverse Bi-boost).}
The inverse of the bi-boost $B(P)$, $P\in \Rnm$, is $B(-P)$,
\begin{equation} \label{duck81}
B(P)^{-1} = B(-P)
\,.
\end{equation}
\begin{proof}
By Lemma \ref{mda1} we have the commuting relations
\begin{equation} \label{anim70}
\begin{split}
P^t \sqrt{I_n+PP^t} &= \sqrt{I_m+P^tP} P^t
\\
\sqrt{I_n+PP^t} P &= P \sqrt{I_m+P^tP}
\end{split}
\end{equation}
which, in the notation in \eqref{anim51}, are
\begin{equation} \label{anim71}
\begin{split}
\hP\hS &= S\hP
\\
\hS P &= PS
\,,
\end{split}
\end{equation}
and, clearly,
\begin{equation} \label{anim72}
\begin{split}
S^2-\hP P &= I_m
\\
\hS^2 - P\hP &= I_n
\,.
\end{split}
\end{equation}
Hence, by \eqref{anim64} and \eqref{anim51},
\begin{equation} \label{anim73}
\begin{split}
B(P)B(-P) &=
\begin{pmatrix} S & \hP \\ P & \hS \end{pmatrix}
\begin{pmatrix} S &-\hP \\ -P & \hS \end{pmatrix}
\\[12pt] &=
\begin{pmatrix} S^2-\hP P & -S\hP+\hP\hS \\
  PS-\hS P & -P\hP+\hS^2 \end{pmatrix}
\\[12pt] &=
\begin{pmatrix} I_m & 0_{m,n} \\ 0_{n,m} & I_n \end{pmatrix}
= I_{m+n}
\,,
\end{split}
\end{equation}
as desired.
\end{proof}
\end{ttheorem}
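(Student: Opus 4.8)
The plan is to verify directly that $B(P)B(-P)=I_{m+n}$. Since $B(P)$ and $B(-P)$ are both square matrices of order $m+n$, exhibiting a single one-sided inverse suffices: the identity $B(P)B(-P)=I_{m+n}$ forces $B(-P)=B(P)^{-1}$. To keep the block multiplication readable I would reuse the abbreviations of \eqref{anim51}, writing $S=\sqrt{I_m+P^tP}$, $\hS=\sqrt{I_n+PP^t}$ and $\hP=P^t$, so that $B(P)$ has diagonal blocks $S,\hS$ and off-diagonal blocks $\hP,P$, while $B(-P)$ has the same diagonal blocks and off-diagonal blocks $-\hP,-P$. Carrying out the $2\times 2$ block product then gives a matrix whose $(1,1)$ block is $S^2-\hP P$, whose $(2,2)$ block is $\hS^2-P\hP$, whose $(1,2)$ block is $-S\hP+\hP\hS$, and whose $(2,1)$ block is $PS-\hS P$.

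Next I would dispose of the four blocks in turn. The two diagonal blocks are immediate from the definitions of $S$ and $\hS$: since $S^2=I_m+P^tP=I_m+\hP P$ and $\hS^2=I_n+PP^t=I_n+P\hP$, the $(1,1)$ and $(2,2)$ blocks collapse to $I_m$ and $I_n$ respectively. The two off-diagonal blocks vanish precisely when $S\hP=\hP\hS$ and $\hS P=PS$; written out in terms of $P$, these say $\sqrt{I_m+P^tP}\,P^t=P^t\sqrt{I_n+PP^t}$ and $\sqrt{I_n+PP^t}\,P=P\sqrt{I_m+P^tP}$, which are exactly the commuting relations \eqref{avir01} and \eqref{avir02} of Lemma \ref{mda1}. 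Assembling the four blocks then yields $B(P)B(-P)=I_{m+n}$, which completes the argument.

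The computation itself is pure bookkeeping; the only genuine content is the commutation of the matrix square roots $\sqrt{I_m+P^tP}$ and $\sqrt{I_n+PP^t}$ past the rectangular matrices $P$ and $P^t$ — an identity that is not obvious a priori, since the two square roots act on spaces of different dimensions. That step, however, is already secured by Lemma \ref{mda1} (which itself rests on the identity $\hP=P^t$ proved in \eqref{anim50}), so no real obstacle remains. As an alternative one could bypass the block multiplication entirely: $B(P)$ is symmetric and, taking $O_m=I_m$ and $O_n=I_n$ in Theorem \ref{dokrn1}, lies in $SO(m,n)$, so that $B(P)\eta B(P)=\eta$ and hence $B(P)^{-1}=\eta B(P)\eta$; a one-line block computation identifies $\eta B(P)\eta$ with $B(-P)$. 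Either route works, and I would present the direct one since it is completely self-contained.
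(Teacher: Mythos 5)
Your proposal is correct and takes essentially the same route as the paper's proof: a direct block multiplication of $B(P)B(-P)$ in the notation $S$, $\hS$, $\hP$ of \eqref{anim51}, with the diagonal blocks reducing to $I_m$ and $I_n$ by the definitions of $S$ and $\hS$, and the off-diagonal blocks vanishing by the commuting relations of Lemma \ref{mda1}. The alternative you mention via $B(P)^{-1}=\eta B(P)\eta$ is a nice observation, but the argument you actually present coincides with the paper's.
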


A Lorentz transformation matrix $\Lambda$ of
order $(m+n)\timess (m+n)$, $m,n\ge2$,
involves the bi-rotation $\lambda(O_n)\rho(O_m)$, as shown in
\eqref{anim67t}. Bi-boosts are Lorentz transformations without
bi-rotations, that is by \eqref{anim67}\,--\,\eqref{anim68},
bi-boosts $B(P)$ are
\begin{equation} \label{anim77}
\Lambda(I_m,P,I_n)=\rho(I_m)B(P)\lambda(I_n) = B(P) =
\begin{pmatrix} P \\ I_n \\ I_m \end{pmatrix}
\,,
\end{equation}
for any $P\in \Rnm$.

Rewriting \eqref{duck81} in the column notation, we have
\begin{equation} \label{anim78}
\begin{pmatrix} P \\ I_n \\ I_m \end{pmatrix}^{-1}
=
\begin{pmatrix} -P \\ I_n \\ I_m \end{pmatrix}
\,,
\end{equation}
so that, accordingly,
\begin{equation} \label{anim79}
\begin{pmatrix}  P \\ I_n \\ I_m \end{pmatrix}
\begin{pmatrix} -P \\ I_n \\ I_m \end{pmatrix}
=
\begin{pmatrix}  0_{n,m} \\ I_n \\ I_m \end{pmatrix}
\,,
\end{equation}
$(0_{n,m},I_n,I_m)^t$ being the identity Lorentz transformation
of order $(m,n)$.

\begin{ttheorem}\label{dokrn3}
{\bf (The Inverse Lorentz Transformation).}
The inverse of a Lorentz transformation $\Lambda=(P,O_n,O_m)^t$
is given by the equation
\begin{equation} \label{anim83}
\begin{pmatrix} P \\ O_n \\ O_m  \end{pmatrix}^{-1}
=
\begin{pmatrix} -O_n^tPO_m^t \\ O_n^t \\ O_m^t \end{pmatrix}
\end{equation}
\end{ttheorem}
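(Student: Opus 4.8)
The plan is to verify \eqref{anim83} directly by multiplying the two column triples, using the Lorentz transformation product law only insofar as it reduces to checking that the candidate inverse composes with $\Lambda$ to give the identity triple $(0_{n,m},I_n,I_m)^t$. Concretely, I would expand $\Lambda = \rho(O_m)B(P)\lambda(O_n)$ and form the matrix product with $\Lambda' := \rho(O_m^t)B(-O_n^tPO_m^t)\lambda(O_n^t)$, and show $\Lambda\Lambda' = I_{m+n}$. Since each of $\rho(O_m)$, $\lambda(O_n)$, $B(P)$ is an explicit block matrix, this is ultimately a block-matrix computation, but the point is to organize it so that the algebra is painless.

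First I would dispose of the rotational part using the commuting relations of Lemma \ref{mgn3}. Write the composition as
\begin{equation} \label{invplan01}
\Lambda\Lambda' = \rho(O_m)B(P)\lambda(O_n)\rho(O_m^t)B(-O_n^tPO_m^t)\lambda(O_n^t)
\,.
\end{equation}
Since $\lambda(O_n)$ and $\rho(O_m^t)$ commute (they act on complementary blocks), I can slide $\rho(O_m^t)$ leftward past $\lambda(O_n)$, then use the first relation of \eqref{anim80} in the form $\lambda(O_n)\rho(O_m^t) = \rho(O_m^t)\lambda(O_n)$ is trivial; the substantive step is to push the left/right rotations through the middle bi-boost. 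Using $B(P)\rho(O_m^t) = \rho(O_m^t)B(PO_m^t)$ and then $\lambda(O_n)B(PO_m^t) = B(O_nPO_m^t)\lambda(O_n)$, the middle of \eqref{invplan01} collapses so that the argument of the inner bi-boost becomes exactly $O_nPO_m^t$, matched against $-O_n^tPO_m^t$ — and here one must be careful: I would instead move the outer rotations of $\Lambda'$ to cancel those of $\Lambda$ first. The cleanest route: observe $\rho(O_m)B(P)\lambda(O_n)\,\rho(O_m^t) = \rho(O_m)B(P)\rho(O_m^t)\lambda(O_n) = \rho(O_m)\rho(O_m^t)B(PO_m^t)\cdots$ — this doesn't help directly. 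So the better organization is to compute $\lambda(O_n)\rho(O_m^t)$ past $B(-O_n^tPO_m^t)$: apply Lemma \ref{mgn3} to get $\lambda(O_n)B(-O_n^tPO_m^t) = B(-PO_m^t)\lambda(O_n)$ and $B(-PO_m^t)\rho(O_m^t)=\rho(O_m^t)B(-P)$, whence \eqref{invplan01} becomes $\rho(O_m)B(P)\rho(O_m^t)B(-P)\lambda(O_n)\lambda(O_n^t) = \rho(O_m)B(P)B(-P)\rho(O_m^t)$ after one more application, and then $B(P)B(-P)=I_{m+n}$ by Theorem \ref{dokrn2} finishes it since $\rho(O_m)\rho(O_m^t)=I_{m+n}$ and $\lambda(O_n)\lambda(O_n^t)=I_{m+n}$.

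The main obstacle is bookkeeping: Lemma \ref{mgn3} has the rotation parameters modifying the bi-boost argument in a specific one-sided way ($P\mapsto O_nP$ for left rotations, $P\mapsto PO_m$ for right rotations), so I must apply the relations in the correct order and with the correct transposed arguments, double-checking signs — the bi-boost argument is linear in $P$, so $B(-O_n^tPO_m^t)$ with the rotations stripped does become $B(-P)$, not $B(P)$. I would also note that since $\rho(O_m^t)=\rho(O_m)^{-1}$ and $\lambda(O_n^t)=\lambda(O_n)^{-1}$ (as $O_m,O_n$ are orthogonal with determinant $1$, so $O_m^t, O_n^t\in SO(m), SO(n)$), the candidate $\Lambda'$ is a bona fide element of $SO(m,n)$ in bi-gyration form, so \eqref{anim83} is consistent with Theorem \ref{dokrn1}; and a two-sided inverse in a group need only be checked on one side, so $\Lambda\Lambda'=I_{m+n}$ suffices. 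Finally I would rewrite the boxed identity \eqref{anim83} in terms of \eqref{anim68}, confirming that the inverse of the triple $(P,O_n,O_m)^t$ is the triple $(-O_n^tPO_m^t, O_n^t, O_m^t)^t$, which is the claim.
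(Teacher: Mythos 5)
Your verification strategy is sound, and it is essentially the paper's argument run in reverse: the paper starts from $\Lambda^{-1}=\{\rho(O_m)B(P)\lambda(O_n)\}^{-1}=\lambda(O_n^t)B(-P)\rho(O_m^t)$ and uses the commuting relations of Lemma \ref{mgn3} to restore the canonical order $\rho(\cdot)B(\cdot)\lambda(\cdot)$, reading off the parameters $(-O_n^tPO_m^t,O_n^t,O_m^t)$, whereas you posit the answer and check $\Lambda\Lambda'=I_{m+n}$. Both routes rest on the same two ingredients, Lemma \ref{mgn3} and $B(P)^{-1}=B(-P)$ (Theorem \ref{dokrn2}); your remarks that a one-sided check suffices for square matrices and that $O_n^t\in SO(n)$, $O_m^t\in SO(m)$ are correct.

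However, the middle of your computation as written does not go through. The identity you invoke, $B(-PO_m^t)\rho(O_m^t)=\rho(O_m^t)B(-P)$, is false: the second relation in \eqref{anim80} gives $B(-PO_m^t)\rho(O_m^t)=\rho(O_m^t)B(-PO_m^tO_m^t)$, and $(O_m^t)^2\ne I_m$ in general. Moreover, after you slide $\lambda(O_n)$ through $B(-O_n^tPO_m^t)$, the factor $\rho(O_m^t)$ sits to the \emph{left} of $B(-PO_m^t)$, so that identity is not even the one your expression calls for; the subsequent line $\rho(O_m)B(P)\rho(O_m^t)B(-P)\cdots=\rho(O_m)B(P)B(-P)\rho(O_m^t)$ also silently commutes $\rho(O_m^t)$ past a bi-boost without altering its argument. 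The correct bookkeeping is: commute $\lambda(O_n)$ with $\rho(O_m^t)$ and apply the first relation of \eqref{anim80} to obtain
\begin{equation*}
\Lambda\Lambda' = \rho(O_m)\,B(P)\,\rho(O_m^t)\,B(-PO_m^t)\,\lambda(O_n)\lambda(O_n^t)
\,;
\end{equation*}
then $\lambda(O_n)\lambda(O_n^t)=I_{m+n}$, and the remaining $\rho(O_m^t)$ should be pushed \emph{leftward} through $B(P)$ via $B(P)\rho(O_m^t)=\rho(O_m^t)B(PO_m^t)$, yielding $\rho(O_m)\rho(O_m^t)B(PO_m^t)B(-PO_m^t)=B(PO_m^t)B(-PO_m^t)=I_{m+n}$ by Theorem \ref{dokrn2} applied to the argument $PO_m^t$ rather than to $P$. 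With that correction your proof is complete.
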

\begin{proof}
The proof is given by the following chain of equations, which are numbered
for subsequent explanation.
\begin{equation} \label{anim84}
\begin{split}
\Lambda(O_m,P,O_n)^{-1}
&
\overbrace{=\!\!=\!\!=}^{(1)} \hspace{0.2cm}
\{\rho(O_m)B(P)\lambda(O_n)\}^{-1}
\\&
\overbrace{=\!\!=\!\!=}^{(2)} \hspace{0.2cm}
\lambda(O_n^t) B(-P) \rho(O_m^t)
\\&
\overbrace{=\!\!=\!\!=}^{(3)} \hspace{0.2cm}
B(-O_n^tP) \lambda(O_n^t) \rho(O_m^t)
\\&
\overbrace{=\!\!=\!\!=}^{(4)} \hspace{0.2cm}
B(-O_n^tP) \rho(O_m^t) \lambda(O_n^t)
\\&
\overbrace{=\!\!=\!\!=}^{(5)} \hspace{0.2cm}
\rho(O_m^t) B(-O_n^tPO_m^t) \lambda(O_n^t)
\,.
 \end{split}
 \end{equation}
Derivation of the numbered equalities in \eqref{anim84} follows:
\begin{enumerate}
\item \label{madong1}
By \eqref{anim67t}.
\item \label{madong2}
Obvious, noting \eqref{duck81}.
\item \label{madong3}
Follows from \eqref{madong2} by the first matrix identity in \eqref{anim80}.
\item \label{madong4}
Follows from \eqref{madong3} by commuting $\lambda(O_n^t)$ and $\rho(O_m^t)$.
\item \label{madong5}
Follows from \eqref{madong4} by the second matrix identity in \eqref{anim80}.
\end{enumerate}
\end{proof}

\section{Bi-boost Parameter Recognition}
\label{secc05}

Composing the bi-gyration decomposition \eqref{anim66b}
of the Lorentz transformation $\Lambda\in SO(m,n)$
in Theorem \ref{dokrn1}, we have the Lorentz transformation
\begin{equation} \label{anim86}
\Lambda =
\begin{pmatrix} O_m \sqrt{I_m+P^tP} & O_mP^tO_n \\[8pt]
                P & \sqrt{I_n+PP^t} O_n \end{pmatrix}
=:
\begin{pmatrix} E_{11} & E_{12} \\[8pt] E_{21} & E_{22} \end{pmatrix}
\,,
\end{equation}
parametrized by the three parameters
\begin{enumerate}
\item \label{marong1}
$P\in\Rnm$, an $n\times m$ real matrix, called the {\it main parameter}
of the Lorentz transformation $\Lambda\in SO(m,n)$;
\item \label{marong2}
$O_n\in SO(n)$, a left rotation of $P$ (or, equivalently,
a right rotation of $P^t$); and
\item \label{marong3}
$O_m\in SO(m)$, a right rotation of $P$ (or, equivalently,
a left rotation of $P^t$).
\end{enumerate}

We naturally face the task of determining the matrix parameters
$P$, $O_n$ and $O_m$ of the $SO(m,n)$ matrix $\Lambda$ in
\eqref{anim86} from its block entries $E_{ij}$, $i,j=1,2$.

The matrix parameters $O_m$ and $O_n$ of $\Lambda$ in \eqref{anim66b}
cannot be recognized from \eqref{anim86} straightforwardly by inspection.
Fortunately, however, the matrix parameter $P$ is recovered from
\eqref{anim86} by straightforward inspection,
$P=E_{21}$, thus obtaining the first equation in \eqref{anim87} below.
Then, following \eqref{anim86} we have
$I_m+P^tP=I_m+E_{21}^tE_{21}$
and
$I_n+PP^t=I_n+E_{21}E_{21}^t$,
so that \eqref{anim86} yields the following
{\it parameter recognition formulas}:
\begin{equation} \label{anim87}
\begin{split}
P&=E_{21}
\\
O_m &= E_{11} \sqrt{I_m+E_{21}^tE_{21}}^{~-1}
\\
O_n &= \sqrt{I_n+E_{21}E_{21}^t}^{~-1} E_{22}
\\
O_m P^t O_n &= E_{12}
\,.
\end{split}
\end{equation}

In the parameter recognition formulas \eqref{anim87}
the parameters $P$, $O_n$ and $O_m$
of the composite Lorentz transformation $\Lambda$ in \eqref{anim86}
and in the decomposed Lorentz transformation $\Lambda$ in \eqref{anim66b}
are recognized from the block entries $E_{ij}$, $i,j=1,2$, of the
composite Lorentz transformation \eqref{anim86}.
Our ability to recover the main parameter of a Lorentz transformation
suggests the following definition of main parameter composition,
called {\it bi-gyroaddition}.
\begin{ddefinition}\label{dfkhvb}
{\bf (Bi-gyroaddition, Bi-gyrogroupoid).}
Let $\Lambda=B(P_1)B(P_2)$ be a Lorentz transformation given by
the product of two bi-boosts parametrized by $P_1,P_2\in\Rnm$.
Then, the main parameter, $P_{12}$, of $\Lambda$ is said to be
the composition of $P_1$ and $P_2$,
\begin{equation} \label{ahfc}
P_{12}=P_1\op P_2
\,,
\end{equation}
giving rise to a binary operation, $\op$, called bi-gyroaddition,
in the space $\Rnm$ of all $n\timess m$ real matrices.
Being a groupoid of the parameter $P\inn\Rnm$,
the resulting groupoid $(\Rnm,\op)$ is called the
parameter bi-gyrogroupoid.
\end{ddefinition}

Definition \ref{dfkhvb} encourages us to the study of the
bi-boost composition law in Sect.~\ref{secc06}.

\section{Bi-boost Composition Parameters}
\label{secc06}

In general, the product of two bi-boosts is not a bi-boost.
However, the product of two bi-boosts is an element of the
Lorentz group $SO(m,n)$ and, hence, by
Theorem \ref{dokrn1}, can be parametrized, as shown in
Sect.~\ref{secc05}.
Following \eqref{anim64}, let
\begin{equation} \label{anim88}
B(P_k) =
\begin{pmatrix} \sqrt{I_m+P_k^tP_k} & P_k^t \\[8pt]
           P_k & \sqrt{I_n+P_kP_k^t}  \end{pmatrix}
\,,
\end{equation}
$k=1,2$, be two bi-boosts, so that their product is
\begin{equation} \label{anim89}
\begin{split}
&\Lambda=B(P_1)B(P_2)
\\[8pt] &=
\begin{pmatrix} \sqrt{I_m+P_1^tP_1} \sqrt{I_m+P_2^tP_2} + P_1^tP_2 &
        \sqrt{I_m+P_1^tP_1} P_2^t + P_1^t \sqrt{I_n+P_2P_2^t}
\\[8pt]
P_1 \sqrt{I_m+P_2^tP_2} + \sqrt{I_n+P_1P_1^t} P_2 &
P_1P_2^t + \sqrt{I_n+P_1P_1^t} \sqrt{I_n+P_2P_2^t}
\end{pmatrix}
\\[8pt] &=:
\begin{pmatrix} E_{11} & E_{12} \\ E_{21} & E_{22} \end{pmatrix}
\,.
\end{split}
\end{equation}

By the parameter recognition formulas \eqref{anim87}, the main parameter
\begin{equation} \label{camon}
P_{12}= P_1\op P_2
\end{equation}
and the left and right rotation parameters
$O_{n,12}$ and $O_{m,12}$
of the bi-boost product $\Lambda=B(P_1)B(P_2)$ in \eqref{anim89}
are given by
\begin{equation} \label{anim90}
\begin{split}
P_{12} = P_1 \op P_2 &=E_{21}
\\
O_{n,12} &= \sqrt{I_n+E_{21}E_{21}^t}^{~-1} E_{22}
\\
O_{m,12} &= E_{11} \sqrt{I_m+E_{21}^tE_{21}}^{~-1}
\\
O_{m,12} P_{12}^t O_{n,12} &= E_{12}
\,,
\end{split}
\end{equation}
where $E_{ij}$, $i,j=1,2$, are defined by the last equation in \eqref{anim89}.

Hence, by \eqref{anim67},
\begin{equation} \label{anim91}
\Lambda = B(P_1)B(P_2) = \rho(O_{m,12}) B(P_1 \op P_2) \lambda(O_{n,12})
\,.
\end{equation}
Following Def.~\ref{dfkhvb},
we view $\op$ as a binary operation between elements $P\in \Rnm$,
thus obtaining the {\it bi-gyrogroupoid} $(\Rnm, \op)$ that will give rise to
a group-like structure called a {\it bi-gyrogroup}.
Accordingly, the binary operation $\op$ is the {bi-gyrooperation} of $\Rnm$,
called {\it bi-gyroaddition}, and
$P_1\op P_2$ is the {\it bi-gyrosum}
of $P_1$ and $P_2$ in $\Rnm$.

It is now convenient to rename the
right rotation $O_{m,12}$ and the left rotation $O_{n,12}$
in \eqref{anim90}\,--\,\eqref{anim91} as
a right and a left {\it gyrations}. In symbols,
\begin{equation} \label{anim92}
\begin{split}
O_{m,12} &=: \rgyr[P_1,P_2]
\in SO(m)
\\
O_{n,12} &=: \lgyr[P_1,P_2]
\in SO(n)
\,.
\end{split}
\end{equation}
We call $\rgyr[P_1,P_2]$
the {\it right gyration} generated by $P_1$ and $P_2$,
and call $\lgyr[P_1,P_2]$
the {\it left gyration} generated by $P_1$ and $P_2$.
The pair of a left and a right gyration, each generated by $P_1$ and $P_2$,
is viewed collectively as the {\it bi-gyration} generated by $P_1$ and $P_2$.

The bi-boost product \eqref{anim91} is now written as
\begin{equation} \label{anim93}
B(P_1)B(P_2) = \rho(\rgyr[P_1,P_2]) B(P_1 \op P_2)
\lambda(\lgyr[P_1,P_2])
\,,
\end{equation}
demonstrating that the product of two bi-boosts generated by
$P_1$ and $P_2$ is a bi-boost generated by $P_1\op P_2$ along with
a bi-gyration generated by $P_1$ and $P_2$.

The bi-gyrosum $P_1\op P_2$ of $P_1$ and $P_2$, and the bi-gyrations
generated by $P_1$ and $P_2$ that appear in \eqref{anim93} are
determined from \eqref{anim89}\,--\,\eqref{anim92},
\begin{equation} \label{anim94}
\begin{split}
P_1\op P_2 &= P_1 \sqrt{I_m+P_2^tP_2} + \sqrt{I_n+P_1P_1^t} P_2
\\[6pt]
\rgyr[P_1,P_2] &=
\left\{ P_1^tP_2 + \sqrt{I_m+P_1^tP_1}\sqrt{I_m+P_2^tP_2} \right\}
\sqrt{I_m+(P_1\op P_2)^t(P_1\op P_2)}^{~-1}
\\[6pt]
\lgyr[P_1,P_2] &=
\sqrt{I_n+(P_1\op P_2)(P_1\op P_2)^t}^{~-1}
\left\{ P_1P_2^t + \sqrt{I_n+P_1P_1^t}\sqrt{I_n+P_2P_2^t} \right\}
\\[6pt]
\rgyr[P_1,P_2] & (P_1 \op P_2)^t \lgyr[P_1,P_2]
 = \sqrt{I_m+P_1^tP_1} P_2^t + P_1^t \sqrt{I_n+P_2P_2^t}
\\[6pt] &
\overbrace{=\!\!=\!\!=}^{(1)} \hspace{0.2cm}
P_1^t \op P_2^t
~
\overbrace{=\!\!=\!\!=}^{(2)} \hspace{0.2cm}
(P_2 \op P_1)^t
\,.
\end{split}
\end{equation}

The equation marked by $(1)$ in \eqref{anim94} follows immediately from
the first equation in \eqref{anim94}, replacing
$P_1,P_2\in\Rnm$ by $P_1^t,P_2^t\in\Rmn$.

The equation marked by $(2)$ in \eqref{anim94} is derived from the
first equation in \eqref{anim94} in the following straightforward
chain of equations.
\begin{equation} \label{drain01}
\begin{split}
(P_2\op P_1)^t &= \left\{
P_2 \sqrt{I_m+P_1^tP_1} + \sqrt{I_n+P_2P_2^t} P_1
\right\}^t
\\[8pt] &=
\sqrt{I_m+P_1^tP_1} P_2^t + P_1^t \sqrt{I_n+P_2P_2^t}
\\[8pt] &=
(P_1^t) \sqrt{I_n+(P_2^t)^tP_2^t} + \sqrt{I_m+(P_1^t)(P_1^t)^t}(P_2^t)
\\[8pt] &=
P_1^t \op P_2^t
\,.
\end{split}
\end{equation}

Formalizing results in \eqref{anim94}, we obtain the following
theorem.
\begin{ttheorem}\label{smld} 
{\bf (Bi-gyroaddition and Bi-gyration).}
The bi-gyroaddition and bi-gyration in the parameter
bi-gyrogroupoid $(\Rnm,\op)$ are given by the equations
\begin{equation} \label{anim94s}
\begin{split}
P_1\op P_2 &= P_1 \sqrt{I_m+P_2^tP_2} + \sqrt{I_n+P_1P_1^t} P_2
\\[6pt]
\lgyr[P_1,P_2] &=
\sqrt{I_n+(P_1\op P_2)(P_1\op P_2)^t}^{~-1}
\left\{ P_1P_2^t + \sqrt{I_n+P_1P_1^t}\sqrt{I_n+P_2P_2^t} \right\}
\\[6pt]
\rgyr[P_1,P_2] &=
\left\{ P_1^tP_2 + \sqrt{I_m+P_1^tP_1}\sqrt{I_m+P_2^tP_2} \right\}
\sqrt{I_m+(P_1\op P_2)^t(P_1\op P_2)}^{~-1}
\end{split}
\end{equation}
for all $P_1,P_2\in\Rnm$.
\end{ttheorem}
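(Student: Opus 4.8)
The plan is to obtain the three identities directly from the parameter recognition formulas \eqref{anim87}, applied to the explicit block form of the product $B(P_1)B(P_2)$; the statement of the theorem is essentially a reorganization of the computation already recorded in \eqref{anim89}\,--\,\eqref{anim94}. First I would carry out the block matrix multiplication $B(P_1)B(P_2)$, using the definition \eqref{anim64} of a bi-boost, to get the four blocks $E_{ij}$ displayed in \eqref{anim89}; in particular $E_{21}=P_1\sqrt{I_m+P_2^tP_2}+\sqrt{I_n+P_1P_1^t}\,P_2$, $E_{11}=\sqrt{I_m+P_1^tP_1}\sqrt{I_m+P_2^tP_2}+P_1^tP_2$, and $E_{22}=P_1P_2^t+\sqrt{I_n+P_1P_1^t}\sqrt{I_n+P_2P_2^t}$.

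Next I would note that each $B(P_k)$ is a Lorentz transformation of order $(m,n)$, being the special case $\Lambda(I_m,P_k,I_n)$ of \eqref{anim67t}, so that $\Lambda:=B(P_1)B(P_2)$ again lies in $SO(m,n)$ and therefore, by Theorem \ref{dokrn1}, admits a unique bi-gyration decomposition $\Lambda=\rho(O_{m,12})B(P_{12})\lambda(O_{n,12})$. By Definition \ref{dfkhvb} its main parameter $P_{12}$ is exactly $P_1\op P_2$, and by the renaming \eqref{anim92} the rotation parameters $O_{m,12}$ and $O_{n,12}$ are the right and left gyrations $\rgyr[P_1,P_2]$ and $\lgyr[P_1,P_2]$.

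The three asserted identities then follow by substituting the $E_{ij}$ into \eqref{anim87}: the first line gives $P_1\op P_2=E_{21}$, which is the bi-gyroaddition formula; the third line of \eqref{anim87} gives $\lgyr[P_1,P_2]=\sqrt{I_n+E_{21}E_{21}^t}^{~-1}E_{22}$, and the second line gives $\rgyr[P_1,P_2]=E_{11}\sqrt{I_m+E_{21}^tE_{21}}^{~-1}$; replacing $E_{21}$ by $P_1\op P_2$ and $E_{11},E_{22}$ by the expressions computed above yields precisely \eqref{anim94s}.

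Since the argument reduces to one block multiplication together with the already-established recognition formulas \eqref{anim87}, there is essentially no obstacle. The only point requiring care is the justification that those recognition formulas apply here, i.e. that $\Lambda=B(P_1)B(P_2)\in SO(m,n)$ and hence has a unique bi-gyration decomposition whose main parameter is $P_1\op P_2$ and whose rotation factors are the bi-gyrations; this is exactly what Definition \ref{dfkhvb}, the renaming \eqref{anim92}, and Theorem \ref{dokrn1} supply. I would also remark in passing that the two gyration formulas are well posed because $I_m+E_{21}^tE_{21}$ and $I_n+E_{21}E_{21}^t$ are positive definite, so their (symmetric, positive-definite) square roots are invertible.
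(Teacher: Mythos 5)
Your proposal is correct and follows essentially the same route as the paper: Section~\ref{secc06} computes the block product $B(P_1)B(P_2)$ in \eqref{anim89}, feeds the resulting blocks $E_{ij}$ into the recognition formulas \eqref{anim87} (restated as \eqref{anim90}), and applies the renaming \eqref{anim92}, which is exactly your argument. The one point you flag as needing care --- that $B(P_1)B(P_2)\in SO(m,n)$ so the unique decomposition of Theorem~\ref{dokrn1} applies --- is indeed the justification the paper relies on (stated at the start of Section~\ref{secc06}), so nothing is missing.
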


The following corollary results immediately from Theorem \ref{smld}.
\begin{ccorollary}\label{vftr}
{\bf (Trivial Bi-gyrations).}
\begin{equation} \label{anglv1}
\begin{split}
\lgyr[0_{n,m},P] = \lgyr[P,0_{n,m}] &= I_n
\\[4pt]
\lgyr[\om P,P] = \lgyr[P,\om P] &= I_n
\\[4pt]
\rgyr[0_{n,m},P] = \rgyr[P,0_{n,m}] &= I_m
\\[4pt]
\rgyr[\om P,P] = \lgyr[P,\om P] &= I_m
\end{split}
\end{equation}
for all $P\in\Rnm$.
\end{ccorollary}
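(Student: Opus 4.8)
The plan is to derive all the displayed identities by direct substitution into the bi-gyroaddition and bi-gyration formulas \eqref{anim94s} of Theorem \ref{smld}, using only two preliminary facts about $\op$: that $0_{n,m}$ acts as a two-sided identity, and that $\om P$ acts as a two-sided inverse of $P$.

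First I would dispose of the identity cases. Putting $P_1=0_{n,m}$ in the first line of \eqref{anim94s} gives $0_{n,m}\op P=\sqrt{I_n+0_{n,m}0_{n,m}^t}\,P=P$, and putting $P_2=0_{n,m}$ gives $P\op 0_{n,m}=P\sqrt{I_m+0_{n,m}^t0_{n,m}}=P$. Hence in the formula for $\lgyr[0_{n,m},P]$ the leading factor is $\sqrt{I_n+PP^t}^{\,-1}$ and the brace collapses to $0_{n,m}P^t+\sqrt{I_n+0_{n,m}0_{n,m}^t}\,\sqrt{I_n+PP^t}=\sqrt{I_n+PP^t}$, so the product is $I_n$; the three companions $\lgyr[P,0_{n,m}]$, $\rgyr[0_{n,m},P]$, $\rgyr[P,0_{n,m}]$ come out the same way, the $\rgyr$ cases using $\sqrt{I_m+P^tP}^{\,-1}$ in place of $\sqrt{I_n+PP^t}^{\,-1}$.

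Next I would handle the inverse cases. By Theorem \ref{dokrn2}, $B(P)^{-1}=B(-P)$, so the bi-gyro-inverse of $P$ in $(\Rnm,\op)$ is $\om P=-P$. Substituting $(P_1,P_2)=(-P,P)$ and then $(P_1,P_2)=(P,-P)$ into the first line of \eqref{anim94s} and invoking the commuting relation \eqref{avir02} of Lemma \ref{mda1}, namely $P\sqrt{I_m+P^tP}=\sqrt{I_n+PP^t}\,P$, the two summands cancel, so $\om P\op P=P\op\om P=0_{n,m}$. Therefore in the $\lgyr$ formula the leading factor $\sqrt{I_n+(P_1\op P_2)(P_1\op P_2)^t}^{\,-1}$ is $\sqrt{I_n}^{\,-1}=I_n$, while the brace equals $-PP^t+\sqrt{I_n+PP^t}\,\sqrt{I_n+PP^t}=-PP^t+(I_n+PP^t)=I_n$; an identical computation with $m$ in place of $n$ gives $-P^tP+(I_m+P^tP)=I_m$ for the $\rgyr$ brace, together with the leading factor $\sqrt{I_m}^{\,-1}=I_m$.

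There is no real obstacle here: every identity is a one-line substitution followed by a cancellation of the form $-XX^t+(I+XX^t)=I$. The only points needing a moment's care are the identification $\om P=-P$ from Theorem \ref{dokrn2}, which is what makes the inverse cases meaningful, and the choice of the correct member \eqref{avir02} of Lemma \ref{mda1} to see that $\om P\op P$ vanishes; one should also read the last displayed line of the corollary as $\rgyr[P,\om P]=I_m$ (an evident typo for $\lgyr$), and the argument above in fact yields both $\rgyr[\om P,P]=I_m$ and $\rgyr[P,\om P]=I_m$.
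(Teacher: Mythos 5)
Your proof is correct and is essentially the paper's own argument: the corollary is stated there as following immediately from Theorem \ref{smld}, i.e.\ by direct substitution into \eqref{anim94s}, which is exactly what you carry out (including the needed vanishing $\om P\op P=P\op\om P=0_{n,m}$ via the commuting relation \eqref{avir02}). Your remark that the last displayed line contains a typo ($\lgyr[P,\om P]=I_m$ should read $\rgyr[P,\om P]=I_m$) is also correct.
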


The trivial bi-gyration
\begin{equation} \label{anglv2}
\begin{split}
\lgyr[P,P] &= I_n
\\[4pt]
\rgyr[P,P] &= I_m
\end{split}
\end{equation}
for all $P\in\Rnm$ cannot be derived immediately from Theorem \ref{smld}.
It will, therefore, be derived in \eqref{avnet08} and \eqref{avnet06},
and formalized in Theorem \ref{kdrnb}, p.~\pageref{kdrnb}.

The bi-boost product \eqref{anim93}, written in the column notation,
takes the elegant form
\begin{equation} \label{anim95}
B(P_1)B(P_2) =
\begin{pmatrix} P_1 \\[4pt] I_n \\[4pt] I_m \end{pmatrix}
\begin{pmatrix} P_2 \\[4pt] I_n \\[4pt] I_m \end{pmatrix}
=
\begin{pmatrix} P_1\op P_2 \\[4pt] \lgyr[P_1,P_2]
               \\[4pt] \rgyr[P_1,P_2] \end{pmatrix}
\,,
\end{equation}
for all $P_1,P_2\in\Rnm$.

When $P_1=P$ and $P_2=-P$, \eqref{anim95} specializes to
\begin{equation} \label{anim95p1}
B(P)B(-P)=
\begin{pmatrix} P \\[4pt] I_n \\[4pt] I_m \end{pmatrix}
\begin{pmatrix}-P \\[4pt] I_n \\[4pt] I_m \end{pmatrix}
=
\begin{pmatrix} P\op(-P) \\[4pt] \lgyr[P,-P]
                         \\[4pt] \rgyr[P,-P] \end{pmatrix}
\,,
\end{equation}
for all $P\in\Rnm$.
But, the left side of \eqref{anim95p1} is also determined in
\eqref{anim79}, implying the identities
\begin{equation} \label{anim95p2}
\begin{split}
P\op(-P) &= 0_{n,m}
\\
\lgyr[P,-P] &= I_n
\\
\rgyr[P,-P] &= I_m
\,.
\end{split}
\end{equation}

The first equation in \eqref{anim95p2} implies that
\begin{equation} \label{anim95pa}
-P=:\om P
\end{equation}
is the inverse of $P$ with respect to the binary operation
$\op$ in $\Rnm$. Hence, we use the notations $-P$ and $\om P$
interchangeably. Furthermore, we naturally use the notation
$P_1\op(-P_2) = P_1\op(\om P_2) = P_1\om P_2$, and
rewrite \eqref{anim95p2} as
\begin{equation} \label{anim95p3}
\begin{split}
P\om P &= 0
\\
\lgyr[P,\om P] &= I_n
\\
\rgyr[P,\om P] &= I_m
\end{split}
\end{equation}
in agreement with \eqref{anglv1}.

Similarly, we rewrite \eqref{anim78} as
\begin{equation} \label{anim95d3}
\begin{pmatrix} P \\ I_n \\ I_m \end{pmatrix}^{-1}
=
\begin{pmatrix} \om P \\ I_n \\ I_m \end{pmatrix}
\,,
\end{equation}
that is,
\begin{equation} \label{dfhn}
B(P)^{-1}=B(\om P)
\end{equation}
for all $P\in\Rnm$.

The first equation in \eqref{anim94} implies that
\begin{equation} \label{anim95d4}
(-P_1)\op(-P_2)=-(P_1\op P_2)
\,.
\end{equation}
Hence, following the definition of $\om P$ in \eqref{anim95pa},
and by \eqref{anim95d4},
the bi-gyroaddition $\op$ obeys the {\it gyroautomorphic inverse property}
\begin{equation} \label{anim94p1}
\om(P_1\op P_2) = \om P_1 \om P_2
\,,
\end{equation}
for all $P_1,P_2\in\Rnm$.

It follows from the gyroautomorphic inverse property \eqref{anim94p1}
and from \eqref{anim94} that bi-gyrations are {\it even},
that is,
\begin{equation} \label{anim94p2}
\begin{split}
\lgyr[-P_1,-P_2] &= \lgyr[P_1,P_2]
\\
\rgyr[-P_1,-P_2] &= \lgyr[P_1,P_2]
\end{split}
\end{equation}
or, equivalently,
\begin{equation} \label{anim94e2}
\begin{split}
\lgyr[\om P_1,\om P_2] &= \lgyr[P_1,P_2]
\\
\rgyr[\om P_1,\om P_2] &= \lgyr[P_1,P_2]
\,.
\end{split}
\end{equation}

 \section{Automorphisms of the Parameter Bi-gyrogroupoid}
\label{secc07}

Left and right rotations turn out to be left and right automorphisms
of the parameter bi-gyrogroupoid $(\Rnm,\op)$. We recall that
a groupoid, $(S,+)$, is a nonempty set, $S$, with a binary operation, $+$.
A {\it left automorphism} of a groupoid $(S,+)$ is a bijection $f$ of $S$,
$f:S\rightarrow S$, $s\mapsto fs$, that
respects the binary operation, that is, $f(s_1+s_2)=fs_1+fs_2$.
Similarly,
a {\it right automorphism} of a groupoid $(S,+)$ is a bijection $f$ of $S$,
$f:S\rightarrow S$, $s\mapsto sf$, that
respects the binary operation, that is, $(s_1+s_2)f=s_1f+s_2f$.
The need to distinguish between left and right automorphisms of the
bi-gyrogroupoid $(\Rnm,\op)$ is clear from Theorem \ref{dokrd} below.

\begin{ttheorem}\label{dokrd}
{\bf (Left and Right Automorphisms of $(\Rnm,\op)$).}
Any rotation $O_n\in SO(n)$ is a left automorphism of the
parameter bi-gyrogroupoid $(\Rnm,\op)$, and
any rotation $O_m\in SO(m)$ is a right automorphism of the
parameter bi-gyrogroupoid $(\Rnm,\op)$, that is,
\begin{equation} \label{adin105}
\begin{split}
O_n(P_1 \op P_2) &= O_nP_1 \op O_nP_2
\\
(P_1 \op P_2)O_m &= P_1O_m \op P_2O_m
\\
O_n(P_1 \op P_2)O_m &= O_nP_1O_m \op O_nP_2O_m
\end{split}
\end{equation}
for all $P_1,P_2\in\Rnm$, $O_n\in SO(n)$ and $O_m\in SO(m)$.
\end{ttheorem}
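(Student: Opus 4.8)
The plan is to verify all three identities directly from the closed form for bi-gyroaddition supplied by Theorem~\ref{smld}, namely $P_1\op P_2 = P_1\sqrt{I_m+P_2^tP_2} + \sqrt{I_n+P_1P_1^t}\,P_2$, using only the orthogonality relations $O_n^tO_n=I_n$, $O_mO_m^t=I_m$ together with the commuting relations of Lemma~\ref{mgn1}.

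First I would treat the left automorphism. Expanding $O_nP_1\op O_nP_2$ by the formula, the factor $\sqrt{I_m+(O_nP_2)^t(O_nP_2)}$ collapses to $\sqrt{I_m+P_2^tP_2}$ because $O_n^tO_n=I_n$, while $\sqrt{I_n+(O_nP_1)(O_nP_1)^t}=\sqrt{I_n+O_nP_1P_1^tO_n^t}$ equals $O_n\sqrt{I_n+P_1P_1^t}\,O_n^t$ by \eqref{anim59} (the computation carried out in \eqref{anim58}). Substituting these and cancelling $O_n^tO_n$ in the cross term gives $O_nP_1\sqrt{I_m+P_2^tP_2}+O_n\sqrt{I_n+P_1P_1^t}\,P_2=O_n(P_1\op P_2)$. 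The right automorphism is symmetric: in $P_1O_m\op P_2O_m$ the factor $\sqrt{I_n+(P_1O_m)(P_1O_m)^t}$ collapses to $\sqrt{I_n+P_1P_1^t}$ since $O_mO_m^t=I_m$, whereas $\sqrt{I_m+(P_2O_m)^t(P_2O_m)}=\sqrt{I_m+O_m^tP_2^tP_2O_m}=O_m^t\sqrt{I_m+P_2^tP_2}\,O_m$ by \eqref{anim57}; cancelling $O_mO_m^t$ then yields $(P_1\op P_2)O_m$.

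The third identity follows by composing the first two: putting $Q_i=O_nP_i$, the left automorphism gives $O_n(P_1\op P_2)=Q_1\op Q_2$, and applying the right automorphism to $Q_1,Q_2$ gives $(Q_1\op Q_2)O_m=Q_1O_m\op Q_2O_m=O_nP_1O_m\op O_nP_2O_m$. Bijectivity of $O_n$ and $O_m$ as self-maps of $\Rnm$ is clear from their invertibility, so they are genuine automorphisms. I expect no real obstacle here: the only point requiring care is to invoke the correct one of the two commuting relations in Lemma~\ref{mgn1} for each square-root factor, and to keep track of whether it is $O_n^tO_n$ or $O_mO_m^t$ that disappears in the cross term.
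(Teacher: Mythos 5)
Your proposal is correct and follows essentially the same route as the paper: both verify the first two identities by expanding the explicit formula for $\op$ from Theorem~\ref{smld} and invoking the commuting relations of Lemma~\ref{mgn1} (equivalently \eqref{anim57} and \eqref{anim59}) together with $O_n^tO_n=I_n$, $O_mO_m^t=I_m$, and both obtain the third identity by composing the first two.
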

\begin{proof}
By the first equation in \eqref{anim94s} and the second equation in
\eqref{anim55},
\begin{equation} \label{adin106}
\begin{split}
O_n(P_1\op P_2) &= O_n(
P_1\sqrt{I_m+P_2^tP_2} + \sqrt{I_n+P_1P_1^t} P_2
)
\\[4pt] &=
O_n P_1\sqrt{I_m+(O_nP_2)^t(O_nP_2)} + O_n\sqrt{I_n+P_1P_1^t} P_2
\\[4pt] & \hspace{-1.0cm} =
O_n P_1\sqrt{I_m+(O_nP_2)^t(O_nP_2)} + \sqrt{I_n+(O_nP_1)(O_nP_1)^t} O_nP_2
\\[4pt] &=
O_nP_1 \op O_nP_2
\,,
\end{split}
\end{equation}
thus proving the first identity in \eqref{adin105}.

Similarly, by the first equation in \eqref{anim94s} and the first equation in
\eqref{anim55},
\begin{equation} \label{adin107}
\begin{split}
(P_1\op P_2) O_m &= (
P_1\sqrt{I_m+P_2^tP_2} + \sqrt{I_n+P_1P_1^t} P_2
)O_m
\\[4pt] &=
P_1\sqrt{I_m+P_2^tP_2} O_m + \sqrt{I_n+P_1P_1^t} P_2 O_m
\\[4pt] & \hspace{-1.0cm} =
P_1 O_m\sqrt{I_m+(P_2O_m)^t(P_2 O_m)} + \sqrt{I_n+(P_1O_m)(P_1O_m)^t} P_2O_m
\\[4pt] &=
P_1O_m \op P_2O_m
\,,
\end{split}
\end{equation}
thus proving the second identity in \eqref{adin105}.
The third identity in \eqref{adin105} follows immediately from the
first two identities in \eqref{adin105}.
\end{proof}

By \eqref{anim92} and Theorem \ref{dokrd}, left gyrations,
$\lgyr[P_1,P_2]$, and right gyrations, $\rgyr[P_1,P_2]$, $P_1,P_2\in\Rnm$,
are left and right automorphisms
of $(\Rnm,\op)$. Hence, left and right gyrations
are also called left and right gyroautomorphisms of $(\Rnm,\op)$ or,
collectively, {\it bi-gyroautomorphisms} of
the parameter bi-gyrogroupoid $(\Rnm,\op)$.

Since $-P=\om P$, we clearly have the identities
\begin{equation} \label{arnd}
\begin{split}
O_n(\om P) &= \om O_nP
\\[4pt]
(\om P) O_m &= \om PO_m
\\[4pt]
O_n(\om P) O_m &= \om O_nPO_m
\end{split}
\end{equation}
for all $P\in\Rnm$, $O_n\in SO(n)$ and $O_m\in SO(m)$.

 \section{The Bi-boost Square}
\label{secc08a}

We are now in the position to determine the parameters of the squared
bi-boost.
If we use the convenient notation
\begin{equation} \label{avnet01}
\begin{split}
b_m &:= \sqrt{I_m+P^tP}
\\[6pt]
b_n &:= \sqrt{I_n+PP^t}
\,,
\end{split}
\end{equation}
$P\in\Rnm$,
then, by \eqref{anim64},
\begin{equation} \label{bibsteq}
B(P) = \begin{pmatrix} b_m & P^t \\[4pt] P & b_n \end{pmatrix}
\,,
\end{equation}
and the squared bi-boost $B(P)$ leads to the following
chain of equations, which are numbered
for subsequent explanation.
\begin{equation} \label{avnet02}
\begin{split}
B(P)^2 ~
&\overbrace{=\!\!=\!\!=}^{(1)} \hspace{0.2cm}
\begin{pmatrix} b_m & P^t \\[4pt] P & b_n \end{pmatrix}
\begin{pmatrix} b_m & P^t \\[4pt] P & b_n \end{pmatrix}
\\
&\overbrace{=\!\!=\!\!=}^{(2)} \hspace{0.2cm}
\begin{pmatrix} b_m^2+P^tP & b_mP^t+P^tb_n \\[4pt]
Pb_m+b_nP & b_n^2+PP^t \end{pmatrix}
\\
&\overbrace{=\!\!=\!\!=}^{(3)} \hspace{0.2cm}
\begin{pmatrix} I_m+2P^tP & 2P^tb_n \\[4pt]
2b_nP & I_n+2PP^t \end{pmatrix}
\\
&\overbrace{=\!\!=\!\!=}^{(4)}:\hspace{0.2cm}
\begin{pmatrix} E_{11} & E_{12} \\[4pt] E_{21} & E_{22} \end{pmatrix}
\,.
\end{split}
\end{equation}
Derivation of the numbered equalities in \eqref{avnet02} follows:
\begin{enumerate}
\item \label{nidrt01}
This equation follows from \eqref{bibsteq}.
\item \label{nidrt02}
Follows from Item \eqref{nidrt01} by block matrix multiplication.
\item \label{nidrt03}
Results from \eqref{avnet01} and the commuting relations
\eqref{avir01} and \eqref{avir02}.
\item \label{nidrt04}
This equation defines $E_{ij}$, $i,j=1,2$.
\end{enumerate}

Hence, by the parameter recognition formulas
\eqref{anim90}, along with \eqref{anim92}, we have
\begin{equation} \label{avnet03}
\begin{split}
P\op P &= E_{21}
\\[6pt]
\rgyr[P,P] &= E_{11} \sqrt{I_m+E_{21}^tE_{21}}^{~-1}
\\[6pt]
\lgyr[P,P] &= \sqrt{I_n+E_{21}E_{21}^t}^{~-1} E_{22}
\\[6pt]
\rgyr[P,P] (P\op P)^t \lgyr[P,P] &= E_{12}
\,,
\end{split}
\end{equation}
where $E_{ij}$ are given by Item \eqref{nidrt04} of \eqref{avnet02}.

Following the first equation in \eqref{avnet03} and the definition of
$E_{21}$ in Item \eqref{nidrt04} of \eqref{avnet02}, and
\eqref{avir02}, we have the equations
\begin{equation} \label{avnet04}
E_{21}=P\op P=2b_nP=2Pb_m
\,.
\end{equation}

Let us consider
the following chain of equations, some of which are numbered
for subsequent explanation.
\begin{equation} \label{avnet05}
\begin{split}
E_{11}
&\overbrace{=\!\!=\!\!=}^{(1)} \hspace{0.2cm}
I_m + 2P^tP
\\
&\hspace{0.1cm} {=\!\!=\!\!=} \hspace{0.2cm}      
\left\{ I_m + 4P^tP + 4(P^tP)^2 \right\}^{\half}
\\
&\hspace{0.1cm} {=\!\!=\!\!=} \hspace{0.2cm}      
\left\{ I_m + 4(I_m+P^tP)P^tP \right\}^{\half}
\\
&\hspace{0.1cm} {=\!\!=\!\!=} \hspace{0.2cm}      
\left\{ I_m + 4b_m^2 P^tP \right\}^{\half}
\\
&\overbrace{=\!\!=\!\!=}^{(2)} \hspace{0.2cm}
\left\{ I_m + 4P^t b_n^2 P \right\}^{\half}
\\
&\hspace{0.1cm} {=\!\!=\!\!=} \hspace{0.2cm}      
\left\{ I_m + 2(b_nP)^t 2b_nP \right\}^{\half}
\\
&\overbrace{=\!\!=\!\!=}^{(3)} \hspace{0.2cm}
\sqrt{I_m + E_{21}^t E_{21}}
\,.
\end{split}
\end{equation}
Derivation of the numbered equalities in \eqref{avnet05} follows:
\begin{enumerate}
\item \label{tigri01}
This equation follows from the definition of
$E_{11}$ in Item \eqref{nidrt04} of \eqref{avnet02}.
\item \label{tigri02}
This equation is obtained from its predecessor by two successive
applications of the commuting relation \eqref{avir01}.
\item \label{tigri03}
Follows from \eqref{avnet04}.
\end{enumerate}

We see from \eqref{avnet05} and the second equation in \eqref{avnet03}
that the right gyration generated by $P$ and $P$ is trivial,
\begin{equation} \label{avnet06}
\rgyr[P,P] = I_m
\end{equation}
for all $P\in\Rnm$.

Similarly to \eqref{avnet05}, let us consider
the following chain of equations, some of which are numbered
for subsequent explanation.
\begin{equation} \label{avnet07}
\begin{split}
E_{22}
&\overbrace{=\!\!=\!\!=}^{(1)} \hspace{0.2cm}
I_n + 2PP^t
\\
&\hspace{0.1cm} {=\!\!=\!\!=} \hspace{0.2cm}      
\left\{ I_n + 4PP^t + 4(PP^t)^2 \right\}^{\half}
\\
&\hspace{0.1cm} {=\!\!=\!\!=} \hspace{0.2cm}      
\left\{ I_n + 4(I_n+PP^t)PP^t \right\}^{\half}
\\
&\hspace{0.1cm} {=\!\!=\!\!=} \hspace{0.2cm}      
\left\{ I_n + 4b_n^2 PP^t \right\}^{\half}
\\
&\overbrace{=\!\!=\!\!=}^{(2)} \hspace{0.2cm}
\left\{ I_n + 4P b_m^2 P^t \right\}^{\half}
\\
&\hspace{0.1cm} {=\!\!=\!\!=} \hspace{0.2cm}      
\left\{ I_n + 2Pb_m 2(Pb_m)^t \right\}^{\half}
\\
&\overbrace{=\!\!=\!\!=}^{(3)} \hspace{0.2cm}
\sqrt{I_n + E_{21} E_{21}^t}
\,.
\end{split}
\end{equation}
Derivation of the numbered equalities in \eqref{avnet07} follows:
\begin{enumerate}
\item \label{tigrs01}
This equation follows from the definition of
$E_{22}$ in Item \eqref{nidrt04} of \eqref{avnet02}.
\item \label{tigrs02}
This equation is obtained from its predecessor by two successive
applications of the commuting relation \eqref{avir02}.
\item \label{tigrs03}
Follows from \eqref{avnet04}.
\end{enumerate}

We see from \eqref{avnet07} and the third equation in \eqref{avnet03}
that the left gyration generated by $P$ and $P$ is trivial,
\begin{equation} \label{avnet08}
\lgyr[P,P] = I_n
\end{equation}
for all $P\in\Rnm$.

It follows from \eqref{avnet03}\,--\,\eqref{avnet08} that
\begin{equation} \label{kukor01}
\begin{split}
E_{21} &= P\op P
\\[6pt]
E_{12} &= (P\op P)^t
\\[6pt]
E_{11} &= \sqrt{I_m + (P\op P)^t (P\op P)}
\\[6pt]
E_{22} &= \sqrt{I_n + (P\op P) (P\op P)^t}
\,.
\end{split}
\end{equation}

Hence, by the extreme sides of \eqref{avnet02}
\begin{equation} \label{kukor02}
B(P)^2 = B(P\op P)
\,,
\end{equation}
so that a the square of a bi-boost is, again, a bi-boost.

As a byproduct of squaring the bi-boost, we have obtained the results
in \eqref{avnet06} and \eqref{avnet08}, which we formalize in the
following theorem.
\begin{ttheorem}\label{kdrnb} 
{\bf (A Trivial Bi-gyration).}
\begin{equation} \label{kdrnb1}
\begin{split}
\lgyr[P,P] &= I_n
\\[4pt]
\rgyr[P,P] &= I_m
\end{split}
\end{equation}
for all $P\in\Rnm$.
\end{ttheorem}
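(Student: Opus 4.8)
The plan is to read off the two identities directly from the square of a bi-boost, using the block computation carried out just before the statement. First I would form $B(P)^2$ by block multiplication of the matrix $B(P)$ of \eqref{bibsteq} with itself, writing $b_m=\sqrt{I_m+P^tP}$ and $b_n=\sqrt{I_n+PP^t}$ as in \eqref{avnet01}. Using $b_m^2=I_m+P^tP$, $b_n^2=I_n+PP^t$, and the commuting relations $P^tb_n=b_mP^t$ and $b_nP=Pb_m$ of Lemma \ref{mda1} (equations \eqref{avir01}--\eqref{avir02}), the off-diagonal blocks collapse, so that $B(P)^2$ has blocks $E_{11}=I_m+2P^tP$, $E_{12}=2P^tb_n$, $E_{21}=2b_nP=2Pb_m$, $E_{22}=I_n+2PP^t$, exactly as in \eqref{avnet02}.

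Next I would invoke the parameter recognition formulas \eqref{anim90} together with the naming convention \eqref{anim92}, which give $\rgyr[P,P]=E_{11}\sqrt{I_m+E_{21}^tE_{21}}^{~-1}$ and $\lgyr[P,P]=\sqrt{I_n+E_{21}E_{21}^t}^{~-1}E_{22}$. Hence it suffices to prove the two matrix identities $E_{11}=\sqrt{I_m+E_{21}^tE_{21}}$ and $E_{22}=\sqrt{I_n+E_{21}E_{21}^t}$; since $E_{11}$ and $E_{22}$ are symmetric and positive-definite, these are equivalent to $E_{11}^2=I_m+E_{21}^tE_{21}$ and $E_{22}^2=I_n+E_{21}E_{21}^t$.

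For the first identity I would expand $E_{11}^2=(I_m+2P^tP)^2=I_m+4P^tP+4(P^tP)^2=I_m+4(I_m+P^tP)P^tP=I_m+4b_m^2P^tP$, then use the commuting relation of Lemma \ref{mda1} twice to rewrite $b_m^2P^tP$ as $P^tb_n^2P$, so that $E_{11}^2=I_m+4P^tb_n^2P=I_m+(2b_nP)^t(2b_nP)=I_m+E_{21}^tE_{21}$; therefore $\rgyr[P,P]=I_m$. The argument for $E_{22}$ is the mirror image, using $b_nP=Pb_m$ in place of $P^tb_n=b_mP^t$, and gives $\lgyr[P,P]=I_n$. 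The one point that needs care --- the \emph{main obstacle}, such as it is --- is the non-commutativity of the square-root matrices with $P$: one must check that $b_m^2$ commutes with $P^tP$ (this is precisely \eqref{avir03}), so that the binomial $(I_m+2P^tP)^2$ telescopes as written, and that the passage from $b_m^2P^tP$ to $P^tb_n^2P$ is licensed by the commuting relations of Lemma \ref{mda1}; once these are in hand the computation is routine block-matrix algebra, coinciding with the chains of equations in \eqref{avnet05} and \eqref{avnet07}.
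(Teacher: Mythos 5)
Your proposal is correct and follows essentially the same route as the paper: square the bi-boost by block multiplication, simplify using the commuting relations of Lemma \ref{mda1}, and then verify $E_{11}=\sqrt{I_m+E_{21}^tE_{21}}$ and $E_{22}=\sqrt{I_n+E_{21}E_{21}^t}$ so that the parameter recognition formulas force both gyrations to be trivial — this is exactly the computation in \eqref{avnet02}, \eqref{avnet05} and \eqref{avnet07}. (The only minor remark: the binomial expansion $(I_m+2P^tP)^2=I_m+4P^tP+4(P^tP)^2$ needs no commutativity hypothesis at all, since $P^tP$ trivially commutes with itself and with $I_m$; the commuting relations are genuinely needed only for the off-diagonal blocks and for the step $b_m^2P^tP=P^tb_n^2P$.)
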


\section{Commuting Relations Between Bi-gyrations and Bi-rotations}
\label{secc08}

Bi-gyrations $(\lgyr[P_1,P_2], \rgyr[P_1,P_2])\in SO(n)\times SO(m)$
and bi-rotations $( O_n,O_m)\in SO(n)\times SO(m)$
commute in a special, interesting way stated in the following theorem.

\begin{ttheorem}\label{spcomw} 
{\bf (Bi-gyration -- bi-rotation Commuting Relation).}
\begin{equation} \label{spc01}
O_n\lgyr[P_1,P_2] = \lgyr[O_nP_1,O_nP_2]O_n
\end{equation}
and
\begin{equation} \label{spc02}
\rgyr[P_1,P_2]O_m = O_m\rgyr[P_1O_m,P_2O_m]
\end{equation}
for all $P_1,P_2\in\Rnm$, $O_n\in SO(n)$ and $O_m\in SO(m)$.
\end{ttheorem}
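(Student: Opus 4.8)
The plan is to prove both commuting relations simultaneously by evaluating a product of two bi-boosts in two different ways and then appealing to the uniqueness of the bi-gyration decomposition (Theorem \ref{dokrn1}). For \eqref{spc01} I would work with the product $B(O_nP_1)B(O_nP_2)$; for \eqref{spc02}, with $B(P_1O_m)B(P_2O_m)$.

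Consider \eqref{spc01}. On one hand, applying the bi-boost product law \eqref{anim93} directly to $B(O_nP_1)B(O_nP_2)$ and using that $O_n$ is a left automorphism of $(\Rnm,\op)$ (Theorem \ref{dokrd}), so that $O_nP_1\op O_nP_2 = O_n(P_1\op P_2)$, gives
\[
B(O_nP_1)B(O_nP_2) = \rho(\rgyr[O_nP_1,O_nP_2])\, B(O_n(P_1\op P_2))\, \lambda(\lgyr[O_nP_1,O_nP_2]).
\]
On the other hand, the first commuting relation in Lemma \ref{mgn3} gives $B(O_nP) = \lambda(O_n)B(P)\lambda(O_n^t)$, so
\[
B(O_nP_1)B(O_nP_2) = \lambda(O_n)B(P_1)B(P_2)\lambda(O_n^t)
 = \lambda(O_n)\,\rho(\rgyr[P_1,P_2])\, B(P_1\op P_2)\, \lambda(\lgyr[P_1,P_2])\,\lambda(O_n^t).
\]
Now push $\lambda(O_n)$ to the right: it commutes with $\rho(\rgyr[P_1,P_2])$, then $\lambda(O_n)B(P_1\op P_2) = B(O_n(P_1\op P_2))\lambda(O_n)$ by Lemma \ref{mgn3} again, and finally, since $\lambda$ respects matrix products, $\lambda(O_n)\lambda(\lgyr[P_1,P_2])\lambda(O_n^t) = \lambda(O_n\lgyr[P_1,P_2]O_n^t)$. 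This yields
\[
B(O_nP_1)B(O_nP_2) = \rho(\rgyr[P_1,P_2])\, B(O_n(P_1\op P_2))\, \lambda(O_n\lgyr[P_1,P_2]O_n^t).
\]
Both displays are bi-gyration decompositions of the same element of $SO(m,n)$, so the uniqueness clause of Theorem \ref{dokrn1} forces the left-rotation factors to coincide, whence $\lgyr[O_nP_1,O_nP_2] = O_n\lgyr[P_1,P_2]O_n^t$; multiplying on the right by $O_n$ gives \eqref{spc01}. (Matching the right-rotation factors also yields $\rgyr[O_nP_1,O_nP_2]=\rgyr[P_1,P_2]$, a free byproduct.) Relation \eqref{spc02} is proved in the mirror-image fashion: start from $B(P_1O_m)B(P_2O_m)$, use the right-automorphism identity $(P_1\op P_2)O_m = P_1O_m\op P_2O_m$ on one side, and on the other side use $B(PO_m)=\rho(O_m^t)B(P)\rho(O_m)$ from the second relation of Lemma \ref{mgn3}, the commutativity of $\rho(O_m)$ with $\lambda(\lgyr[P_1,P_2])$, the identity $B(P_1\op P_2)\rho(O_m)=\rho(O_m)B((P_1\op P_2)O_m)$, and the homomorphism property of $\rho$; uniqueness then gives $\rgyr[P_1O_m,P_2O_m]=O_m^t\rgyr[P_1,P_2]O_m$, which is \eqref{spc02}.

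The only point requiring genuine care is the bookkeeping of the order of the block factors when conjugating: one must use precisely that $\rho(\,\cdot\,)$ and $\lambda(\,\cdot\,)$ commute with each other and that each is a group homomorphism into $\Rb^{(m+n)\times(m+n)}$, so that after conjugation the product is again literally in the canonical form $\rho(\,\cdot\,)B(\,\cdot\,)\lambda(\,\cdot\,)$ to which Theorem \ref{dokrn1} applies. Everything else is formal manipulation built from Lemma \ref{mgn3}, Theorem \ref{dokrd}, and the product law \eqref{anim93}.
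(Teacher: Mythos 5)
Your proof is correct, but it takes a genuinely different route from the paper. The paper proves \eqref{spc01} and \eqref{spc02} by brute-force manipulation of the explicit gyration formulas in \eqref{anim94}: it pushes $O_n$ through the inverse square-root factor via Lemma \ref{mgn1p1}, through the bi-gyrosum via Theorem \ref{dokrd}, and through the remaining factors via Lemma \ref{mgn1} and linearity, then reassembles the result as $\lgyr[O_nP_1,O_nP_2]O_n$. You instead argue structurally: conjugating $B(P_1)B(P_2)$ by $\lambda(O_n)$ (resp.\ $\rho(O_m)$), rewriting both sides in the canonical form $\rho(\cdot)B(\cdot)\lambda(\cdot)$ using Lemma \ref{mgn3}, the commutativity of $\rho$ and $\lambda$ factors, and the automorphism property of Theorem \ref{dokrd}, and then invoking the uniqueness clause of Theorem \ref{dokrn1} to match parameters. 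All the ingredients you cite are established before Theorem \ref{spcomw}, so there is no circularity, and your bookkeeping of the conjugations is right. Your approach buys two things: it avoids the explicit square-root formulas entirely, and it delivers the invariance relations $\rgyr[O_nP_1,O_nP_2]=\rgyr[P_1,P_2]$ and $\lgyr[P_1O_m,P_2O_m]=\lgyr[P_1,P_2]$ (the content of Theorem \ref{spcowd}, proved separately later in the paper) as free byproducts of the same parameter matching. What the paper's computation buys in exchange is independence from the uniqueness assertion of Theorem \ref{dokrn1}, whose proof in the paper is terse and really rests on the parameter-recognition formulas \eqref{anim87}; if you wanted your argument to be fully self-contained you would note that uniqueness follows from \eqref{anim87}, i.e.\ that $P$, $O_m$, $O_n$ are recoverable from the block entries of the product.
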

\begin{proof}
The matrix identity \eqref{spc01} is proved in the following
chain of equations, which are numbered
for subsequent explanation.
\begin{equation} \label{spc03}
\begin{split}
O_n\lgyr[P_1,P_2] ~
&
\overbrace{=\!\!=\!\!=}^{(1)} \hspace{0.2cm}
O_n\sqrt{I_n+(P_1\op P_2)(P_1\op P_2)^t}^{~-1}
(P_1P_2^t + \sqrt{I_n+P_1P_1^t}\sqrt{I_n+P_2P_2^t})
\\&
\hspace{-2.0cm}
\overbrace{=\!\!=\!\!=}^{(2)} \hspace{0.2cm}
\sqrt{I_n+(O_nP_1\op O_nP_2)(O_nP_1\op O_nP_2)^t}^{~-1}
O_n(P_1P_2^t + \sqrt{I_n+P_1P_1^t}\sqrt{I_n+P_2P_2^t})
\\&
\hspace{-2.0cm}
\overbrace{=\!\!=\!\!=}^{(3)} \hspace{0.2cm}
\sqrt{I_n+(O_nP_1\op O_nP_2)(O_nP_1\op O_nP_2)^t}^{~-1}
(O_nP_1P_2^t + O_n\sqrt{I_n+P_1P_1^t}\sqrt{I_n+P_2P_2^t})
\\&
\hspace{-2.0cm}
\overbrace{=\!\!=\!\!=}^{(4)} \hspace{0.2cm}
\sqrt{I_n+(O_nP_1\op O_nP_2)(O_nP_1\op O_nP_2)^t}^{~-1}
\\ & \hspace{-1.0cm} \times
\left\{ (O_nP_1)(O_nP_2)^t +
\sqrt{I_n+(O_nP_1)(O_nP_1)^t}\sqrt{I_n+(O_nP_2)(O_nP_2)^t} \right\} O_n
\\&
\hspace{-2.0cm}
\overbrace{=\!\!=\!\!=}^{(5)} \hspace{0.2cm}
\lgyr[O_nP_1,O_nP_2]O_n
\,.
 \end{split}
\end{equation}
Derivation of the numbered equalities in \eqref{spc03} follows:
\begin{enumerate}
\item \label{dphs1}
This equation follows from the third equation in \eqref{anim94}.
\item \label{dphs2}
Follows from Item \eqref{dphs1} by Lemma \ref{mgn1p1}, p.~\pageref{mgn1p1},
and Theorem \ref{dokrd}, p.~\pageref{dokrd}.
\item \label{dphs3}
Follows from Item \eqref{dphs2} by the linearity of $O_n$.
\item \label{dphs4}
Follows from Item \eqref{dphs3} by the obvious matrix identity
$O_nP_1P_2^t=(O_nP_1)(O_nP_2)^tO_n$, and from
Lemma \ref{mgn1}, p.~\pageref{mgn1}.
\item \label{dphs5}
Follows from Item \eqref{dphs4} by the linearity of $O_n$ and by the
third equation in \eqref{anim94}.
\end{enumerate}

The proof of the matrix identity \eqref{spc02} in \eqref{spc03s} below
is similar to the proof of the matrix identity \eqref{spc01}
in  \eqref{spc03}:
\begin{equation} \label{spc03s}
\begin{split}
\lgyr[P_1,P_2] O_m
~
&
\overbrace{=\!\!=\!\!=}^{(1)} \hspace{0.2cm}
(P_1^tP_2 + \sqrt{I_m+P_1^tP_1}\sqrt{I_m+P_2^tP_2})
\sqrt{I_m+(P_1\op P_2)^t(P_1\op P_2)}^{~-1} O_m
\\&
\hspace{-2.0cm}
\overbrace{=\!\!=\!\!=}^{(2)} \hspace{0.2cm}
(P_1^tP_2 + \sqrt{I_m+P_1^tP_1}\sqrt{I_m+P_2^tP_2}) O_m
\sqrt{I_m+(P_1O_m\op P_2O_m)^t(P_1O_m\op P_2O_m)}^{~-1}
\\&
\hspace{-2.0cm}
\overbrace{=\!\!=\!\!=}^{(3)} \hspace{0.2cm}
(P_1^tP_2O_m + \sqrt{I_m+P_1^tP_1}\sqrt{I_m+P_2^tP_2}O_m)
\\ &
\times
\sqrt{I_m+(P_1O_m\op P_2O_m)^t(P_1O_m\op P_2O_m)}^{~-1}
\\&
\hspace{-2.0cm}
\overbrace{=\!\!=\!\!=}^{(4)} \hspace{0.2cm}
\left\{ O_m(P_1O_m)^t(P_2O_m) + O_m
\sqrt{I_m+(P_1O_m)^t(P_1O_m)}\sqrt{I_m+(P_2O_m)^t(P_2O_m)} \right\}
\\ &
\times
\sqrt{I_m+(P_1O_m\op P_2O_m)^t(P_1O_m\op P_2O_m)}^{~-1}
\\&
\hspace{-2.0cm}
\overbrace{=\!\!=\!\!=}^{(5)} \hspace{0.2cm}
O_m\rgyr[P_1O_m,P_2O_m]
\,.
 \end{split}
\end{equation}
Derivation of the numbered equalities in \eqref{spc03s} follows:
\begin{enumerate}
\item \label{dphs1s}
This equation follows from the second equation in \eqref{anim94}.
\item \label{dphs2s}
Follows from Item \eqref{dphs1} by Lemma \ref{mgn1p1}, p.~\pageref{mgn1p1},
and Theorem \ref{dokrd}, p.\pageref{dokrd}.
\item \label{dphs3s}
Follows from Item \eqref{dphs2} by the linearity of $O_m$.
\item \label{dphs4s}
Follows from Item \eqref{dphs3} by the obvious matrix identity
$P_1^tP_2O_m=O_m(P_1O_m)^t(P_2O_m)$, and from
Lemma \ref{mgn1}, p.~\pageref{mgn1}, and from the linearity of $O_m$.
\item \label{dphs5s}
Follows from Item \eqref{dphs4s} by the linearity of $O_m$ and by the
second equation in \eqref{anim94}.
\end{enumerate}
\end{proof} 

The following corollary results immediately from Theorem \ref{spcomw}.
\begin{ccorollary}\label{vfsv}
Let $P_1,P_2\in\Rnm$, $O_n\in SO(n)$ and $O_m\in SO(m)$.
Then,
\begin{equation} \label{urfka}
\lgyr[O_nP_1,O_nP_2] = \lgyr[P_1,P_2]
\end{equation}
if and only if $O_n$ and $\lgyr[P_1,P_2]$ commute, that is,
$O_n\lgyr[P_1,P_2]=\lgyr[P_1,P_2]O_n$.

Similarly,
\begin{equation} \label{urfkb}
\rgyr[P_1O_m,P_2O_m] = \rgyr[P_1,P_2]
\end{equation}
if and only if $O_m$ and $\rgyr[P_1,P_2]$ commute, that is,
$O_m\rgyr[P_1,P_2]=\rgyr[P_1,P_2]O_m$.
\end{ccorollary}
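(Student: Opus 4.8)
The plan is to read off both equivalences directly from the bi-gyration--bi-rotation commuting relations of Theorem \ref{spcomw}, using nothing more than the invertibility of $O_n$ and $O_m$.

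First I would treat the left-gyration claim \eqref{urfka}. Theorem \ref{spcomw} supplies the identity
\[
O_n\lgyr[P_1,P_2] = \lgyr[O_nP_1,O_nP_2]\,O_n
\]
for all $P_1,P_2\in\Rnm$ and $O_n\in SO(n)$. Since $O_n$ is invertible, right multiplication by $O_n$ is a bijection of $\Rb^{n\times n}$, so $\lgyr[O_nP_1,O_nP_2]=\lgyr[P_1,P_2]$ holds if and only if $\lgyr[O_nP_1,O_nP_2]\,O_n=\lgyr[P_1,P_2]\,O_n$. Substituting the displayed identity into the left-hand side, this is equivalent to $O_n\lgyr[P_1,P_2]=\lgyr[P_1,P_2]\,O_n$, which is exactly the assertion that $O_n$ and $\lgyr[P_1,P_2]$ commute.

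Next I would handle the right-gyration claim \eqref{urfkb} in the mirror-image fashion. Theorem \ref{spcomw} also gives
\[
\rgyr[P_1,P_2]\,O_m = O_m\,\rgyr[P_1O_m,P_2O_m]
\]
for all $P_1,P_2\in\Rnm$ and $O_m\in SO(m)$. Because $O_m$ is invertible, left multiplication by $O_m$ is a bijection, so $\rgyr[P_1O_m,P_2O_m]=\rgyr[P_1,P_2]$ if and only if $O_m\,\rgyr[P_1O_m,P_2O_m]=O_m\,\rgyr[P_1,P_2]$; substituting the displayed identity on the left yields the equivalent condition $\rgyr[P_1,P_2]\,O_m=O_m\,\rgyr[P_1,P_2]$, i.e.\ that $O_m$ and $\rgyr[P_1,P_2]$ commute. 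This completes the argument.

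There is essentially no obstacle here: all the substance is already contained in Theorem \ref{spcomw}, and the corollary merely records the observation that in each of those two identities the outer rotation can be cancelled against itself precisely when it commutes with the gyration. The only point requiring care is to cancel on the correct side---right multiplication by $O_n$ for the left-gyration relation, left multiplication by $O_m$ for the right-gyration relation---since this is dictated by the positions of $O_n$ and $O_m$ in the relations \eqref{spc01} and \eqref{spc02}.
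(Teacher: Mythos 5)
Your proof is correct and follows exactly the route the paper intends: the paper simply declares the corollary an immediate consequence of Theorem \ref{spcomw}, and your argument spells out the one-line cancellation (conjugation by $O_n$, respectively $O_m$) that makes it immediate. Nothing is missing.
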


\begin{eexample}\label{enyuch}
The left (right) gyration $\lgyr[P_1,P_2]$ ($\rgyr[P_1,P_2]$)
commutes with itself. Hence, by Corollary \ref{vfsv},
\begin{equation} \label{kvrcd}
\begin{split}
\lgyr[\lgyr[P_1,P_2]P_1,\lgyr[P_1,P_2]P_2] &= \lgyr[P_1,P_2]
\\
\rgyr[P_1\rgyr[P_1,P_2],P_2\rgyr[P_1,P_2]] &= \rgyr[P_1,P_2]
\,.
\end{split}
\end{equation}
\end{eexample}

Left gyrations are invariant under
parameter right rotations $O_m\in SO(m)$, and
right gyrations are invariant under
parameter left rotations $O_n\in SO(n)$, as the following theorem
asserts.

\begin{ttheorem}\label{spcowd}
{\bf (Bi-gyration Invariance Relation).}
\begin{equation} \label{bilva}
\lgyr[P_1O_m,P_2O_m] = \lgyr[P_1,P_2]
\end{equation}
\begin{equation} \label{bilvb}
\rgyr[O_nP_1,O_nP_2] = \rgyr[P_1,P_2]
\end{equation}
for all $P_1,P_2\in\Rnm$, $O_n\in SO(n)$ and $O_m\in SO(m)$.
\end{ttheorem}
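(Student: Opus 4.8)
The plan is to prove both identities by substituting directly into the explicit bi-gyration formulas of Theorem~\ref{smld}, equation~\eqref{anim94s}. The point to exploit is that the second equation in \eqref{anim94s} expresses $\lgyr[P_1,P_2]$ in terms of $(P_1,P_2)$ only through the matrices $P_1P_2^t$, $P_1P_1^t$, $P_2P_2^t$ and the bi-gyrosum $P_1\op P_2$; dually, the third equation in \eqref{anim94s} expresses $\rgyr[P_1,P_2]$ only through $P_1^tP_2$, $P_1^tP_1$, $P_2^tP_2$ and $P_1\op P_2$. I will verify that each of these combinations is unchanged when $(P_1,P_2)$ is replaced by $(P_1O_m,P_2O_m)$ in the first case, and by $(O_nP_1,O_nP_2)$ in the second; the invariance relations \eqref{bilva}--\eqref{bilvb} then follow immediately.

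For \eqref{bilva}, I would replace $(P_1,P_2)$ by $(P_1O_m,P_2O_m)$ in the second equation in \eqref{anim94s}. Orthogonality of $O_m$ gives $O_mO_m^t=I_m$, hence $(P_iO_m)(P_jO_m)^t=P_i(O_mO_m^t)P_j^t=P_iP_j^t$ for $i,j\in\{1,2\}$, so each of $P_1P_2^t$, $I_n+P_1P_1^t$, $I_n+P_2P_2^t$, and therefore the square roots $\sqrt{I_n+P_1P_1^t}$ and $\sqrt{I_n+P_2P_2^t}$, is left as it stands. Moreover, the second identity in \eqref{adin105} (Theorem~\ref{dokrd}) gives $P_1O_m\op P_2O_m=(P_1\op P_2)O_m$, so that $(P_1O_m\op P_2O_m)(P_1O_m\op P_2O_m)^t=(P_1\op P_2)(O_mO_m^t)(P_1\op P_2)^t=(P_1\op P_2)(P_1\op P_2)^t$, and hence the prefactor $\sqrt{I_n+(P_1\op P_2)(P_1\op P_2)^t}^{~-1}$ is unchanged too. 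Comparison with the original expression for $\lgyr[P_1,P_2]$ then yields $\lgyr[P_1O_m,P_2O_m]=\lgyr[P_1,P_2]$.

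The argument for \eqref{bilvb} is the mirror image: replace $(P_1,P_2)$ by $(O_nP_1,O_nP_2)$ in the third equation in \eqref{anim94s}. Orthogonality of $O_n$ gives $O_n^tO_n=I_n$, hence $(O_nP_i)^t(O_nP_j)=P_i^t(O_n^tO_n)P_j=P_i^tP_j$, so $P_1^tP_2$, $I_m+P_1^tP_1$, $I_m+P_2^tP_2$ are unchanged; and the first identity in \eqref{adin105} gives $O_nP_1\op O_nP_2=O_n(P_1\op P_2)$, so that $(O_nP_1\op O_nP_2)^t(O_nP_1\op O_nP_2)=(P_1\op P_2)^t(O_n^tO_n)(P_1\op P_2)=(P_1\op P_2)^t(P_1\op P_2)$, leaving $\sqrt{I_m+(P_1\op P_2)^t(P_1\op P_2)}^{~-1}$ unchanged, whence $\rgyr[O_nP_1,O_nP_2]=\rgyr[P_1,P_2]$. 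I do not expect any genuine obstacle here: the entire content is the observation that a \emph{right} rotation $O_m$ enters the formula for $\lgyr[P_1,P_2]$ only inside sandwiches of the form $(\,\cdot\,)\,O_mO_m^t\,(\,\cdot\,)$, and a \emph{left} rotation $O_n$ enters the formula for $\rgyr[P_1,P_2]$ only inside $(\,\cdot\,)\,O_n^tO_n\,(\,\cdot\,)$, both of which collapse by orthogonality. One cannot shortcut this through Theorem~\ref{spcomw} or Corollary~\ref{vfsv}, since those concern the complementary pairings (a left rotation acting on a left gyration, a right rotation on a right gyration); the direct substitution above is the self-contained route.
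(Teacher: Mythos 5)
Your proof is correct and is essentially the paper's own argument: the paper likewise proves the theorem by substituting into the explicit gyration formulas (the second and third equations in \eqref{anim94}), invoking Theorem \ref{dokrd} for $P_1O_m\op P_2O_m=(P_1\op P_2)O_m$ and $O_nP_1\op O_nP_2=O_n(P_1\op P_2)$, and noting $(P_1O_m)(P_2O_m)^t=P_1P_2^t$ and $(O_nP_1)^t(O_nP_2)=P_1^tP_2$. You have simply written out in full the details the paper leaves as "straightforward."
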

\begin{proof}
The proof follows straightforwardly from the second and the third
equations in \eqref{anim94}, p.~\pageref{anim94}, and from
Theorem \ref{dokrd}, p.~\pageref{dokrd}, noting that
$(P_1O_m)(P_2O_m)^t=P_1P_2^t$ and
$(O_nP_1)^t(O_nP_2)=P_1^tP_2$
for all $P_1,P_2\in\Rnm$, $O_n\in SO(n)$ and $O_m\in SO(m)$.
\end{proof}

\section{Product of Lorentz Transformations}
\label{secc09}

Let $\Lambda_1$ and $\Lambda_2$ be two Lorentz transformations
of order $(m,n)$, $m,n\in\Nb$, so that, according to \eqref{anim67t},
\begin{equation} \label{anim96}
\begin{split}
\Lambda_1&=\Lambda(O_{n,1},P_1,O_{m,1})
=\rho(O_{m,1}) B(P_1) \lambda(O_{n,1})
\\[4pt]
\Lambda_2&=\Lambda(O_{n,2},P_2,O_{m,2})
=\rho(O_{m,2}) B(P_2) \lambda(O_{n,2})
\,.
\end{split}
\end{equation}

The product $\Lambda_1\Lambda_2$ of $\Lambda_1$ and $\Lambda_2$
is obtained in the following chain of equations, which are numbered
for subsequent explanation.
\begin{equation} \label{anim97}
\begin{split}
\Lambda_1\Lambda_2
&
\overbrace{=\!\!=\!\!=}^{(1)} \hspace{0.2cm}
\rho(O_{m,1}) B(P_1) \lambda(O_{n,1})
\rho(O_{m,2}) B(P_2) \lambda(O_{n,2})
\\&
\overbrace{=\!\!=\!\!=}^{(2)} \hspace{0.2cm}
\rho(O_{m,1}) B(P_1) \rho(O_{m,2}) \lambda(O_{n,1}) B(P_2) \lambda(O_{n,2})
\\&
\overbrace{=\!\!=\!\!=}^{(3)} \hspace{0.2cm}
\rho(O_{m,1}) \rho(O_{m,2}) B(P_1O_{m,2})B(O_{n,1}P_2)\lambda(O_{n,1})
\lambda(O_{n,2})
\\&
\overbrace{=\!\!=\!\!=}^{(4)} \hspace{0.2cm}
\rho(O_{m,1}O_{m,2}) B(P_1O_{m,2})B(O_{n,1}P_2)\lambda(O_{n,1}O_{n,2})
\\&
\overbrace{=\!\!=\!\!=}^{(5)} \hspace{0.2cm}
\rho(O_{m,1}O_{m,2})
\\ & \hspace{0.8cm} \times
\rho(\rgyr[P_1O_{m,2}, O_{n,1} P_2])
B(P_1O_{m,2} \op O_{n,1} P_2)
\lambda(\lgyr[P_1O_{m,2}, O_{n,1} P_2])
\\ & \hspace{0.8cm} \times
\lambda(O_{n,1}O_{n,2})
\\&
\overbrace{=\!\!=\!\!=}^{(6)} \hspace{0.2cm}
\rho(O_{m,1}O_{m,2}\rgyr[P_1O_{m,2}, O_{n,1} P_2])
\\ & \hspace{0.8cm} \times
B(P_1O_{m,2} \op O_{n,1} P_2)
\\ & \hspace{0.8cm} \times
\lambda(\lgyr[P_1O_{m,2}, O_{n,1} P_2]O_{n,1}O_{n,2})
\,.
 \end{split}
\end{equation}
Derivation of the numbered equalities in \eqref{anim97} follows:
\begin{enumerate}
\item \label{madods1}
This equation follows from \eqref{anim96}.
\item \label{madods2}
Follows from \eqref{madods1} since $\lambda(O_{n,1})$ and
$\rho(O_{m,2})$ commute.
\item \label{madods3}
Follows from \eqref{madods2} by Lemma \ref{mgn3}, p.~\pageref{mgn3}.
\item \label{madods4}
Follows from \eqref{madods3} by the obvious matrix identities
$\rho(O_{m,1})\rho(O_{m,2})=\rho(O_{m,1}O_{m,2})$
and
$\lambda(O_{n,1})\lambda(O_{n,2})=\lambda(O_{n,1}O_{n,2})$.
\item \label{madods5}
Follows from \eqref{madods4} by the bi-boost composition law
\eqref{anim93}, p.~\pageref{anim93}.
\item \label{madods6}
Obvious (Similar to the argument in Item \eqref{madods4}).
\end{enumerate}

In the column notation \eqref{anim68},
the result of \eqref{anim97} gives the product law
of Lorentz transformations in the following theorem.

\begin{ttheorem}\label{dokrt}
{\bf (Lorentz Transformation Product Law).}
The product of two Lorentz transformations
$\Lambda_1=(P_1,O_{n,1},O_{m,1})^t$
and $\Lambda_2=(P_2,O_{n,2},O_{m,2})^t$
of order $(m,n)$, $m,n\in\Nb$, is given by
\begin{equation} \label{anim98}
\Lambda_1 \Lambda_2 =
\begin{pmatrix}  P_1 \\[4pt] O_{n,1} \\[4pt] O_{m,1} \end{pmatrix}
\begin{pmatrix}  P_2 \\[4pt] O_{n,2} \\[4pt] O_{m,2} \end{pmatrix}
= \begin{pmatrix}
P_1 O_{m,2} \op O_{n,1} P_2 \\[4pt]
\lgyr[P_1 O_{m,2},O_{n,1} P_2] O_{n,1} O_{n,2} \\[4pt]
O_{m,1} O_{m,2} \rgyr[P_1 O_{m,2},O_{n,1} P_2]
\end{pmatrix}
\,.
\end{equation}
\end{ttheorem}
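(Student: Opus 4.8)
The plan is to start from the bi-gyration decomposition \eqref{anim67t} of each factor, as recorded in \eqref{anim96}, and to simplify the sixfold product $\rho(O_{m,1}) B(P_1) \lambda(O_{n,1}) \rho(O_{m,2}) B(P_2) \lambda(O_{n,2})$ step by step until it is again in the canonical form $\rho(\cdot)\,B(\cdot)\,\lambda(\cdot)$, from which the three parameters can be read off and rewritten in the column notation \eqref{anim68}.

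First I would use that left rotations $\lambda(O_n)$ and right rotations $\rho(O_m)$ act on complementary blocks and hence commute, so that the inner pair becomes $\lambda(O_{n,1})\rho(O_{m,2}) = \rho(O_{m,2})\lambda(O_{n,1})$. Then I would push $\rho(O_{m,2})$ leftward past $B(P_1)$ and $\lambda(O_{n,1})$ rightward past $B(P_2)$ by the single-sided commuting relations of Lemma \ref{mgn3}, namely $B(P_1)\rho(O_{m,2}) = \rho(O_{m,2})B(P_1 O_{m,2})$ and $\lambda(O_{n,1})B(P_2) = B(O_{n,1}P_2)\lambda(O_{n,1})$. The product now has the shape $\rho(O_{m,1})\rho(O_{m,2})\,B(P_1 O_{m,2})B(O_{n,1}P_2)\,\lambda(O_{n,1})\lambda(O_{n,2})$, and the outer rotation pairs collapse via the elementary block-multiplication identities $\rho(O)\rho(O') = \rho(OO')$ and $\lambda(O)\lambda(O') = \lambda(OO')$.

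The heart of the argument is the middle product $B(P_1 O_{m,2})B(O_{n,1}P_2)$ of two bi-boosts, which the bi-boost composition law \eqref{anim93} resolves as $\rho(\rgyr[P_1 O_{m,2}, O_{n,1}P_2])\,B(P_1 O_{m,2} \op O_{n,1}P_2)\,\lambda(\lgyr[P_1 O_{m,2}, O_{n,1}P_2])$. Substituting this and merging the new right gyration (on the right) into $\rho(O_{m,1}O_{m,2})$ and the new left gyration (on the left) into $\lambda(O_{n,1}O_{n,2})$, again by the homomorphism property of $\rho$ and $\lambda$, yields $\Lambda_1\Lambda_2 = \rho(O_{m,1}O_{m,2}\,\rgyr[\,\cdot\,])\,B(P_1 O_{m,2}\op O_{n,1}P_2)\,\lambda(\lgyr[\,\cdot\,]\,O_{n,1}O_{n,2})$. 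Passing to column notation then gives \eqref{anim98}; this is precisely the chain of equations \eqref{anim97} transcribed.

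I do not expect a real obstacle, since every ingredient is already in hand: the $\lambda$--$\rho$ commutativity, the bi-boost/rotation commuting relations of Lemma \ref{mgn3}, the bi-boost composition law \eqref{anim93}, and the homomorphism identities for $\rho$ and $\lambda$. The only point needing care is the bookkeeping --- keeping the residual right gyration on the right of the $O_m$-product and the residual left gyration on the left of the $O_n$-product, so that the composition orders in the second and third entries of \eqref{anim98} come out correctly, and avoiding any illegitimate reordering of non-commuting orthogonal factors along the way.
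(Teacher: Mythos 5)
Your proposal is correct and follows essentially the same route as the paper's own derivation, which is exactly the chain of equations \eqref{anim97}: commute $\lambda(O_{n,1})$ with $\rho(O_{m,2})$, apply Lemma \ref{mgn3} to move the rotations past the bi-boosts, collapse the rotation products, and resolve the middle bi-boost product via \eqref{anim93}. No gaps.
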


\begin{eexample}\label{exmp1}
{\it
In the special case when $P_1=P_2=0_{n,m}$ and $O_{m,1}=O_{m,2}=I_m$,
the parameter composition law \eqref{anim98} yields the equation
\begin{equation} \label{aqua01}
\begin{pmatrix}  0_{n,m} \\[4pt] O_{n,1} \\[4pt] I_m \end{pmatrix}
\begin{pmatrix}  0_{n,m} \\[4pt] O_{n,2} \\[4pt] I_m \end{pmatrix}
=
\begin{pmatrix}  0_{n,m} \\[4pt] O_{n,1}O_{n,2} \\[4pt] I_m \end{pmatrix}
\end{equation}
demonstrating that under the parameter composition law \eqref{anim98}
the parameter $O_n$ forms the spacial orthogonal group $SO(n)$.
}
\end{eexample}
\begin{eexample}\label{exmp2}
{\it
In the special case when $P_1=P_2=0_{n,m}$ and $O_{n,1}=O_{n,2}=I_n$,
the parameter composition law \eqref{anim98} yields the equation
\begin{equation} \label{aqua02}
\begin{pmatrix}  0_{n,m} \\[4pt] I_n \\[4pt] O_{m,1} \end{pmatrix}
\begin{pmatrix}  0_{n,m} \\[4pt] I_n \\[4pt] O_{m,2} \end{pmatrix}
=
\begin{pmatrix}  0_{n,m} \\[4pt] I_n \\[4pt] O_{m,1}O_{m,2} \end{pmatrix}
\end{equation}
demonstrating that under the parameter composition law \eqref{anim98}
the parameter $O_m$ forms the spacial orthogonal group $SO(m)$.
}
\end{eexample}
\begin{eexample}\label{exmp3}
{\it
In the special case when $O_{n,1}=O_{n,2}=I_n$ and $O_{m,1}=O_{m,2}=I_m$
the parameter composition law \eqref{anim98} yields the equation
\begin{equation} \label{aqua03}
\begin{pmatrix} P_1 \\[4pt] I_n \\[4pt] I_m \end{pmatrix}
\begin{pmatrix} P_2 \\[4pt] I_n \\[4pt] I_m \end{pmatrix}
=
\begin{pmatrix} P_1\op P_2\\[4pt]\lgyr[P_1,P_2]\\[4pt]\rgyr[P_1,P_2]
\end{pmatrix}
\,.
\end{equation}
Clearly, under the parameter composition law \eqref{anim98} the
parameter $P\in\Rnm$ does not form a group.
Indeed, following the parametrization of the generalized
Lorentz group $SO(m,n)$ in \eqref{anim96}, we face the task of determining
the composition law of the parameter $P\in\Rnm$ along with the resulting
group-like structure of the parameter set $\Rnm$.
We will find in the sequel that the group-like structure of $\Rnm$
that results from the composition law of the parameter $P$ is
a natural generalization of the gyrocommutative gyrogroup structure,
called a bi-gyrocommutative bi-gyrogroup.
}
\end{eexample}

The Lorentz transformation product \eqref{anim98} represents
matrix multiplication. As such, it is associative and, clearly,
its inverse obeys the identity
\begin{equation} \label{anim99}
(\Lambda_1 \Lambda_2)^{-1} = \Lambda_2^{-1} \Lambda_1^{-1}
\,.
\end{equation}

\section{The Bi-gyrocommutative Law}
\label{secc10}

Bi-boosts are Lorentz transformations without bi-rotations. Let
\begin{equation} \label{anim104}
B(P_k) = (P_k,I_n,I_m)^t
\,,
\end{equation}
$P_k\in\Rnm$,
$k=1,2$, be two bi-boosts in $\Rb^{(m+n)\times(m+n)}$.
Then, by \eqref{anim98} with $O_{n,1}=O_{n,2}=I_n$ and
$O_{m,1}=O_{m,2}=I_m$ (or by \eqref{anim95}), and by \eqref{anim83}
with $O_n=\lgyr[P_1,P_2]$ and $O_m=\rgyr[P_1,P_2]$,
\begin{equation} \label{anim105}
\begin{split}
(B(P_1)B(P_2))^{-1} &=
\left\{
\begin{pmatrix} P_1 \\[4pt] I_n \\[4pt] I_m \end{pmatrix}
\begin{pmatrix} P_2 \\[4pt] I_n \\[4pt] I_m \end{pmatrix}
\right\}^{-1}
=
\begin{pmatrix} P_1 \op P_2 \\[4pt] \lgyr[P_1,P_2] \\[4pt] \rgyr[P_1,P_2]
\end{pmatrix}^{-1}
\\[12pt] &=
\begin{pmatrix}
-\lgyr^{-1}[P_1,P_2] (P_1\op P_2) \rgyr^{-1}[P_1,P_2] \\[4pt]
\lgyr^{-1}[P_1,P_2] \\[4pt] \rgyr^{-1}[P_1,P_2]
\end{pmatrix}
\,.
\end{split}
\end{equation}
Here $\lgyr^{-1}[P_1,P_2]=(\lgyr[P_1,P_2])^{-1}$ and, similarly,
$\rgyr^{-1}[P_1,P_2]=(\rgyr[P_1,P_2])^{-1}$.

Calculating $(B(P_1)B(P_2))^{-1}$ in a different way, as indicated
in \eqref{anim99}, yields
\begin{equation} \label{anim106}
\begin{split}
(B(P_1)B(P_2))^{-1} = B(P_2)^{-1} B(P_1)^{-1}
&=
\begin{pmatrix} P_2 \\[4pt] I_n \\[4pt] I_m \end{pmatrix}^{-1}
\begin{pmatrix} P_1 \\[4pt] I_n \\[4pt] I_m \end{pmatrix}^{-1}
=
\begin{pmatrix}-P_2 \\[4pt] I_n \\[4pt] I_m \end{pmatrix}
\begin{pmatrix}-P_1 \\[4pt] I_n \\[4pt] I_m \end{pmatrix}
\\[12pt] &=
\begin{pmatrix}
(-P_2) \op (-P_1) \\[4pt] \lgyr[-P_2,-P_1] \\[4pt] \rgyr[-P_2,-P_1]
\end{pmatrix}
\,.
\end{split}
\end{equation}

Hence, the extreme right sides of \eqref{anim105}\,--\,\eqref{anim106}
are equal, implying the equality of their respective entries, giving
rise to the three equations in \eqref{anim107}\,--\,\eqref{anim108} below.

The second and third entries of the extreme right sides of
\eqref{anim105}\,--\,\eqref{anim106}, along with
the even property \eqref{anim94p2} of bi-gyrations,
imply the {\it bi-gyration inversion law},
\begin{equation} \label{anim107}
\begin{split}
\lgyr^{-1}[P_1,P_2] &= \lgyr [-P_2,-P_1] = \lgyr [P_2,P_1]
\\
\rgyr^{-1}[P_1,P_2] &= \rgyr [-P_2,-P_1] = \rgyr [P_2,P_1]
\,.
\end{split}
\end{equation}
for all $P_1,P_2\in\Rnm$.

The first entry of the extreme right sides of
\eqref{anim105}\,--\,\eqref{anim106},
along with \eqref{anim107} and
the gyroautomorphic inverse property \eqref{anim94p1}, yields
\begin{equation} \label{anim108}
\begin{split}
(-P_2) \op (-P_1) &=-\lgyr^{-1}[P_1,P_2] (P_1\op P_2) \rgyr^{-1}[P_1,P_2]
\\ &=
-\lgyr[-P_2,-P_1] (P_1\op P_2) \rgyr[-P_2,-P_1]
\\ &=
\lgyr[-P_2,-P_1] \{-(P_1\op P_2)\} \rgyr[-P_2,-P_1]
\\ &=
\lgyr[-P_2,-P_1] \{(-P_1)\op(-P_2)\} \rgyr[-P_2,-P_1]
\,,
\end{split}
\end{equation}
for all $P_1,P_2\in\Rnm$.

Renaming $-P_1$ and $-P_2$ as $P_2$ and $P_1$, the extreme sides of
\eqref{anim108} give the
{\it bi-gyrocommutative law} of the bi-gyroaddition $\op$,
\begin{equation} \label{anim109}
P_1 \op P_2 = \lgyr[P_1,P_2] (P_2\op P_1) \rgyr[P_1,P_2]
\,,
\end{equation}
for all $P_1,P_2\in\Rnm$.

Instructively, a short derivation of the bi-gyrocommutative law
of $\op$ is presented below.
Transposing the extreme sides of the fourth matrix equation in
\eqref{anim94}, noting that by \eqref{anim107}
\begin{equation} \label{maday}
\begin{split}
\lgyr[P_1,P_2]^t &= \lgyr^{-1}[P_1,P_2] = \lgyr[P_2,P_1]
\\
\rgyr[P_1,P_2]^t &= \rgyr^{-1}[P_1,P_2] = \rgyr[P_2,P_1]
\,,
\end{split}
\end{equation}
and renaming the pair $(P_1,P_2)$ as $(P_2,P_1)$,
we obtain the matrix identity
\begin{equation} \label{dvon}
P_1\op P_2 = \lgyr[P_1,P_2](P_2 \op P_1) \rgyr[P_1,P_2]
\,.
\end{equation}
for all $P_1,P_2\in\Rnm$.

The matrix identity \eqref{dvon} gives the {\it bi-gyrocommutative law}
of the binary operation $\op$ in $\Rnm$, according to which
$P_1\op P_2$ equals $P_2\op P_1$ bi-gyrated by the
bi-gyration $(\lgyr[P_1,P_2],\rgyr[P_1,P_2])$ generated by $P_1$ and $P_2$,
for all $P_1,P_2\in\Rnm$.

Formalizing the result in \eqref{anim109} and
in \eqref{dvon} we obtain the following theorem.
\begin{ttheorem}\label{tmkdrh} 
{\bf (Bi-gyrocommutative Law).}
The binary operation $\op$ in $\Rnm$ possesses the
bi-gyrocommutative law
\begin{equation} \label{anim109s}
P_1 \op P_2 = \lgyr[P_1,P_2] (P_2\op P_1) \rgyr[P_1,P_2]
\end{equation}
for all $P_1,P_2\in\Rnm$.
\end{ttheorem}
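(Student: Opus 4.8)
The plan is to obtain the bi-gyrocommutative law directly from the parameter data already assembled in \eqref{anim94}, with no need to revisit bi-boost products. The starting point is the fourth matrix equation in \eqref{anim94}, which reads
\[
\rgyr[P_1,P_2]\,(P_1\op P_2)^t\,\lgyr[P_1,P_2] = (P_2\op P_1)^t .
\]
First I would transpose both sides. Since transposition reverses products and is an involution, the left side becomes $\lgyr[P_1,P_2]^t\,(P_1\op P_2)\,\rgyr[P_1,P_2]^t$ and the right side becomes $P_2\op P_1$, so
\[
\lgyr[P_1,P_2]^t\,(P_1\op P_2)\,\rgyr[P_1,P_2]^t = P_2\op P_1 .
\]

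Next I would eliminate the transposes on the gyrations. Because $\lgyr[P_1,P_2]\in SO(n)$ and $\rgyr[P_1,P_2]\in SO(m)$ by \eqref{anim92}, their transposes coincide with their inverses; invoking the bi-gyration inversion law \eqref{anim107} (recorded in the form \eqref{maday}) then gives $\lgyr[P_1,P_2]^t=\lgyr[P_2,P_1]$ and $\rgyr[P_1,P_2]^t=\rgyr[P_2,P_1]$. Substituting yields $\lgyr[P_2,P_1]\,(P_1\op P_2)\,\rgyr[P_2,P_1]=P_2\op P_1$ for all $P_1,P_2\in\Rnm$, and interchanging the labels $P_1$ and $P_2$ produces exactly the claimed identity $P_1\op P_2 = \lgyr[P_1,P_2]\,(P_2\op P_1)\,\rgyr[P_1,P_2]$.

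The step carrying the real weight — and the one I would flag as the main obstacle — is the transposition rule for gyrations, i.e.\ \eqref{maday}/\eqref{anim107}, which is not elementary: it came from computing $(B(P_1)B(P_2))^{-1}$ in two ways, once via the inverse-Lorentz formula \eqref{anim83} and once as $B(P_2)^{-1}B(P_1)^{-1}$, matching the rotation entries, and also using the even property \eqref{anim94p2}. Accordingly, an equally legitimate and self-contained route is simply to compare those two expressions for $(B(P_1)B(P_2))^{-1}$ and read off the equality of the \emph{first} (main-parameter) entries, which is the chain \eqref{anim105}--\eqref{anim109}; in either presentation the crux is the interaction between matrix inversion of a bi-boost product and the inversion/evenness behaviour of bi-gyrations. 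I would lead with the transpose argument because it is the shortest, and remark that the inverse-comparison derivation is its conceptual origin.
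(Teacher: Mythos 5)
Your proposal is correct and matches the paper's own treatment: the paper derives \eqref{anim109s} both by comparing the two expressions for $(B(P_1)B(P_2))^{-1}$ in \eqref{anim105}--\eqref{anim109} and, as a ``short derivation,'' by transposing the fourth equation of \eqref{anim94} and invoking \eqref{maday}, which is exactly the argument you lead with. Both of your routes, including your remark on where the inversion law \eqref{anim107} comes from, coincide with the paper's.
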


When $m=1$ right gyrations are trivial, $\rgyr[P_1,P_2]=I_m$. Hence,
in the special case when $m=1$, the bi-gyrocommutative law
\eqref{anim109s} of bi-gyrogroup theory descends to the gyrocommutative law
of gyrogroup theory found, for instance, in
\cite{mybook01,mybook02,mybook03,mybook04,mybook05,mybook06,mybook07}.

Formalizing the results in \eqref{maday} we obtain the following theorem.
\begin{ttheorem}\label{tmdkert} 
{\bf (Bi-gyration Inversion Law).}
The bi-gyrogroupoid $(\Rnm,\op)$ possesses the left gyration inversion law,
\begin{subequations} \label{madays}
\begin{equation} \label{madaysa}
\lgyr^{-1}[P_1,P_2] = \lgyr[P_2,P_1]
\,,
\end{equation}
and the right gyration inversion law,
\begin{equation} \label{madaysb}
\rgyr^{-1}[P_1,P_2] = \rgyr[P_2,P_1]
\,,
\end{equation}
\end{subequations}
for all $P_1,P_2\in\Rnm$.
\end{ttheorem}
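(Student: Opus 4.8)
The plan is to evaluate the inverse $(B(P_1)B(P_2))^{-1}$ of a product of two bi-boosts in two different ways and equate the outcomes, exactly as assembled in \eqref{anim105}\,--\,\eqref{anim107}. First I would use the bi-boost composition law \eqref{anim93} (equivalently, \eqref{anim95}) to write $B(P_1)B(P_2)$ in the column form $(P_1\op P_2,\ \lgyr[P_1,P_2],\ \rgyr[P_1,P_2])^t$, and then apply the inverse Lorentz transformation formula of Theorem \ref{dokrn3}. Since $\lgyr[P_1,P_2]$ and $\rgyr[P_1,P_2]$ are orthogonal, their transposes coincide with their inverses, so this produces the three entries of $(B(P_1)B(P_2))^{-1}$ displayed on the right side of \eqref{anim105}, whose second and third components are $\lgyr^{-1}[P_1,P_2]$ and $\rgyr^{-1}[P_1,P_2]$.

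Second, I would recompute the same inverse by order reversal, $(B(P_1)B(P_2))^{-1}=B(P_2)^{-1}B(P_1)^{-1}$, and invoke Theorem \ref{dokrn2} to replace $B(P_2)^{-1}$ and $B(P_1)^{-1}$ by $B(-P_2)$ and $B(-P_1)$. A second application of the composition law \eqref{anim93}, now to the pair $(-P_2,-P_1)$ in that order, writes this product in column form with second and third components $\lgyr[-P_2,-P_1]$ and $\rgyr[-P_2,-P_1]$, as in \eqref{anim106}. Equating the second and third entries of the two column expressions yields $\lgyr^{-1}[P_1,P_2]=\lgyr[-P_2,-P_1]$ and $\rgyr^{-1}[P_1,P_2]=\rgyr[-P_2,-P_1]$.

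To conclude, I would apply the even property of bi-gyrations \eqref{anim94p2}, namely $\lgyr[-P_2,-P_1]=\lgyr[P_2,P_1]$ and $\rgyr[-P_2,-P_1]=\rgyr[P_2,P_1]$, which turns the two displayed relations into the asserted identities $\lgyr^{-1}[P_1,P_2]=\lgyr[P_2,P_1]$ and $\rgyr^{-1}[P_1,P_2]=\rgyr[P_2,P_1]$, valid for all $P_1,P_2\in\Rnm$. I do not expect a serious obstacle: the only delicate points are tracking the order reversal when passing from $B(P_1)B(P_2)$ to $B(P_2)^{-1}B(P_1)^{-1}$, so that the arguments of the resulting bi-gyrations appear as $-P_2$ \emph{before} $-P_1$, and ensuring the even property is in hand, since it itself rests on the gyroautomorphic inverse property \eqref{anim94p1} and on the explicit bi-gyration formulas \eqref{anim94}. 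A shorter alternative route would transpose the fourth matrix equation in \eqref{anim94} and use that an orthogonal matrix is the transpose of its inverse; however, that argument presupposes the very inversion relation being proved, so it is best regarded as a consistency check rather than an independent proof.
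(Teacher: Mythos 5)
Your proposal is correct and follows essentially the same route as the paper: the paper derives \eqref{anim107} by computing $(B(P_1)B(P_2))^{-1}$ once via the column form \eqref{anim105} using the inverse Lorentz transformation \eqref{anim83}, and once as $B(-P_2)B(-P_1)$ in \eqref{anim106}, then equating the second and third entries and applying the even property \eqref{anim94p2}; Theorem \ref{tmdkert} simply formalizes the resulting identities \eqref{maday}. Your closing caution that the transposition of the fourth equation in \eqref{anim94} presupposes the inversion law is also consistent with the paper, which uses that transposition only for the bi-gyrocommutative law after \eqref{anim107} is in hand.
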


Identities \eqref{madaysa}\,--\,\eqref{madaysb} express the
{\it inversive symmetric property} of bi-gyrations.

\section{The Bi-gyroassociative Law}
\label{seccbd}

Matrix multiplication is associative. Hence
\begin{equation} \label{anim112}
(\Lambda_1 \Lambda_2) \Lambda_3
=
\Lambda_1 (\Lambda_2 \Lambda_3)
\,.
\end{equation}

On the one hand,
by \eqref{anim95} and \eqref{anim98},
\begin{equation} \label{anim113}
\begin{split}
(B(P_1)B(P_2))B(P_3)
&=
\left\{
\begin{pmatrix} P_1 \\[4pt] I_n \\[4pt] I_m \end{pmatrix}
\begin{pmatrix} P_2 \\[4pt] I_n \\[4pt] I_m \end{pmatrix}
\right\}
\begin{pmatrix} P_3 \\[4pt] I_n \\[4pt] I_m \end{pmatrix}
=
\begin{pmatrix} P_1\op P_2 \\[4pt] \lgyr[P_1,P_2]
               \\[4pt] \rgyr[P_1,P_2] \end{pmatrix}
\begin{pmatrix} P_3 \\[4pt] I_n \\[4pt] I_m \end{pmatrix}
\\[12pt] &=
\begin{pmatrix} (P_1\op P_2) \op \lgyr[P_1,P_2] P_3 \\[4pt]
\lgyr[P_1\op P_2,\lgyr[P_1,P_2]P_3] \lgyr[P_1,P_2] \\[4pt]
\rgyr[P_1,P_2]\rgyr[P_1\op P_2,\lgyr[P_1,P_2]P_3]
\end{pmatrix}
\,.
\end{split}
\end{equation}

On the other hand, similarly, by \eqref{anim95} and \eqref{anim98},
\begin{equation} \label{anim114}
\begin{split}
B(P_1)(B(P_2)B(P_3))
&=
\begin{pmatrix} P_1 \\[4pt] I_n \\[4pt] I_m \end{pmatrix}
\left\{
\begin{pmatrix} P_2 \\[4pt] I_n \\[4pt] I_m \end{pmatrix}
\begin{pmatrix} P_3 \\[4pt] I_n \\[4pt] I_m \end{pmatrix}
\right\}
=
\begin{pmatrix} P_1 \\[4pt] I_n \\[4pt] I_m \end{pmatrix}
\begin{pmatrix} P_2\op P_3 \\[4pt] \lgyr[P_2,P_3]
               \\[4pt] \rgyr[P_2,P_3] \end{pmatrix}
\\[12pt] &=
\begin{pmatrix} P_1 \rgyr[P_2,P_3] \op (P_2 \op P_3) \\[4pt]
\lgyr[P_1\rgyr[P_2,P_3],P_2\op P_3] \lgyr[P_2,P_3] \\[4pt]
\rgyr[P_2,P_3] \rgyr[P_1\rgyr[P_2,P_3],P_2\op P_3]
\end{pmatrix}
\,.
\end{split}
\end{equation}

Hence, by \eqref{anim112}\,--\,\eqref{anim114},
corresponding entries of the extreme right sides of
\eqref{anim113} and \eqref{anim114} are equal, giving rise to the
{\it bi-gyroassociative law}
\begin{equation} \label{biassoc}
(P_1\op P_2) \op \lgyr[P_1,P_2] P_3
=
P_1 \rgyr[P_2,P_3] \op (P_2 \op P_3)
\end{equation}
and to the bi-gyration identities
\begin{equation} \label{anim115}
\begin{split}
\lgyr[P_1\op P_2,\lgyr[P_1,P_2]P_3] \lgyr[P_1,P_2]
&=
\lgyr[P_1\rgyr[P_2,P_3],P_2\op P_3] \lgyr[P_2,P_3]
\\[8pt]
\rgyr[P_1,P_2]\rgyr[P_1\op P_2,\lgyr[P_1,P_2]P_3]
&=
\rgyr[P_2,P_3] \rgyr[P_1\rgyr[P_2,P_3],P_2\op P_3]
\end{split}
\end{equation}
for all $P_1,P_2,P_3\in\Rnm$.

Formalizing the result in \eqref{biassoc} we obtain the following theorem.
\begin{ttheorem}\label{tmkdrk}
{\bf (Bi-gyroassociative Law in $(\Rnm,\op)$).}
The bi-gyroaddition $\op$ in $\Rnm$ possesses the
bi-gyroassociative law
\begin{equation} \label{biassocs}
(P_1\op P_2) \op \lgyr[P_1,P_2] P_3
=
P_1 \rgyr[P_2,P_3] \op (P_2 \op P_3)
\end{equation}
for all $P_1,P_2\in\Rnm$.
\end{ttheorem}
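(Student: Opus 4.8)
The plan is to obtain the bi-gyroassociative law exactly as the paper's own computation suggests: by exploiting the associativity of matrix multiplication applied to three bi-boosts $B(P_1)$, $B(P_2)$, $B(P_3)$, and then reading off the first (main-parameter) entry in the column notation. Concretely, I would first compute $(B(P_1)B(P_2))B(P_3)$ by applying the bi-boost composition law \eqref{anim93} (equivalently \eqref{anim95}) to the inner product $B(P_1)B(P_2)$, obtaining the column triple $(P_1\op P_2,\ \lgyr[P_1,P_2],\ \rgyr[P_1,P_2])^t$, and then multiplying on the right by $(P_3,I_n,I_m)^t$ using the general Lorentz product law \eqref{anim98}. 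Since $O_{n,1}=\lgyr[P_1,P_2]$, $O_{m,1}=\rgyr[P_1,P_2]$, $O_{n,2}=I_n$, $O_{m,2}=I_m$, the first entry of the result is $(P_1\op P_2)\op \lgyr[P_1,P_2]P_3$; this is exactly \eqref{anim113}.

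Second, I would compute $B(P_1)(B(P_2)B(P_3))$ symmetrically: apply \eqref{anim93} to the inner product $B(P_2)B(P_3)$, giving $(P_2\op P_3,\ \lgyr[P_2,P_3],\ \rgyr[P_2,P_3])^t$, and then multiply $(P_1,I_n,I_m)^t$ on the left by this triple via \eqref{anim98}. Now the relevant parameters are $O_{m,2}=\rgyr[P_2,P_3]$ and $O_{n,1}=I_n$, so the first entry becomes $P_1\rgyr[P_2,P_3]\op(P_2\op P_3)$; this is \eqref{anim114}. Since matrix multiplication is associative, \eqref{anim112} holds, hence the two column triples in \eqref{anim113} and \eqref{anim114} coincide. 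Equating their first entries yields \eqref{biassocs}, as desired. (Equating the second and third entries would additionally yield the bi-gyration identities \eqref{anim115}, but only \eqref{biassocs} is asserted in the theorem.)

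There is essentially no obstacle here: the theorem is a formalization of the identity \eqref{biassoc} already derived in the surrounding text, so the proof is a one-line appeal to \eqref{anim112}\,--\,\eqref{anim114}. The only point requiring a word of care is that the two applications of the product law \eqref{anim98} must be set up with the bi-rotation parameters inserted in the correct slots — left rotation $\lgyr$ entering on the $O_{n}$ side and right rotation $\rgyr$ on the $O_{m}$ side — but this is exactly what \eqref{anim93} prescribes, so no new verification is needed. I would therefore keep the proof to a short sentence citing \eqref{anim112}, \eqref{anim113}, \eqref{anim114}, and the equality of the first entries of the extreme right-hand sides therein.
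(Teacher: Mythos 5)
Your proposal is correct and follows exactly the paper's own route: the theorem is stated as a formalization of \eqref{biassoc}, which the paper derives precisely by computing $(B(P_1)B(P_2))B(P_3)$ and $B(P_1)(B(P_2)B(P_3))$ via \eqref{anim95} and \eqref{anim98} and equating first entries using the associativity \eqref{anim112} of matrix multiplication. No gaps; your remark about placing $\lgyr$ in the $O_n$ slot and $\rgyr$ in the $O_m$ slot is the only point of care, and you handle it correctly.
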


Note that in the bi-gyroassociative law \eqref{biassocs},
$P_1$ and $P_2$ are grouped together on the left side, while
$P_2$ and $P_3$ are grouped together on the right side.

When $m=1$ right gyrations are trivial, $\rgyr[P_1,P_2]=I_{m=1}=1$. Hence,
in the special case when $m=1$, the bi-gyroassociative law
\eqref{biassocs} descends to the gyroassociative law
of gyrogroup theory found, for instance, in
\cite{mybook01,mybook02,mybook03,mybook04,mybook05,mybook06,mybook07}.

The bi-gyroassociative law gives rise to the left and right cancellation
laws in the following theorem.
\begin{ttheorem}\label{rtmdc} 
{\bf (Left and Right Cancellation Laws in $(\Rnm,\op)$).}
The bi-gyrogroupoid $(\Rnm,\op)$ possesses the left and right
cancellation laws
\begin{equation} \label{rtmdc1a}
P_2 = \om P_1\rgyr[P_1,P_2] \op (P_1\op P_2)
\end{equation}
and
\begin{equation} \label{rtmdc1b}
P_1 = (P_1\op P_2) \om \lgyr[P_1,P_2]P_2
\end{equation}
for all $P_1,P_2\in\Rnm$.
\end{ttheorem}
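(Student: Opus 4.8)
The plan is to derive both cancellation laws directly from the bi-gyroassociative law \eqref{biassocs} of Theorem \ref{tmkdrk} by specializing its three free parameters, using the trivial bi-gyrations of Corollary \ref{vftr} together with the neutral-element behaviour of $\op$ and the commuting relations \eqref{arnd} between $\om$ and bi-rotations.

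First I would establish the right cancellation law \eqref{rtmdc1b}. In \eqref{biassocs} substitute $P_3 = \om P_2$. On the right-hand side, $\rgyr[P_2,\om P_2] = I_m$ and $P_2 \op \om P_2 = 0_{n,m}$ by \eqref{anim95p3}, so that side reduces to $P_1 \op 0_{n,m}$, which equals $P_1$ by the first equation of Theorem \ref{smld} evaluated at $P_2 = 0_{n,m}$. On the left-hand side, $\lgyr[P_1,P_2](\om P_2) = \om\bigl(\lgyr[P_1,P_2]P_2\bigr)$ by \eqref{arnd}, so the left-hand side becomes $(P_1\op P_2)\om\lgyr[P_1,P_2]P_2$. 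Equating the two sides yields \eqref{rtmdc1b}.

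For the left cancellation law \eqref{rtmdc1a}, I would instead substitute $P_2 = \om P_1$ in \eqref{biassocs}. Then $P_1\op \om P_1 = 0_{n,m}$ and $\lgyr[P_1,\om P_1] = I_n$ by \eqref{anim95p3}, so the left-hand side collapses to $0_{n,m}\op P_3 = P_3$ (again by the first equation of Theorem \ref{smld}, now at $P_1 = 0_{n,m}$), giving $P_3 = P_1\rgyr[\om P_1,P_3]\op(\om P_1\op P_3)$. Relabelling $P_1 \mapsto \om P_1$ and $P_3 \mapsto P_2$, using $\om(\om P_1) = -(-P_1) = P_1$ and $(\om P_1)\rgyr[P_1,P_2] = \om P_1\rgyr[P_1,P_2]$ from \eqref{arnd}, produces exactly \eqref{rtmdc1a}.

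The argument is essentially bookkeeping, so there is no serious obstacle: the only real choice is the correct substitution ($P_3 = \om P_2$ for the right law, $P_2 = \om P_1$ for the left law), and one must have in hand the auxiliary facts $P\op 0_{n,m} = P = 0_{n,m}\op P$, the trivial bi-gyrations $\lgyr[P,\om P] = I_n$ and $\rgyr[P,\om P] = I_m$, and the $\om$--bi-rotation commuting relations. The one mild subtlety is that the final relabelling for the left law inverts $\om$, which is legitimate because $\om$ is an involution on $\Rnm$.
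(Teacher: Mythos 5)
Your proof is correct and follows essentially the same route as the paper: both cancellation laws are obtained by specializing the bi-gyroassociative law \eqref{biassocs} and invoking the trivial bi-gyrations \eqref{anim95p3}. The only cosmetic difference is that for the left cancellation law the paper substitutes $P_1=\om P_2$ (so the final renaming needs no involution), whereas you substitute $P_2=\om P_1$ and then relabel through $\om$; the two specializations are equivalent.
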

\begin{proof}
The left cancellation law \eqref{rtmdc1a}
follows from the bi-gyroassociative law
\eqref{biassocs} with $P_1=\om P_2$, noting that $\lgyr[\om P_2,P_2]$
is trivial by \eqref{anim95p3}, p.~\pageref{anim95p3}.
The right cancellation law \eqref{rtmdc1b}
follows from the bi-gyroassociative law
\eqref{biassocs} with $P_3=\om P_2$, noting that  $\rgyr[P_2,\om P_2]$
is trivial.
\end{proof} 

The bi-gyroassociative law gives rise to the left and right
bi-gyroassociative laws in the following theorem.
\begin{ttheorem}\label{tarifk} 
{\bf (Left and Right Bi-gyroassociative Law in $(\Rnm,\op)$).}
The bi-gyroaddition $\op$ in $\Rnm$ possesses the
left bi-gyroassociative law
\begin{equation} \label{biassocsl}
P_1\op (P_2 \op P_3)
=
(P_1 \rgyr[P_3,P_2] \op P_2) \op \lgyr[P_1\rgyr[P_3,P_2],P_2]P_3
\end{equation}
and the right bi-gyroassociative law
\begin{equation} \label{biassocsr}
(P_1\op P_2) \op P_3
=
P_1 \rgyr[P_2,\lgyr[P_2,P_1]P_3] \op (P_2 \op \lgyr[P_2,P_1]P_3)
\end{equation}
for all $P_1,P_2\in\Rnm$.
\end{ttheorem}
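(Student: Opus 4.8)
The plan is to obtain both laws as immediate corollaries of the bi-gyroassociative law of Theorem~\ref{tmkdrk} by relabelling one of the three parameters and then cancelling a left (respectively right) gyration by means of the bi-gyration inversion law of Theorem~\ref{tmdkert}. Throughout one uses only that $\lgyr[\cdot,\cdot]\in SO(n)$ and $\rgyr[\cdot,\cdot]\in SO(m)$ are invertible, with $\lgyr^{-1}[P_1,P_2]=\lgyr[P_2,P_1]$ and $\rgyr^{-1}[P_1,P_2]=\rgyr[P_2,P_1]$ by \eqref{madaysa}\,--\,\eqref{madaysb}.

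For the right bi-gyroassociative law \eqref{biassocsr}, I would start from the bi-gyroassociative law \eqref{biassocs}, written with generic parameters $Q_1,Q_2,Q_3\in\Rnm$, and specialize it to $Q_1=P_1$, $Q_2=P_2$, $Q_3=\lgyr[P_2,P_1]P_3$. Since $\lgyr[P_1,P_2]\lgyr[P_2,P_1]=\lgyr[P_1,P_2]\lgyr^{-1}[P_1,P_2]=I_n$, the left-hand side of \eqref{biassocs} becomes $(P_1\op P_2)\op P_3$, while its right-hand side becomes $P_1\rgyr[P_2,\lgyr[P_2,P_1]P_3]\op(P_2\op\lgyr[P_2,P_1]P_3)$, which is precisely \eqref{biassocsr}. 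The substitution is legitimate because for each fixed $P_1,P_2,P_3$ the matrix $\lgyr[P_2,P_1]P_3$ is a valid value of the free parameter $Q_3$, so the resulting identity holds for all $P_1,P_2,P_3\in\Rnm$.

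For the left bi-gyroassociative law \eqref{biassocsl}, I would instead specialize \eqref{biassocs} to $Q_1=P_1\rgyr[P_3,P_2]$, $Q_2=P_2$, $Q_3=P_3$. Using $\rgyr[P_3,P_2]\rgyr[P_2,P_3]=\rgyr^{-1}[P_2,P_3]\rgyr[P_2,P_3]=I_m$, the right-hand side of \eqref{biassocs} collapses to $P_1\op(P_2\op P_3)$, while its left-hand side reads $(P_1\rgyr[P_3,P_2]\op P_2)\op\lgyr[P_1\rgyr[P_3,P_2],P_2]P_3$; equating the two and reading the resulting equality from right to left gives \eqref{biassocsl}.

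There is essentially no genuine obstacle here: the content is carried entirely by Theorem~\ref{tmkdrk} together with the inversive symmetric property of bi-gyrations (Theorem~\ref{tmdkert}). The only point deserving a word of care is that each relabelling multiplies one parameter, on the appropriate side, by a fixed special orthogonal matrix; since such a multiplication is a bijection of $\Rnm$, the specialized identities are genuine identities in the original generic parameters $P_1,P_2,P_3$, and nothing is lost in passing to and from the substitution.
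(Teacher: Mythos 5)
Your proposal is correct and is essentially identical to the paper's own proof: the paper likewise obtains \eqref{biassocsl} from \eqref{biassocs} by replacing $P_1$ with $P_1\rgyr[P_3,P_2]$ and \eqref{biassocsr} by replacing $P_3$ with $\lgyr[P_2,P_1]P_3$, in each case invoking the bi-gyration inversion law. Your write-up merely spells out the cancellations $\lgyr[P_1,P_2]\lgyr[P_2,P_1]=I_n$ and $\rgyr[P_3,P_2]\rgyr[P_2,P_3]=I_m$ more explicitly than the paper does.
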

\begin{proof}
The left bi-gyroassociative law \eqref{biassocsl} is obtained from
the bi-gyroassociative law \eqref{biassocs} by replacing
$P_1$ by $P_1\rgyr[P_3,P_2]$ and noting the
bi-gyration inversion law \eqref{anim107}.
Similarly, the
right bi-gyroassociative law \eqref{biassocsr} is obtained from
the bi-gyroassociative law \eqref{biassocs} by replacing
$P_3$ by $\lgyr[P_2,P_1]P_3$ and noting the
bi-gyration inversion law \eqref{anim107}.
\end{proof} 

\section{Bi-gyration Reduction Properties}
\label{seccbe}

A reduction property of a gyration $\lgyr[P_1,P_2]$ or $rgyr[P_1,P_2]$
is a property enabling the gyration to be expressed as a gyration
that involves $P_1\op P_2$.
Several reduction properties are derived in
Subsects.~\ref{seccbe01}\,--\,\ref{seccbe04} below.

\subsection{Bi-gyration Reduction Properties I}
\label{seccbe01}

When $P_3=\om P_2$, \eqref{anim115} specializes to
\begin{equation} \label{akor01}
\begin{split}
\lgyr[P_1\op P_2,\om\lgyr[P_1,P_2]P_2]\lgyr[P_1,P_2] &=I_n
\\[4pt]
\rgyr[P_1,P_2]\rgyr[P_1\op P_2,\om\lgyr[P_1,P_2]P_2] &=I_m
\end{split}
\end{equation}
or, equivalently by bi-gyration inversion, \eqref{anim107},
\begin{equation} \label{akor02}
\begin{split}
\lgyr[P_1,P_2] &= \lgyr[\om\lgyr[P_1,P_2]P_2,P_1\op P_2]
\\[4pt] 
\rgyr[P_1,P_2] &= \rgyr[\om\lgyr[P_1,P_2]P_2,P_1\op P_2]
\,.
\end{split}
\end{equation}
Similarly, when $P_2=\om P_1$, \eqref{anim115} specializes to
\begin{equation} \label{akor03}
\begin{split}
I_n &= \lgyr[P_1\rgyr[\om P_1,P_3],\om P_1\op P_3]\lgyr[\om P_1,P_2]
\\[4pt]
I_m &= \rgyr[\om P_1,P_3]\rgyr[P_1\rgyr[\om P_1,P_3],\om P_1\op P_3]
\end{split}
\end{equation}
or, equivalently by bi-gyration inversion, \eqref{anim107},
and renaming $P_3$ as $\om P_2$,
\begin{equation} \label{akor04}
\begin{split}
\lgyr[P_1,P_2] &= \lgyr[P_1\op P_2,\om P_1\rgyr[P_1,P_2]]
\\[4pt]
\rgyr[P_1,P_2] &= \rgyr[P_1\op P_2,\om P_1\rgyr[P_1,P_2]]
\,.
\end{split}
\end{equation}

Formalizing the results in \eqref{akor02} and \eqref{akor04} we obtain
the following theorem.
\begin{ttheorem}\label{akorp}
{\bf (Left and Right Gyration Reduction Properties).}
\begin{equation} \label{akor02s}
\begin{split}
\lgyr[P_1,P_2] &= \lgyr[\om\lgyr[P_1,P_2]P_2,P_1\op P_2]
\\[4pt]
\rgyr[P_1,P_2] &= \rgyr[\om\lgyr[P_1,P_2]P_2,P_1\op P_2]
\,.
\end{split}
\end{equation}
and
\begin{equation} \label{akor04s}
\begin{split}
\lgyr[P_1,P_2] &= \lgyr[P_1\op P_2,\om P_1\rgyr[P_1,P_2]]
\\[4pt]
\rgyr[P_1,P_2] &= \rgyr[P_1\op P_2,\om P_1\rgyr[P_1,P_2]]
\,.
\end{split}
\end{equation}
for all $P_1,P_2\in\Rnm$.
\end{ttheorem}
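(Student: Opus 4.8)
The plan is to obtain both displayed identities as specializations of the bi-gyration identities \eqref{anim115}, which themselves arise from the associativity of the product of three bi-boosts. Each line of \eqref{anim115} equates a product of two left gyrations (resp.\ two right gyrations), so the idea is to choose the free third parameter so that one factor in each product collapses to the identity; the remaining single gyration, after one inversion, is exactly the desired reduction formula.

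First I would prove \eqref{akor02s} by setting $P_3=\om P_2$ in \eqref{anim115}. Then $P_2\op P_3 = P_2\om P_2 = 0_{n,m}$ with $\lgyr[P_2,\om P_2]=I_n$ and $\rgyr[P_2,\om P_2]=I_m$ by \eqref{anim95p3}, and $\lgyr[P_1,0_{n,m}]=I_n$, $\rgyr[P_1,0_{n,m}]=I_m$ by Corollary \ref{vftr}, so both right-hand sides of \eqref{anim115} reduce to the identity. This gives
\[
\lgyr[P_1\op P_2,\lgyr[P_1,P_2]\,\om P_2]\,\lgyr[P_1,P_2] = I_n
\]
and the analogous relation with $\rgyr$; rewriting $\lgyr[P_1,P_2]\,\om P_2 = \om\lgyr[P_1,P_2]P_2$ by \eqref{arnd}, these are precisely \eqref{akor01}. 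Isolating the factor $\lgyr[P_1,P_2]$ (resp.\ $\rgyr[P_1,P_2]$) and applying the bi-gyration inversion law \eqref{anim107}, $(\lgyr[A,B])^{-1}=\lgyr[B,A]$, then yields \eqref{akor02s}.

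Next I would prove \eqref{akor04s} by the symmetric choice $P_2=\om P_1$ in \eqref{anim115}. Now $P_1\op P_2 = 0_{n,m}$ with $\lgyr[P_1,\om P_1]=I_n$, $\rgyr[P_1,\om P_1]=I_m$, and the two \emph{left-hand} sides of \eqref{anim115} collapse via Corollary \ref{vftr}, leaving
\[
I_n = \lgyr[P_1\rgyr[\om P_1,P_3],\om P_1\op P_3]\,\lgyr[\om P_1,P_3]
\]
together with its $\rgyr$ analogue, i.e.\ \eqref{akor03}. Renaming $P_3$ as $\om P_2$, I would use the gyroautomorphic inverse property \eqref{anim94p1} to replace $\om P_1\op\om P_2 = \om(P_1\op P_2)$, the even property \eqref{anim94e2} to replace $\lgyr[\om P_1,\om P_2]=\lgyr[P_1,P_2]$ and $\rgyr[\om P_1,\om P_2]=\rgyr[P_1,P_2]$, then invert via \eqref{anim107} and simplify once more with \eqref{anim94e2} and \eqref{arnd} to reach \eqref{akor04s}.

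None of these steps is deep. The only thing that needs attention is the bookkeeping of the $\om$'s and of the left/right roles: applying the specialization $P_3=\om P_2$ (resp.\ $P_2=\om P_1$) to the correct line of \eqref{anim115}, invoking exactly the trivial-gyration facts established in Corollary \ref{vftr} and \eqref{anim95p3}, and using \eqref{anim107}, \eqref{anim94e2}, \eqref{anim94p1} and \eqref{arnd} with their arguments in the right order. So I expect the main obstacle to be purely notational rather than mathematical.
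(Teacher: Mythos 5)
Your proposal is correct and follows essentially the same route as the paper: the paper derives \eqref{akor02s} and \eqref{akor04s} in \eqref{akor01}--\eqref{akor04} precisely by specializing the bi-gyration identities \eqref{anim115} at $P_3=\om P_2$ and at $P_2=\om P_1$ (with the renaming $P_3\mapsto\om P_2$), using the trivial gyrations of Corollary \ref{vftr} and \eqref{anim95p3}, the inversion law \eqref{anim107}, and the even/gyroautomorphic-inverse properties. Your bookkeeping of the $\om$'s and of the left/right roles matches the paper's computation.
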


\subsection{Bi-gyration Reduction Properties II}
\label{seccbe02}

In general, the product of bi-boosts in a pseudo-orthogonal
group $SO(m,n)$ is a Lorentz transformation which is not a boost.
In some special cases, however, the product of bi-boosts is again
a bi-boost, as shown below.

Let $P_1,P_2\in\Rnm$, and let $J(P_1,P_2)$ be the
bi-boost symmetric product
\begin{equation} \label{imad02}
J(P_1,P_2) = B(P_1) B(P_2) B(P_1) =
\begin{pmatrix} P_1 \\[4pt] I_n \\[4pt] I_m \end{pmatrix}
\begin{pmatrix} P_2 \\[4pt] I_n \\[4pt] I_m \end{pmatrix}
\begin{pmatrix} P_1 \\[4pt] I_n \\[4pt] I_m \end{pmatrix}
\,,
\end{equation}
which is symmetric with respect to the central bi-boost factor
$(P_2,I_n,I_m)^t$.
Then, by the Lorentz product law \eqref{anim98},
\begin{equation} \label{imad03}
\begin{split}
J(P_1,P_2) &=
\begin{pmatrix} P_1\op P_2 \\[4pt] \lgyr[P_1,P_2]
               \\[4pt] \rgyr[P_1,P_2] \end{pmatrix}
\begin{pmatrix} P_1 \\[4pt] I_n \\[4pt] I_m \end{pmatrix}
\\[6pt] &=
\begin{pmatrix} (P_1\op P_2) \op \lgyr[P_1,P_2]P_1 \\[4pt]
\lgyr[P_1\op P_2, \lgyr[P_1,P_2]P_1] \lgyr[P_1,P_2] \\[4pt]
\rgyr[P_1,P_2]\rgyr[P_1\op P_2, \lgyr[P_1,P_2]P_1] \end{pmatrix}
=: \begin{pmatrix} P_3 \\[4pt] O_n \\[4pt] O_m \end{pmatrix}
\,.
\end{split}
\end{equation}

By means of \eqref{anim78}, p.~\pageref{anim78}, it is clear that
\begin{equation} \label{imad04}
J(P_1,P_2)^{-1} = J(-P_1,-P_2)
\,.
\end{equation}
Hence, by the
gyroautomorphic inverse property \eqref{anim94p1}, p.~\pageref{anim94p1},
and by the bi-gyration even property, \eqref{anim94p2}, p.~\pageref{anim94p2},
it is clear from \eqref{imad03} that
\begin{equation} \label{imad05}
J(P_1,P_2)^{-1} = J(-P_1,-P_2) =
\begin{pmatrix} -P_3 \\[4pt] O_n \\[4pt] O_m \end{pmatrix}
\,.
\end{equation}

But, it follows from the inverse Lorentz transformation
\eqref{anim83}, p.~\pageref{anim83}, that
\begin{equation} \label{imad06}
J(P_1,P_2)^{-1} =
\begin{pmatrix} -O_n^{-1} P_3 O_m^{-1} \\[4pt] O_n^{-1} \\[4pt] O_m^{-1}
\end{pmatrix}
\,.
\end{equation}

Comparing the right sides of \eqref{imad06} and \eqref{imad05},
we find that $O_n=O_n^{-1}$ and $O_m=O_m^{-1}$, implying
$O_n=I_n$ and $O_m=I_m$.
Hence, the bi-boost product $J(P_1,P_2)$ is, again, a bi-boost,
so that by \eqref{imad03},
\begin{equation} \label{imad07}
J(P_1,P_2) =
\begin{pmatrix} (P_1\op P_2) \op \lgyr[P_1,P_2]P_1 \\[4pt]
 I_n \\[4pt] I_m \end{pmatrix}
\,.
\end{equation}

Following \eqref{imad07} and \eqref{imad03} we have the
bi-gyration identities
\begin{equation} \label{imad08}
\begin{split}
\lgyr[P_1\op P_2, \lgyr[P_1,P_2]P_1] \lgyr[P_1,P_2] &= I_n
\\[4pt]
\rgyr[P_1,P_2]\rgyr[P_1\op P_2, \lgyr[P_1,P_2]P_1] &= I_m
\,,
\end{split}
\end{equation}
implying, by the bi-gyration inversion law \eqref{madays},
\begin{equation} \label{imad0r}
\begin{split}
\lgyr[P_1,P_2] &= \lgyr[\lgyr[P_1,P_2]P_1,P_1\op P_2] 
\\[4pt]
\rgyr[P_1,P_2] &= \rgyr[\lgyr[P_1,P_2]P_1,P_1\op P_2]
\,,
\end{split}
\end{equation}
for all $P_1,P_2\in\Rnm$.

The results in \eqref{imad07}\,--\,\eqref{imad08} can readily
be extended to the symmetric product of any number of bi-boosts that appear
symmetrically with respect to a central factor. Thus, for instance,
the symmetric bi-boost product $J$,
\begin{equation} \label{imad09}
J=B(P_k)B(P_{k-1}) \ldots B(P_2)B(P_1)B(P_0)B(P_1)B(P_2) \ldots B(P_{k-1})
B(P_k)
\,,
\end{equation}
is symmetric with respect to the central factor $B(P_0)$,
for any $k\in\Nb$, and all $P_i\in\Rnm$, $i=0,1,2,\ldots, k$.
In particular, the bi-boost product $J$ in \eqref{imad09} is, again,
a bi-boost.

We now manipulate the first bi-gyration identity in \eqref{imad08} into
an elegant form that will be elevated to the status of a
theorem in Theorem \ref{tmjkdf} below.
Let us consider the following chain of equations, which are numbered
for subsequent explanation.
\begin{equation} \label{mad10}
\begin{split}
I_n ~
&\overbrace{=\!\!=\!\!=}^{(1)} \hspace{0.2cm}
\lgyr[P_1\op P_2, \lgyr[P_1,P_2]P_1] \lgyr[P_1,P_2]
\\
&\overbrace{=\!\!=\!\!=}^{(2)} \hspace{0.2cm}
\lgyr[P_1,P_2]\lgyr[\lgyr[P_2,P_1](P_1\op P_2),P_1]
\\
&\overbrace{=\!\!=\!\!=}^{(3)} \hspace{0.2cm}
\lgyr[P_1,P_2]\lgyr[\lgyr[P_2,P_1](P_1\op P_2)\rgyr[P_2,P_1],P_1\rgyr[P_2,P_1]]
\\
&\overbrace{=\!\!=\!\!=}^{(4)} \hspace{0.2cm}
\lgyr[P_1,P_2]\lgyr[P_2\op P_1,P_1\rgyr[P_2,P_1]]
\,.
\end{split}
\end{equation}
Derivation of the numbered equalities in \eqref{mad10} follows:
\begin{enumerate}
\item \label{rdmbf01}
This equation is the first equation in the first bi-gyration identity
in \eqref{imad08}.
\item \label{rdmbf02}
Follows from the commuting relation \eqref{spc01}, p.~\pageref{spc01},
with $O_n=\lgyr[P_1,P_2]$,
noting that $\lgyr[P_2,P_1]=\lgyr^{-1}[P_1,P_2]$.
\item \label{rdmbf03}
Follows from the bi-gyration invariance relation
\eqref{bilva}, p.~\pageref{bilva}.
\item \label{rdmbf04}
Follows from the bi-gyrocommutative law
\eqref{anim109s}, p.~\pageref{anim109s}.
\end{enumerate}

By \eqref{mad10} and the bi-gyration inversion law \eqref{madays},
\begin{equation} \label{dmgdc1}
\lgyr[P_2,P_1] = \lgyr[P_2\op P_1,P_1\rgyr[P_2,P_1]]
\end{equation}

Renaming $(P_1,P_2)$ in \eqref{dmgdc1} as $(P_2,P_1)$, we obtain
the first identity in the following theorem.
\begin{ttheorem}\label{tmjkdf} 
{\bf (Left Gyration Reduction Properties).}
\begin{equation} \label{dmgdc1sa}
\lgyr[P_1,P_2] = \lgyr[P_1\op P_2,P_2\rgyr[P_1,P_2]]
\end{equation}
and
\begin{equation} \label{dmgdc1sb}
\lgyr[P_1,P_2] = \lgyr[P_1\rgyr[P_2,P_1],P_2\op P_1]
\end{equation}
for all $P_1,P_2\in\Rnm$.
\end{ttheorem}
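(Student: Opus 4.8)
The plan is to establish \eqref{dmgdc1sa} directly from the first bi-gyration identity in \eqref{imad08}, mimicking the chain of equations \eqref{mad10}, and then obtain \eqref{dmgdc1sb} from \eqref{dmgdc1sa} by a substitution using the bi-gyrocommutative law together with the invariance and inversion laws for bi-gyrations.

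For \eqref{dmgdc1sa}, I would start from the symmetric bi-boost identity $\lgyr[P_1\op P_2,\lgyr[P_1,P_2]P_1]\lgyr[P_1,P_2]=I_n$ of \eqref{imad08}. First apply the bi-gyration -- bi-rotation commuting relation \eqref{spc01} with $O_n=\lgyr[P_1,P_2]$ to move $\lgyr[P_1,P_2]$ to the left, rewriting the left gyration argument $P_1\op P_2$ as $\lgyr[P_2,P_1](P_1\op P_2)$ (using $\lgyr[P_2,P_1]=\lgyr^{-1}[P_1,P_2]$ from \eqref{madays}); this yields $\lgyr[P_1,P_2]\lgyr[\lgyr[P_2,P_1](P_1\op P_2),P_1]=I_n$. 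Next, insert a right rotation $\rgyr[P_2,P_1]$ into both slots of the inner left gyration using the bi-gyration invariance relation \eqref{bilva}, which is legitimate because right-multiplying both arguments by the same $O_m\in SO(m)$ leaves the left gyration unchanged. Then recognize $\lgyr[P_2,P_1](P_1\op P_2)\rgyr[P_2,P_1]$ as $P_2\op P_1$ via the bi-gyrocommutative law \eqref{anim109s} (with the pair $(P_1,P_2)$ played by $(P_2,P_1)$). This gives $\lgyr[P_1,P_2]\lgyr[P_2\op P_1,P_1\rgyr[P_2,P_1]]=I_n$, hence by the inversion law $\lgyr[P_2,P_1]=\lgyr[P_2\op P_1,P_1\rgyr[P_2,P_1]]$; renaming $(P_1,P_2)\mapsto(P_2,P_1)$ delivers \eqref{dmgdc1sa}.

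For \eqref{dmgdc1sb}, I would take \eqref{dmgdc1sa} and replace $P_2$ by $P_2\op P_1$ is not quite the right move; rather, I expect the cleanest route is to substitute into \eqref{dmgdc1sa} the identity $P_2\rgyr[P_1,P_2]$ reexpressed via the bi-gyrocommutative law, or alternatively to run the analogous chain starting from the \emph{other} symmetric-product arrangement. Concretely, applying \eqref{dmgdc1sa} with $(P_1,P_2)$ replaced by $(P_1\rgyr[P_2,P_1],P_2)$ and then simplifying $P_1\rgyr[P_2,P_1]\op P_2$ back to something involving $P_2\op P_1$ using bi-gyrocommutativity and the invariance relations should collapse the right-hand side to $\lgyr[P_1\rgyr[P_2,P_1],P_2\op P_1]$. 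I would also double-check consistency by noting that \eqref{dmgdc1sb} is exactly what one gets by transposing or by applying the $P\mapsto P^t$ symmetry discussed around \eqref{anim94}, since transposition swaps the roles of left and right gyrations.

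The main obstacle I anticipate is bookkeeping in the second identity: making sure the arguments are rewritten in exactly the form that the invariance relation \eqref{bilva} and the bi-gyrocommutative law \eqref{anim109s} can be applied to, without an off-by-one error in which pair plays which role. The first identity \eqref{dmgdc1sa} is essentially forced by the chain \eqref{mad10} already displayed; the delicate part is choosing the right substitution in \eqref{dmgdc1sa} (or the right starting symmetric product) so that \eqref{dmgdc1sb} falls out cleanly rather than through a long unwinding. Everything else is routine application of the commuting, invariance, and inversion lemmas established earlier in the excerpt.
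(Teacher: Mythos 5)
Your derivation of \eqref{dmgdc1sa} is correct and is essentially the paper's own argument: it is exactly the chain \eqref{mad10}, starting from the symmetric-product identity \eqref{imad08} and applying \eqref{spc01}, \eqref{bilva}, the bi-gyrocommutative law \eqref{anim109s}, and the inversion law \eqref{madays}, followed by the renaming $(P_1,P_2)\mapsto(P_2,P_1)$. That part needs no changes.

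For \eqref{dmgdc1sb}, however, you do not actually have a proof; you only sketch candidate strategies, and the one you state concretely does not work. Substituting $(P_1,P_2)\mapsto(P_1\rgyr[P_2,P_1],P_2)$ into \eqref{dmgdc1sa} changes the left-hand side to $\lgyr[P_1\rgyr[P_2,P_1],P_2]$, and this is \emph{not} equal to $\lgyr[P_1,P_2]$ in general: the invariance relation \eqref{bilva} lets you right-rotate \emph{both} arguments of a left gyration by the same $O_m$, not just one of them. The transposition symmetry you mention swaps left and right gyrations, so it would convert \eqref{dmgdc1sa} into a right-gyration identity rather than into \eqref{dmgdc1sb}. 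The missing observation is much simpler: the two identities are inverses of each other. Apply the inversion law $\lgyr^{-1}[A,B]=\lgyr[B,A]$ to both sides of \eqref{dmgdc1sa} to get
\begin{equation*}
\lgyr[P_2,P_1] = \lgyr[P_2\rgyr[P_1,P_2],\,P_1\op P_2]\,,
\end{equation*}
and then rename $(P_1,P_2)$ as $(P_2,P_1)$ to obtain \eqref{dmgdc1sb}. (Equivalently, invert both sides of the un-renamed identity \eqref{dmgdc1} directly.) This is exactly how the paper closes the proof; with this one-line repair your argument is complete.
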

\begin{proof}
The bi-gyration identity \eqref{dmgdc1sa} is identical with
\eqref{dmgdc1}.
The bi-gyration identity \eqref{dmgdc1sb} is obtained from \eqref{dmgdc1sa}
by applying the bi-gyration inversion law \eqref{madays} followed by
renaming $(P_1,P_2)$ as $(P_2,P_1)$.
\end{proof}

When $m=1$ right gyrations are trivial, $\rgyr[P_1,P_2]=I_{m=1}=1$. Hence,
in the special case when $m=1$, the bi-gyration reduction properties
\eqref{dmgdc1sa}\,--\,\eqref{dmgdc1sb} descend
to the gyration properties of gyrogroup theory found, for instance,
in \cite{mybook02}.

The bi-gyration identity \eqref{dmgdc1sb} involves both left and right
gyrations. We manipulate it into an identity that involves only
left gyrations in
the following chain of equations, which are numbered
for subsequent explanation.
\begin{equation} \label{akhak01z}
\begin{split}
\lgyr[P_1,P_2] ~
&\overbrace{=\!\!=\!\!=}^{(1)} \hspace{0.2cm}
\lgyr[P_1\rgyr[P_2,P_1],P_2\op P_1]
\\
&\overbrace{=\!\!=\!\!=}^{(2)} \hspace{0.2cm}
\lgyr[\lgyr[P_1,P_2]P_1\rgyr[P_2,P_1],\lgyr[P_1,P_2](P_2\op P_1)]
\\
&\hspace{-1.8cm} \overbrace{=\!\!=\!\!=}^{(3)} \hspace{0.2cm}
\lgyr[\lgyr[P_1,P_2]P_1\rgyr[P_2,P_1]\rgyr[P_1,P_2],
\lgyr[P_1,P_2](P_2\op P_1)\rgyr[P_1,P_2]]
\\
&\overbrace{=\!\!=\!\!=}^{(4)} \hspace{0.2cm}
\lgyr[\lgyr[P_1,P_2]P_1,P_1\op P_2]
\end{split}
\end{equation}
Derivation of the numbered equalities in \eqref{akhak01z} follows:
\begin{enumerate}
\item \label{vdhgr01z}
This equation is the bi-gyration identity \ref{dmgdc1sb}.
\item \label{vdhgr02z}
Follows from Result \eqref{urfka} of Corollary \ref{vfsv}, p.~\pageref{vfsv},
noting that, by Item \eqref{vdhgr01z}, the left gyrations
$\lgyr[P_1\rgyr[P_2,P_1],P_2\op P_1]$ and $\lgyr[P_1,P_2]$ commute since
they are equal.
\item \label{vdhgr03z}
Follows from \eqref{bilva}, p.~\pageref{bilvb}.
\item \label{vdhgr04z}
Follows from Item \eqref{vdhgr03z} by applying both the
bi-gyration inversion law \eqref{madays} and the
bi-gyrocommutative law \eqref{anim109s}, p.~\pageref{anim109s}.
\end{enumerate}

By means of the bi-gyration inversion law \eqref{madays},
the second bi-gyration identity in \eqref{imad08} gives rise to the
bi-gyration identity
\begin{equation} \label{edok01}
\rgyr[P_1,P_2] = \rgyr[\lgyr[P_1,P_2]P_1,P_1\op P_2]
\,,
\end{equation}
leading to the following theorem.
\begin{ttheorem}\label{tmjkdg}
{\bf (Right Gyration Reduction Properties).}
\begin{equation} \label{dmgdc1sc}
\rgyr[P_1,P_2] = \rgyr[\lgyr[P_1,P_2]P_1,P_1\op P_2]
\end{equation}
and
\begin{equation} \label{dmgdc1sd}
\rgyr[P_1,P_2] = \rgyr[P_2\op P_1,\lgyr[P_2,P_1]P_2]
\end{equation}
for all $P_1,P_2\in\Rnm$.
\end{ttheorem}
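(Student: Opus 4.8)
The plan is to obtain \eqref{dmgdc1sc} directly from the second bi-gyration identity in \eqref{imad08}, and then to derive \eqref{dmgdc1sd} from \eqref{dmgdc1sc} by the same two-move argument---bi-gyration inversion followed by a renaming of $(P_1,P_2)$ as $(P_2,P_1)$---that produced \eqref{dmgdc1sb} from \eqref{dmgdc1sa} in the proof of Theorem \ref{tmjkdf}. In effect, the proof should be as short as the proof of Theorem \ref{tmjkdf}, invoking only results already in hand.

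For the first identity, recall that the symmetric bi-boost product $J(P_1,P_2)=B(P_1)B(P_2)B(P_1)$ was shown in \eqref{imad07} to be again a bi-boost, which forced the right-gyration relation $\rgyr[P_1,P_2]\,\rgyr[P_1\op P_2,\lgyr[P_1,P_2]P_1]=I_m$ recorded in \eqref{imad08}. Applying the right-gyration inversion law \eqref{madaysb} to this relation gives $\rgyr[P_1\op P_2,\lgyr[P_1,P_2]P_1]=\rgyr^{-1}[P_1,P_2]=\rgyr[P_2,P_1]$, and swapping the two arguments once more via \eqref{madaysb} yields $\rgyr[P_1,P_2]=\rgyr[\lgyr[P_1,P_2]P_1,P_1\op P_2]$. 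This is precisely \eqref{edok01}, i.e.\ \eqref{dmgdc1sc}; since \eqref{edok01} already carries out exactly this step, there is nothing new to verify here.

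For the second identity, I would apply the right-gyration inversion law \eqref{madaysb} to both sides of \eqref{dmgdc1sc}. On the left, $\rgyr^{-1}[P_1,P_2]=\rgyr[P_2,P_1]$; on the right, $\rgyr^{-1}[\lgyr[P_1,P_2]P_1,P_1\op P_2]=\rgyr[P_1\op P_2,\lgyr[P_1,P_2]P_1]$, so that $\rgyr[P_2,P_1]=\rgyr[P_1\op P_2,\lgyr[P_1,P_2]P_1]$. Interchanging the names of $P_1$ and $P_2$ throughout, in both slots of every gyration bracket, converts this into $\rgyr[P_1,P_2]=\rgyr[P_2\op P_1,\lgyr[P_2,P_1]P_2]$, which is \eqref{dmgdc1sd}.

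No serious obstacle is expected: both parts are immediate consequences of facts already established---the symmetric-product computation \eqref{imad07}--\eqref{imad08} together with the bi-gyration inversion law \eqref{madays}. The only point calling for care is the bookkeeping under the inversion law: one must remember that inverting $\rgyr[X,Y]$ exchanges the two arguments to $\rgyr[Y,X]$ rather than negating them, and one must carry out the final renaming $(P_1,P_2)\mapsto(P_2,P_1)$ consistently, so that the $\lgyr$ index is renamed in step with its matrix argument.
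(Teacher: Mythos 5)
Your proposal is correct and follows the paper's own route exactly: the first identity is read off from the second relation in \eqref{imad08} via the right-gyration inversion law \eqref{madaysb} (this is precisely how the paper obtains \eqref{edok01}), and the second identity is obtained from the first by inversion followed by the renaming $(P_1,P_2)\mapsto(P_2,P_1)$, just as in the paper's proof. Nothing further is needed.
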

\begin{proof}
The bi-gyration identity \eqref{dmgdc1sc} is identical with
\eqref{edok01}.
The bi-gyration identity \eqref{dmgdc1sd} is obtained from \eqref{dmgdc1sc}
by applying the bi-gyration inversion law \eqref{madays} followed by
renaming $(P_1,P_2)$ as $(P_2,P_1)$.
\end{proof}

The bi-gyration identity \eqref{dmgdc1sd} involves both left and right
gyrations. We manipulate it into an identity that involves only
right gyrations in
the following chain of equations, which are numbered
for subsequent explanation.
\begin{equation} \label{akhak01}
\begin{split}
\rgyr[P_1,P_2] ~
&\overbrace{=\!\!=\!\!=}^{(1)} \hspace{0.2cm}
\rgyr[P_2\op P_1,\lgyr[P_2,P_1]P_2]
\\
&\overbrace{=\!\!=\!\!=}^{(2)} \hspace{0.2cm}
\rgyr[(P_2\op P_1)\rgyr[P_1,P_2],\lgyr[P_2,P_1]P_2\rgyr[P_1,P2]]
\\
&\hspace{-1.8cm} \overbrace{=\!\!=\!\!=}^{(3)} \hspace{0.2cm}
\rgyr[\lgyr[P_1,P_2](P_2\op P_1)\rgyr[P_1,P_2],
\lgyr[P_1,P_2] \lgyr[P_2,P_1]P_2\rgyr[P_1,P2]]
\\
&\overbrace{=\!\!=\!\!=}^{(4)} \hspace{0.2cm}
\rgyr[P_1\op P_2,P_2\rgyr[P_1,P_2]]
\end{split}
\end{equation}
Derivation of the numbered equalities in \eqref{akhak01} follows:
\begin{enumerate}
\item \label{mylbl01}
This equation is the bi-gyration identity \ref{dmgdc1sd}.
\item \label{mylbl02}
Follows from Result \eqref{urfkb} of Corollary \ref{vfsv}, p.~\pageref{vfsv},
noting that, by Item \ref{mylbl01}, the right gyrations
$\rgyr[P_2\op P_1,\lgyr[P_2,P_1]P_2]$ and $\rgyr[P_1,P_2]$ commute since
they are equal.
\item \label{mylbl03}
Follows from \eqref{bilvb}, p.~\pageref{bilvb}.
\item \label{mylbl04}
Follows from Item \eqref{mylbl03} by applying both the
bi-gyration inversion law \eqref{madays} and the
bi-gyrocommutative law \eqref{anim109s}, p.~\pageref{anim109s}.
\end{enumerate}

Formalizing the results in \eqref{akhak01z} and \eqref{akhak01} we
obtain the following theorem.
\begin{ttheorem}\label{tmjkgf}
{\bf (Bi-gyration Reduction Properties).}
\begin{equation} \label{y1sc1}
\lgyr[P_1,P_2] =
\lgyr[\lgyr[P_1,P_2]P_1,P_1\op P_2]
\end{equation}
and
\begin{equation} \label{y1sc2}
\rgyr[P_1,P_2] =
\rgyr[P_1\op P_2,P_2\rgyr[P_1,P_2]]
\end{equation}
for all $P_1,P_2\in\Rnm$.
\end{ttheorem}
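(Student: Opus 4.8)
The plan is to obtain both identities directly from the two equation chains \eqref{akhak01z} and \eqref{akhak01} that are already established immediately before the statement. First I would note that the chain \eqref{akhak01z} begins with $\lgyr[P_1,P_2]$ and terminates with $\lgyr[\lgyr[P_1,P_2]P_1,P_1\op P_2]$, so equating its extreme sides is precisely \eqref{y1sc1}. Likewise, the chain \eqref{akhak01} begins with $\rgyr[P_1,P_2]$ and terminates with $\rgyr[P_1\op P_2,P_2\rgyr[P_1,P_2]]$, so equating its extreme sides is precisely \eqref{y1sc2}. Thus Theorem~\ref{tmjkgf} is exactly the formalization announced in the sentence preceding it, and the proof is a two-line pointer to those chains.

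For completeness I would recall what powers those chains, since that is where the genuine work sits. Step~(2) in each chain invokes Corollary~\ref{vfsv}, using the observation that the line labelled (1) already exhibits the two gyrations in question as equal --- hence commuting --- which legitimizes rewriting $\lgyr[P_1,P_2]$ as $\lgyr[\lgyr[P_1,P_2]P_1,\lgyr[P_1,P_2](P_2\op P_1)]$ (and, for the right case, the analogous insertion of $\rgyr[P_1,P_2]$). Step~(3) then applies the bi-gyration invariance relation \eqref{bilva}--\eqref{bilvb} to absorb the spurious right (respectively left) rotation. Step~(4) combines the bi-gyration inversion law \eqref{madays} with the bi-gyrocommutative law \eqref{anim109s} to collapse the rotated copy of $P_2\op P_1$ back to $P_1\op P_2$, yielding the final reduced form.

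The anticipated main obstacle is therefore not in the present theorem at all, but in that mild bootstrapping use of Corollary~\ref{vfsv}: one must be careful to read the commutation hypothesis off the equality established in the previous line of the chain, rather than assume it independently. Since both chains \eqref{akhak01z} and \eqref{akhak01} are already written out in full, the proof of Theorem~\ref{tmjkgf} itself reduces to the statement that \eqref{y1sc1} is the extreme-sides equality of \eqref{akhak01z} and \eqref{y1sc2} is the extreme-sides equality of \eqref{akhak01}.
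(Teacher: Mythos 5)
Your proposal is correct and matches the paper exactly: the paper states Theorem \ref{tmjkgf} as the formalization of the chains \eqref{akhak01z} and \eqref{akhak01}, whose extreme sides are precisely \eqref{y1sc1} and \eqref{y1sc2}. Your recap of the internal steps (Corollary \ref{vfsv} via the equality-hence-commutation observation, the invariance relations \eqref{bilva}--\eqref{bilvb}, and the inversion plus bi-gyrocommutative laws) is also the same justification the paper gives for those chains.
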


\subsection{Bi-gyration Reduction Properties III}
\label{seccbe03}

As in Subsect.~\ref{seccbe02},
let $P_1,P_2\in\Rnm$, and let $J(P_1,P_2)$ be the
bi-boost symmetric product
\begin{equation} \label{imad02t}
J(P_1,P_2) =
\begin{pmatrix} P_1 \\[4pt] I_n \\[4pt] I_m \end{pmatrix}
\begin{pmatrix} P_2 \\[4pt] I_n \\[4pt] I_m \end{pmatrix}
\begin{pmatrix} P_1 \\[4pt] I_n \\[4pt] I_m \end{pmatrix}
\,,
\end{equation}
which is symmetric with respect to the central bi-boost factor
$(P_2,I_n,I_m)^t$.
Then
\begin{equation} \label{imad03t}
\begin{split}
J(P_1,P_2) &=
\begin{pmatrix} P_1 \\[4pt] I_n \\[4pt] I_m \end{pmatrix}
\begin{pmatrix} P_2\op P_1 \\[4pt] \lgyr[P_2,P_1]
               \\[4pt] \rgyr[P_2,P_1] \end{pmatrix}
\\[6pt] &=
 \begin{pmatrix} P_1 \rgyr[P_2,P_1] \op (P_2\op P_1) \\[4pt]
 \lgyr[P_1\rgyr[P_2,P_1],P_2\op P_1]\lgyr[P_2,P_1] \\[4pt]
 \rgyr[P_2,P_1]\rgyr[P_1\rgyr[P_2,P_1],P_2\op P_1] \end{pmatrix}
=: \begin{pmatrix} P_3 \\[4pt] O_n \\[4pt] O_m \end{pmatrix}
\,.
\end{split}
\end{equation}

By means of \eqref{anim78}, p.~\pageref{anim78}, it is clear that
\begin{equation} \label{imad04t}
J(P_1,P_2)^{-1} = J(-P_1,-P_2)
\,.
\end{equation}
Hence, by the
gyroautomorphic inverse property \eqref{anim94p1}, p.~\pageref{anim94p1},
and by the bi-gyration even property, \eqref{anim94p2}, p.~\pageref{anim94p2},
it is clear from \eqref{imad03t} that
\begin{equation} \label{imad05t}
J(P_1,P_2)^{-1} = J(-P_1,-P_2) =
\begin{pmatrix} -P_3 \\[4pt] O_n \\[4pt] O_m \end{pmatrix}
\,.
\end{equation}

But, it follows from the inverse Lorentz transformation
\eqref{anim83}, p.~\pageref{anim83}, that
\begin{equation} \label{imad06t}
J(P_1,P_2)^{-1} =
\begin{pmatrix} -O_n^{-1} P_3 O_m^{-1} \\[4pt] O_n^{-1} \\[4pt] O_m^{-1}
\end{pmatrix}
\,.
\end{equation}

Comparing the right sides of \eqref{imad06t} and \eqref{imad05t},
we find that $O_m=I_m$ and $O_n=I_n$.
Hence, the bi-boost product $J(P_1,P_2)$ is, again, a bi-boost,
so that by \eqref{imad03t},
\begin{equation} \label{imad07t}
J(P_1,P_2) =
\begin{pmatrix} P_1\rgyr[P_2,P_1]\op (P_2\op P_1) \\[4pt]
 I_n \\[4pt] I_m \end{pmatrix}
\,.
\end{equation}

Following \eqref{imad07t} and \eqref{imad03t} we have the
bi-gyration identities
\begin{equation} \label{imad08t}
\begin{split}
\lgyr[P_1\rgyr[P_2,P_1],P_2\op P_1]\lgyr[P_2,P_1] &= I_n
\\[4pt]
\rgyr[P_2,P_1]\rgyr[P_1\rgyr[P_2,P_1],P_2\op P_1] &= I_m
\,,
\end{split}
\end{equation}
implying
\begin{equation} \label{imad0rt}
\begin{split}
\lgyr[P_1,P_2] &= \lgyr[P_1\rgyr[P_2,P_1],P_2\op P_1] 
\\[4pt]
\rgyr[P_1,P_2] &= \rgyr[P_1\rgyr[P_2,P_1],P_2\op P_1]
\,,
\end{split}
\end{equation}
for all $P_1,P_2\in\Rnm$.

The first entries of \eqref{imad03} and \eqref{imad03t} imply
the interesting identity
\begin{equation} \label{kvrn}
(P_1\op P_2)\op\lgyr[P_1,P_2] P_1
=
P_1\rgyr[P_2,P_1] \op (P_2\op P_1)
\,.
\end{equation}

\subsection{Bi-gyration Reduction Properties IV}
\label{seccbe04}

Let $(P_1,I_n,I_m)^t$ and $(P_2,I_n,I_m)^t$ be two given bi-boosts
in the pseudo-Euclidean space $\Rmcn$, and let the bi-boost
$(X,O_n,O_m)^t$ be given by the equation
\begin{equation} \label{otma01}
\begin{pmatrix} X \\[4pt] O_n \\[4pt] O_m \end{pmatrix}
=
\begin{pmatrix} \om P_1 \\[4pt] I_n \\[4pt] I_m \end{pmatrix}^{-1}
\begin{pmatrix} P_2 \\[4pt] I_n \\[4pt] I_m \end{pmatrix}
\,.
\end{equation}

Then the following two consequences of \eqref{otma01} are equivalent,
\begin{equation} \label{otma02}
\begin{pmatrix} X \\[4pt] O_n \\[4pt] O_m \end{pmatrix}
=
\begin{pmatrix} P_1 \\[4pt] I_n \\[4pt] I_m \end{pmatrix}
\begin{pmatrix} P_2 \\[4pt] I_n \\[4pt] I_m \end{pmatrix}
=
\begin{pmatrix} P_1 \op P_2 \\[4pt] \lgyr[P_1,P_2] \\[4pt] \rgyr[P_1,P_2]
\end{pmatrix}
\end{equation}
and
\begin{equation} \label{otma03}
\begin{pmatrix} P_2 \\[4pt] I_n \\[4pt] I_m \end{pmatrix}
=
\begin{pmatrix} \om P_1 \\[4pt] I_n \\[4pt] I_m \end{pmatrix}
\begin{pmatrix} X \\[4pt] O_n \\[4pt] O_m \end{pmatrix}
=
\begin{pmatrix} \om P_1O_m\op X \\[4pt] \lgyr[\om P_1O_m,X]O_n
\\[4pt] O_m \rgyr[\om P_1O_m,X] \end{pmatrix}
\,.
\end{equation}

The matrix equation \eqref{otma03} in $\Rmcn$ implies
\begin{equation} \label{otma04}
\begin{split}
O_n &= \lgyr[X,\om P_1 O_m]
\\[4pt]
O_m &= \rgyr[X,\om P_1 O_m]
\,,
\end{split}
\end{equation}
so that, by the first entry of the matrix equation \eqref{otma02},
\begin{equation} \label{otma05}
\begin{split}
O_n &= \lgyr[P_1\op P_2,\om P_1 O_m]
\\[4pt]
O_m &= \rgyr[P_1\op P_2,\om P_1 O_m]
\,.
\end{split}
\end{equation}

Inserting $O_n$ and $O_m$ from the second and the third entries of
the matrix equation \eqref{otma02} into \eqref{otma05}, we obtain the
reduction properties
\begin{equation} \label{otma06}
\begin{split}
\lgyr[P_1,P_2] &= \lgyr[P_1\op P_2,\om P_1\rgyr[P_1,P_2]]
\\[4pt]
\rgyr[P_1,P_2] &= \rgyr[P_1\op P_2,\om P_1\rgyr[P_1,P_2]]
\,,
\end{split}
\end{equation}
thus recovering \eqref{akor04s}.

As a first example, the first reduction property in \eqref{otma06}
gives rise to the reduction property
\begin{equation} \label{regfvs}
\lgyr[P_1,P_2] = \lgyr[(P_1\op P_2)\rgyr[P_2,P_1],\om P_1]
\end{equation}
in the following chain of equations, which are numbered
for subsequent explanation.
\begin{equation} \label{asjer01s}
\begin{split}
\lgyr[P_1,P_2]
&\overbrace{=\!\!=\!\!=}^{(1)} \hspace{0.2cm}
\lgyr[P_1\op P_2,\om P_1\rgyr[P_1,P_2]]
\\
&\overbrace{=\!\!=\!\!=}^{(2)} \hspace{0.2cm}
\lgyr[(P_1\op P_2)\rgyr[P_2,P_1],\om P_1\rgyr[P_1,P_2]\rgyr[P_2,P_1]]
\\
&\overbrace{=\!\!=\!\!=}^{(3)} \hspace{0.2cm}
\lgyr[(P_1\op P_2)\rgyr[P_2,P_1],\om P_1]
\,.
\end{split}
\end{equation}
Derivation of the numbered equalities in \eqref{asjer01s} follows:
\begin{enumerate}
\item \label{gmask01}
This is the first identity in \eqref{otma06}.
\item \label{gmask02}
Item \eqref{gmask02} is derived from Item \eqref{gmask01} by applying
Identity \eqref{bilva} of Theorem \ref{spcowd}, p.~\pageref{spcowd},
with $O_m=\rgyr[P_2,P_1]$.
\item \label{gmask03}
Item \ref{gmask03} follows immediately from Item \ref{gmask02}
by the bi-gyration inversion law \eqref{anim107}, p.~\pageref{anim107}.
\end{enumerate}

As a second example, the second reduction property in \eqref{otma06}
gives rise to the reduction property
\begin{equation} \label{regfv}
\rgyr[P_1,P_2] = \rgyr[(P_1\op P_2)\rgyr[P_2,P_1],\om P_1]
\end{equation}
in the following chain of equations, which are numbered
for subsequent explanation.
\begin{equation} \label{asjer01}
\begin{split}
\rgyr[P_1,P_2]
&\overbrace{=\!\!=\!\!=}^{(1)} \hspace{0.2cm}
\rgyr[P_1\op P_2,\om P_1\rgyr[P_1,P_2]]
\\
&\overbrace{=\!\!=\!\!=}^{(2)} \hspace{0.2cm}
\rgyr[(P_1\op P_2)\rgyr[P_2,P_1],\om P_1\rgyr[P_1,P_2]\rgyr[P_2,P_1]]
\\
&\overbrace{=\!\!=\!\!=}^{(3)} \hspace{0.2cm}
\rgyr[(P_1\op P_2)\rgyr[P_2,P_1],\om P_1]
\,.
\end{split}
\end{equation}
Derivation of the numbered equalities in \eqref{asjer01} follows:
\begin{enumerate}
\item \label{fmask01}
This is the second identity in \eqref{otma06}.
\item \label{fmask02}
Being the inverse of $\rgyr[P_1,P_2]$, the right gyrations
$\rgyr[P_2,P_1]$ and $\rgyr[P_1,P_2]$ commute.
Hence, by Item \ref{fmask01}, the right gyrations
$\rgyr[P_2,P_1]$ and
$\rgyr[P_1\op P_2,\om P_1\rgyr[P_1,P_2]]$ commute.
The latter commutativity, in turn, implies Item \ref{fmask02}
by Corollary \ref{vfsv}, p.~\pageref{vfsv},
with $O_m=\rgyr[P_2,P_1]$.
\item \label{fmask03}
Item \ref{fmask03} follows immediately from Item \ref{fmask02}
by the bi-gyration inversion law \eqref{anim107}, p.~\pageref{anim107}.
\end{enumerate}

Formalizing the results in \eqref{regfvs} and \eqref{regfv}
we obtain the following interesting theorem.
\begin{ttheorem}\label{fkthmd} 
For all $P_1,P_2\in\Rnm$,
\begin{equation} \label{fkthmg}
\begin{split}
\lgyr[P_1,P_2] &= \lgyr[P_1\opp P_2,\omp P_1]
\\
\rgyr[P_1,P_2] &= \rgyr[P_1\opp P_2,\omp P_1]
\,,
\end{split}
\end{equation}
where $\opp$ is a binary operation in $\Rnm$ given by
\begin{equation} \label{fkthmg1}
P_1\opp P_2 = (P_1\op P_2)\rgyr[P_2,P_1]
\,.
\end{equation}
\end{ttheorem}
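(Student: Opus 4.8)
The plan is to recognize \eqref{fkthmg} as a compact repackaging of the two reduction properties already derived, in \eqref{regfvs} and in \eqref{regfv}, so that the proof consists of one substitution for the left gyration and one for the right gyration. First I would recall the reduction properties \eqref{otma06},
\[
\lgyr[P_1,P_2]=\lgyr[P_1\op P_2,\om P_1\rgyr[P_1,P_2]],\qquad
\rgyr[P_1,P_2]=\rgyr[P_1\op P_2,\om P_1\rgyr[P_1,P_2]],
\]
which are obtained in Subsect.~\ref{seccbe04} (equivalently, they are \eqref{akor04s}).

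Next I would right-multiply both arguments of each gyration by $\rgyr[P_2,P_1]\in SO(m)$. For the left gyration this leaves the value unchanged by the bi-gyration invariance relation \eqref{bilva} of Theorem \ref{spcowd}, used with $O_m=\rgyr[P_2,P_1]$. For the right gyration the same step is justified through Corollary \ref{vfsv}: by the bi-gyration inversion law \eqref{anim107} one has $\rgyr[P_2,P_1]=\rgyr^{-1}[P_1,P_2]$, so $\rgyr[P_2,P_1]$ commutes with $\rgyr[P_1,P_2]$, hence with the equal matrix $\rgyr[P_1\op P_2,\om P_1\rgyr[P_1,P_2]]$, and Corollary \ref{vfsv} then gives the claimed invariance. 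After the multiplication the first argument becomes $(P_1\op P_2)\rgyr[P_2,P_1]$ and the second becomes $\om P_1\rgyr[P_1,P_2]\rgyr[P_2,P_1]=\om P_1$, again by \eqref{anim107}; this is precisely \eqref{regfvs} and \eqref{regfv}.

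Finally I would apply the definition \eqref{fkthmg1}, $P_1\opp P_2=(P_1\op P_2)\rgyr[P_2,P_1]$, together with the fact that the $\opp$-inverse satisfies $\omp P_1=\om P_1$ — which one checks from $P_1\opp(\om P_1)=(P_1\op(\om P_1))\rgyr[\om P_1,P_1]=0_{n,m}$ via \eqref{anim95p3} — to rewrite \eqref{regfvs}--\eqref{regfv} as \eqref{fkthmg}. I do not anticipate a real obstacle; the only delicate point is bookkeeping: the left gyration is handled directly by the invariance Theorem \ref{spcowd} while the right gyration requires the commutation detour through Corollary \ref{vfsv}, and one must respect the order of matrix composition when right-multiplying the gyration arguments by an element of $SO(m)$.
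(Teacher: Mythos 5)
Your proposal is correct and follows essentially the same route as the paper: start from the reduction properties \eqref{otma06}, right-multiply both gyration arguments by $\rgyr[P_2,P_1]$ using \eqref{bilva} for the left gyration and Corollary \ref{vfsv} (via the commutation of $\rgyr[P_2,P_1]$ with its inverse) for the right gyration, simplify the second argument by the inversion law \eqref{anim107}, and then invoke the definitions of $\opp$ and $\omp$. You even flag the same asymmetry the paper exploits, so nothing further is needed.
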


It follows from \eqref{fkthmg1} that
\begin{equation} \label{madkit00}
\omp P=\om P=-P
\,.
\end{equation}
for all $P\in\Rnm$.

\section{Bi-gyrogroups} \label{sedint}

Theorem \ref{fkthmd} indicates that it will
prove useful to replace the binary operation $\op$ in $\Rnm$
by the {\it bi-gyrogroup operation} $\opp$ in $\Rnm$ in Def.~\ref{hdken}
below.

\begin{ddefinition}\label{hdken}
{\bf (Bi-gyrogroup Operation, Bi-gyrogroups).}
Let $(\Rnm,\op)$ be a bi-gyrogroupoid (Def.~\ref{dfkhvb}, p.~\pageref{dfkhvb}).
The bi-gyrogroup binary operation $\opp$ in $\Rnm$ is given by
\begin{equation} \label{hdken01}
P_1\opp P_2 = (P_1\op P_2)\rgyr[P_2,P_1]
\end{equation}
for all $P_1,P_2\in\Rnm$.
The resulting groupoid $(\Rnm,\opp)$ is called a bi-gyrogroup.
\end{ddefinition}

Following \eqref{hdken01} we have, by right gyration inversion,
\eqref{madaysb}, p.~\pageref{madaysb},
\begin{equation} \label{hdken01s}
P_1\op P_2 = (P_1\opp P_2)\rgyr[P_1,P_2]
\end{equation}
for all $P_1,P_2\in\Rnm$.

We will find in the sequel that the bi-gyrogroups $(\Rnm,\opp)$,
rather than the bi-gyrogroupoids $(\Rnm,\op)$, form the desired
elegant algebraic structure that the parametrization of the
Lorentz group $SO(m,n)$ encodes. The point is that we must
study bi-gyrogroupoids in order to pave the way to the
study of bi-gyrogroups.

The bi-gyrogroup operation $\opp$ is determined in \eqref{hdken01}
in terms of the bi-gyrogroupoid operation $\op$ and a right gyration.
It can be determined equivalently by $\op$ and a left gyration as well.
Indeed, it follows from \eqref{hdken01} and the
bi-gyrocommutative law \eqref{anim109s}, p.~\pageref{anim109s},
in $(\Rnm,\op)$ that
\begin{equation} \label{hdken03t}
P_1\opp P_2 = \lgyr[P_1,P_2](P_2\op P_1)
\end{equation}
and hence
\begin{equation} \label{hdken03u}
P_1\op P_2 = \lgyr[P_1,P_2](P_2\opp P_1)
\end{equation}
for all $P_1,P_2\in\Rnm$.

Following Def.~\ref{hdken} of the bi-gyrogroup binary operation $\opp$
in $\Rnm$, it proves useful to express the bi-gyrations of $\Rnm$
in terms of $\opp$ rather than $\op$, in the following theorem.

\begin{ttheorem}\label{smldt}
{\bf (Bi-gyrogroup Bi-gyrations).}
The left and right bi-gyration in the
bi-gyrogroup $(\Rnm,\opp)$ are given by the equations
\begin{equation} \label{anim94u}
\begin{split}
\lgyr[P_1,P_2] &=
\sqrt{I_n+(P_1\opp P_2)(P_1\opp P_2)^t}^{~-1}
\left\{ P_1P_2^t + \sqrt{I_n+P_1P_1^t}\sqrt{I_n+P_2P_2^t} \right\}
\\[6pt]
\rgyr[P_1,P_2] &=
\left\{ P_1^tP_2 + \sqrt{I_m+P_1^tP_1}\sqrt{I_m+P_2^tP_2} \right\}
\sqrt{I_m+(P_2\opp P_1)^t(P_2\opp P_1)}^{~-1}
\end{split}
\end{equation}
for all $P_1,P_2\in\Rnm$.
\end{ttheorem}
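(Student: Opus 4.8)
The plan is to deduce Theorem \ref{smldt} directly from Theorem \ref{smld}, p.~\pageref{smld}, by noting that the formulas in \eqref{anim94u} differ from those in \eqref{anim94s} only in that $P_1\op P_2$ has been replaced by $P_1\opp P_2$ inside the square root of the left-gyration formula, and by $P_2\opp P_1$ inside the square root of the right-gyration formula. Hence it suffices to show that each of these two replacements leaves unchanged the symmetric matrix that appears under the square-root sign.

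For the left gyration I would start from Definition \ref{hdken}, which gives $P_1\opp P_2 = (P_1\op P_2)\rgyr[P_2,P_1]$. Since $\rgyr[P_2,P_1]\in SO(n)$ is orthogonal (indeed $\rgyr[P_2,P_1]\in SO(m)$, but what matters is $\rgyr[P_2,P_1]\rgyr[P_2,P_1]^t=I_m$), it follows that
\[
(P_1\opp P_2)(P_1\opp P_2)^t = (P_1\op P_2)\rgyr[P_2,P_1]\rgyr[P_2,P_1]^t(P_1\op P_2)^t = (P_1\op P_2)(P_1\op P_2)^t,
\]
so that $\sqrt{I_n+(P_1\opp P_2)(P_1\opp P_2)^t} = \sqrt{I_n+(P_1\op P_2)(P_1\op P_2)^t}$. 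Substituting this into the left-gyration equation of \eqref{anim94s} yields the first equation of \eqref{anim94u}.

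For the right gyration I would instead use \eqref{hdken03t}, which gives $P_1\opp P_2 = \lgyr[P_1,P_2](P_2\op P_1)$ and hence, after the swap $(P_1,P_2)\mapsto(P_2,P_1)$, $P_2\opp P_1 = \lgyr[P_2,P_1](P_1\op P_2)$. Since $\lgyr[P_2,P_1]\in SO(n)$ is orthogonal,
\[
(P_2\opp P_1)^t(P_2\opp P_1) = (P_1\op P_2)^t\lgyr[P_2,P_1]^t\lgyr[P_2,P_1](P_1\op P_2) = (P_1\op P_2)^t(P_1\op P_2),
\]
so that $\sqrt{I_m+(P_2\opp P_1)^t(P_2\opp P_1)} = \sqrt{I_m+(P_1\op P_2)^t(P_1\op P_2)}$. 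Substituting this into the right-gyration equation of \eqref{anim94s} yields the second equation of \eqref{anim94u}.

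There is essentially no obstacle here: the proof is two applications of the single observation that an orthogonal change of basis does not affect the associated Gram matrix. The only point that demands care is keeping track of the asymmetry between the two cases — the left gyration is expressed through $P_1\opp P_2$ together with the orthogonal factor $\rgyr[P_2,P_1]$ applied on the right, whereas the right gyration is expressed through $P_2\opp P_1$ together with the orthogonal factor $\lgyr[P_2,P_1]$ applied on the left — so one must pair each gyration with the correct form of the bi-gyrogroup operation, namely \eqref{hdken01} for the former and \eqref{hdken03t} for the latter.
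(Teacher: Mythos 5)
Your proposal is correct and follows essentially the same route as the paper: the paper's proof likewise reduces the claim to Theorem \ref{smld} via the relations between $\op$ and $\opp$ (it cites \eqref{hdken01s} and \eqref{hdken03u}, the inverse forms of the identities you use) together with the observation that the orthogonal gyration factors, $\rgyr\in SO(m)$ and $\lgyr\in SO(n)$, drop out of the Gram matrices under the square roots.
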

\begin{proof}
Noting that $\rgyrab\in SO(m)$, the first equation in
\eqref{anim94u} follows from \eqref{hdken01s}
and the second equation in \eqref{anim94s}, p.~\pageref{anim94s}.
Similarly, noting that $\lgyrab\in SO(n)$, the second equation in 
\eqref{anim94u} follows from \eqref{hdken03u}
and the third equation in \eqref{anim94s}.
\end{proof}

Note that the first equation in \eqref{anim94u} and the second
equation in \eqref{anim94s}, p.~\pageref{anim94s}, are identically the
same equations with a single exception: the binary operation $\op$
in \eqref{anim94s} is replaced by the binary operation $\opp$
in \eqref{anim94u}.
Note also that the order of gyrosummation in the second equation
in \eqref{anim94u} is $P_2\opp P_1$ rather than $P_1\opp P_2$.

Clearly, the identity element of the groupoid $(\Rnm,\opp)$ is $0_{n,m}$,
and the inverse $\omp P$ of $P\in(\Rnm,\opp)$ is
$\omp P=\om P=-P$, as stated in \eqref{madkit00},
noting that $\rgyr[\om P,P]=I_m$ is trivial according to
Corollary \ref{vftr}, p.~\pageref{vftr}.

Following a study of bi-gyrogroups in the sequel we will present an
axiomatic approach to bi-gyrogroups, which forms a natural extension
of the axiomatic approach to groups and to gyrogroups.

\begin{ttheorem}\label{dokre}
{\bf (Bi-gyrogroup Left and Right Automorphisms).}
\begin{equation} \label{adin105s}
\begin{split}
O_n(P_1 \opp P_2) &= O_nP_1 \opp O_nP_2
\\
(P_1 \opp P_2)O_m &= P_1O_m \opp P_2O_m
\\
O_n(P_1 \opp P_2)O_m &= O_nP_1O_m \opp O_nP_2O_m
\end{split}
\end{equation}
for all $P_1,P_2\in\Rnm$, $O_n\in SO(n)$ and $O_m\in SO(m)$.
\end{ttheorem}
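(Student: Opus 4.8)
The plan is to obtain each of the three identities in \eqref{adin105s} from the corresponding fact for the bi-gyrogroupoid operation $\op$, already proved in Theorem \ref{dokrd}, by translating between $\op$ and $\opp$ via the defining relation \eqref{hdken01} (equivalently, via its left-gyration form \eqref{hdken03t}) together with the behaviour of bi-gyrations under rotations recorded in Theorems \ref{spcomw} and \ref{spcowd}.

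For the first identity I would start from
\begin{equation}
O_n(P_1\opp P_2)=O_n\bigl((P_1\op P_2)\rgyr[P_2,P_1]\bigr)
=\bigl(O_n(P_1\op P_2)\bigr)\rgyr[P_2,P_1]
\,,
\end{equation}
using linearity of $O_n$. Then the first identity of Theorem \ref{dokrd} rewrites $O_n(P_1\op P_2)$ as $O_nP_1\op O_nP_2$, and the bi-gyration invariance relation \eqref{bilvb}, applied with $(P_1,P_2)$ replaced by $(P_2,P_1)$, gives $\rgyr[P_2,P_1]=\rgyr[O_nP_2,O_nP_1]$; the right-hand side is then $(O_nP_1\op O_nP_2)\rgyr[O_nP_2,O_nP_1]=O_nP_1\opp O_nP_2$ by \eqref{hdken01} once more.

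For the second identity it is cleaner to use the left-gyration form of $\opp$. Writing $P_1\opp P_2=\lgyr[P_1,P_2](P_2\op P_1)$ from \eqref{hdken03t}, right-multiplying by $O_m$, and applying the second identity of Theorem \ref{dokrd} to $(P_2\op P_1)O_m$ together with the invariance relation \eqref{bilva}, which gives $\lgyr[P_1,P_2]=\lgyr[P_1O_m,P_2O_m]$, yields $\lgyr[P_1O_m,P_2O_m](P_2O_m\op P_1O_m)=P_1O_m\opp P_2O_m$. The third identity then follows by composing the first two, applying the second with $P_i$ replaced by $O_nP_i$.

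I do not expect a genuine obstacle here: the statement is a routine consequence of results already in hand. The only point requiring a little care is the choice of representation of $\opp$, so that the rotation being pushed through meets a single gyration rather than both, and whether to absorb that rotation into the gyration using the invariance relations of Theorem \ref{spcowd} or to move it across using the commuting relations \eqref{spc01}--\eqref{spc02} of Theorem \ref{spcomw}; both routes work, and the invariance route is the shortest. (Alternatively one could use the $\rgyr$-form \eqref{hdken01} throughout and invoke \eqref{spc02} for the right-automorphism identity, which is equally short.)
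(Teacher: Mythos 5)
Your proposal is correct and follows the paper's strategy exactly: translate $\opp$ into $\op$ plus a gyration via Definition \ref{hdken}, invoke Theorem \ref{dokrd} for $\op$, absorb or transport the gyration, and obtain the third identity by composing the first two. Your first identity is word-for-word the paper's chain \eqref{anthv1}; for the second identity the paper stays with the right-gyration form \eqref{hdken01} and pushes $O_m$ through $\rgyr[P_2,P_1]$ via the commuting relation \eqref{spc02}, whereas you switch to the left-gyration form \eqref{hdken03t} and use the invariance relation \eqref{bilva} instead — a legitimate and equally short variant that you yourself note is interchangeable with the paper's route.
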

\begin{proof}
The first identity in \eqref{adin105s} is proved
in the following chain of equations, which are numbered
for subsequent explanation.
\begin{equation} \label{anthv1}
\begin{split}
O_n(P_1\opp P_2)
&\overbrace{=\!\!=\!\!=}^{(1)} \hspace{0.2cm}
O_n(P_1\op P_2)\rgyr[P_2,P_1]
\\
&\overbrace{=\!\!=\!\!=}^{(2)} \hspace{0.2cm}
(O_nP_1\op O_nP_2)\rgyr[P_2,P_1]
\\
&\overbrace{=\!\!=\!\!=}^{(3)} \hspace{0.2cm}
(O_nP_1\op O_nP_2)\rgyr[O_nP_2,O_nP_1]
\\
&\overbrace{=\!\!=\!\!=}^{(4)} \hspace{0.2cm}
O_nP_1\opp O_nP_2
\,.
\end{split}
\end{equation}
Derivation of the numbered equalities in \eqref{anthv1} follows:
\begin{enumerate}
\item \label{nkvdf01}
Follows from Def.~\ref{hdken}.
\item \label{nkvdf02}
Follows from the first identity in \eqref{adin105}, p.~\pageref{adin105}.
\item \label{nkvdf03}
Follows from \eqref{bilvb}, p.~\pageref{bilvb}.
\item \label{nkvdf04}
Follows from Def.~\ref{hdken}.
\end{enumerate}

The second identity in \eqref{adin105s} is proved
in the following chain of equations, which are numbered
for subsequent explanation.
\begin{equation} \label{anthv2}
\begin{split}
(P_1\opp P_2)O_m
&\overbrace{=\!\!=\!\!=}^{(1)} \hspace{0.2cm}
(P_1\op P_2)\rgyr[P_2,P_1]O_m
\\
&\overbrace{=\!\!=\!\!=}^{(2)} \hspace{0.2cm}
(P_1\op P_2)O_m\rgyr[P_2O_m,P_1O_m]
\\
&\overbrace{=\!\!=\!\!=}^{(3)} \hspace{0.2cm}
(P_1O_m\op P_2O_m)\rgyr[P_2O_m,P_1O_m]
\\
&\overbrace{=\!\!=\!\!=}^{(4)} \hspace{0.2cm}
P_1O_m\opp P_2O_m
\,.
\end{split}
\end{equation}
Derivation of the numbered equalities in \eqref{anthv2} follows:
\begin{enumerate}
\item \label{nkvdg01}
Follows from Def.~\ref{hdken}.
\item \label{nkvdg02}
Follows from in \eqref{spc02}, p.~\pageref{spc02}.
\item \label{nkvdg03}
Follows from the second identity in \eqref{adin105}, p.~\pageref{adin105}.
\item \label{nkvdg04}
Follows from Def.~\ref{hdken}.
\end{enumerate}

Finally, the third identity in \eqref{adin105s} follows immediately
from the first two identities in \eqref{adin105s}.
\end{proof}

The maps
$O_n: P\mapsto O_nP$,
$O_m: P\mapsto PO_m$, and
$(O_n,O_m): P\mapsto O_nPO_m$ of $\Rnm$ onto itself
are bijective. Hence, by Theorem \ref{dokre},
\begin{enumerate}
\item
the map $O_n: P\mapsto O_nP$ is a left automorphism of the
bi-gyrogroup $(\Rnm,\opp)$;
\item
the map $O_m: P\mapsto PO_m$ is a right automorphism of the
bi-gyrogroup $(\Rnm,\opp)$; and
\item
the map $(O_n,O_m): P\mapsto O_nPO_m$ is a bi-automorphism of the
bi-gyrogroup $(\Rnm,\opp)$ (A bi-automorphism being an automorphism
consisting of a left and a right automorphism).
\end{enumerate}

\begin{ttheorem}\label{maskita}
{\bf (Left Cancellation law in $(\Rnm,\opp)$).}
The bi-gyrogroup $(\Rnm,\opp)$ possesses the left cancellation law
\begin{equation} \label{markit01}
\omp P_1\opp(P_1\opp P_2) = P_2
\,,
\end{equation}
for all $P_1,P_2\in\Rnm$.
\end{ttheorem}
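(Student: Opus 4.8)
The plan is to mimic the proof of the left cancellation law \eqref{rtmdc1b} in the bi-gyrogroupoid $(\Rnm,\op)$ (Theorem \ref{rtmdc}), but now carried out inside the bi-gyrogroup $(\Rnm,\opp)$. The natural tool is a left bi-gyroassociative law for $\opp$, analogous to \eqref{biassocsl}, together with a degeneracy of the relevant bi-gyration when one summand is $\omp P_1 = \om P_1$. First I would translate the bi-gyroassociative law \eqref{biassocs} from $\op$ into $\opp$, using the two conversion identities $P_1\op P_2 = (P_1\opp P_2)\rgyr[P_1,P_2]$ from \eqref{hdken01s} and $P_1\opp P_2=\lgyr[P_1,P_2](P_2\op P_1)$ from \eqref{hdken03t}, along with the automorphism properties \eqref{adin105s} of $\opp$ and the bi-gyration reduction and invariance identities already established (Theorems \ref{tmjkgf}, \ref{tmjkdf}, \ref{tmjkdg}, \ref{spcowd}).

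Concretely, I would compute $\omp P_1 \opp (P_1\opp P_2)$ directly. Write $\omp P_1 = \om P_1 = -P_1$ and expand the outer $\opp$ via \eqref{hdken01}: $(-P_1)\opp(P_1\opp P_2) = \big((-P_1)\op(P_1\opp P_2)\big)\rgyr[P_1\opp P_2,-P_1]$. Then convert the inner $P_1\opp P_2$ to $\op$-form using \eqref{hdken03u}, namely $P_1\opp P_2 = \lgyr[P_1,P_2](P_2\op P_1)$ — or, more convenient here, keep it as $(P_1\op P_2)\rgyr[P_2,P_1]$ from \eqref{hdken01} and push the $\op$-cancellation law \eqref{rtmdc1a} through. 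The key algebraic fact I expect to need is that the relevant left and right gyrations generated by $-P_1$ and something of the form $P_1\op(\cdots)$ reduce to the identity; these are exactly the trivial-bi-gyration facts \eqref{anim95p3} (i.e.\ $\lgyr[\om P_1,P_1]=I_n$, $\rgyr[\om P_1,P_1]=I_m$) combined with the reduction properties of Theorem \ref{akorp}/\ref{tmjkgf}. Once the gyrations collapse, the remaining identity is the ordinary left cancellation $\om P_1\op(P_1\op P_2)=P_2$-type statement for $\op$, already available as \eqref{rtmdc1a} after rearrangement.

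An alternative, possibly cleaner route: first establish the left bi-gyroassociative law for $\opp$ in the form
\[
P_1\opp(P_2\opp P_3) = (P_1\opp P_2)\opp \lgyr[P_1,P_2]P_3,
\]
which — unlike the $\op$-version \eqref{biassocsl} — should have \emph{no} right-gyration correction on $P_1$, precisely because the definition of $\opp$ in \eqref{hdken01} and the bi-gyrocommutative law \eqref{anim109s} are engineered so that $\opp$ behaves like a genuine (left) gyrogroup operation. Granting that, set $P_2=\omp P_1=-P_1$: the left-hand side becomes $\omp P_1\opp(P_1\opp P_3)$ — wait, I need the grouping $(\omp P_1\opp P_1)\opp\lgyr[\omp P_1,P_1]P_3$; since $\omp P_1\opp P_1 = 0_{n,m}$ and $\lgyr[\omp P_1,P_1]=I_n$ by Corollary \ref{vftr}/\eqref{anim95p3}, the right side collapses to $0_{n,m}\opp P_3 = P_3$, giving exactly \eqref{markit01} after relabeling $P_3\to P_2$.

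\textbf{The main obstacle} is verifying that the left bi-gyroassociative law for $\opp$ indeed carries no right-gyration correction on its first entry; this requires carefully tracking how the right-gyration factors introduced by converting $\opp\leftrightarrow\op$ cancel against the $\rgyr[P_2,P_3]$ term appearing in \eqref{biassocsl}, using the bi-gyration invariance relation \eqref{bilva}, the reduction identities \eqref{dmgdc1sa}--\eqref{dmgdc1sd} and \eqref{y1sc1}--\eqref{y1sc2}, and the inversion law \eqref{madays}. If that cancellation proves delicate, I would fall back on the direct computation of the second paragraph, where the only nontrivial inputs are \eqref{rtmdc1a}, the definition \eqref{hdken01}, and the triviality facts \eqref{anim95p3} — a short and self-contained chain of equalities.
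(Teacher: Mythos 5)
Your fallback route (the direct computation in your second paragraph) is essentially the paper's own proof: expand the outer $\opp$ via \eqref{hdken01}, write the inner $P_1\opp P_2$ as $(P_1\op P_2)\rgyr[P_2,P_1]$, identify the resulting right gyration, distribute it with the right-automorphism property \eqref{adin105}, and finish with the $\op$-left-cancellation law \eqref{rtmdc1a}. One correction to your expectations there: the gyration $\rgyr[(P_1\op P_2)\rgyr[P_2,P_1],\om P_1]$ that appears does \emph{not} reduce to the identity. The reduction property actually needed is \eqref{regfv}, which identifies it with $\rgyr[P_1,P_2]$; this surviving factor is then absorbed exactly by the $\rgyr[P_1,P_2]$ already present in \eqref{rtmdc1a}, namely $P_2=\om P_1\rgyr[P_1,P_2]\op(P_1\op P_2)$, after one application of the right gyration inversion law. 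So nothing collapses to $I_m$; rather, two nontrivial gyrations match up.

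Your preferred ``cleaner'' route is problematic for two reasons. First, the left bi-gyroassociative law for $\opp$ that you posit is missing a gyration: the correct statement (Theorem \ref{thmasso45}) is $P_1\opp(P_2\opp X)=(P_1\opp P_2)\opp\lgyr[P_1,P_2]X\rgyr[P_2,P_1]$ --- what disappears relative to \eqref{biassocsl} is the correction on the \emph{first} summand, not the right gyration acting on $X$. (For your particular substitution both gyrations are trivial, so this alone would not sink the argument.) Second, and more importantly for the logical order, the paper proves that associative law \emph{using} the left cancellation law \eqref{markit01}: see the passage from \eqref{dunu20} to \eqref{bkre}, where $(P_1\opp P_2)$ is left-gyroadded to both sides and \eqref{markit01} is invoked. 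Deriving \eqref{markit01} from the $\opp$-associative law is therefore circular unless you first re-derive that law independently from the $\op$-version, which is a substantial undertaking the paper does not carry out. Stick with the direct computation.
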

\begin{proof}
The proof is provided by the
following chain of equations, which are numbered
for subsequent explanation.
\begin{equation} \label{madkit02}
\begin{split}
\omp P_1\opp(P_1\opp P_2)
&\overbrace{=\!\!=\!\!=}^{(1)} \hspace{0.2cm}
\om P_1\opp(P_1\op P_2)\rgyr[P_2,P_1]
\\
&\overbrace{=\!\!=\!\!=}^{(2)} \hspace{0.2cm}
(\om P_1\op(P_1\op P_2)\rgyr[P_2,P_1])\rgyr[(P_1\op P_2)\rgyr[P_2,P_1],\om P_1]
\\
&\overbrace{=\!\!=\!\!=}^{(3)} \hspace{0.2cm}
(\om P_1\op(P_1\op P_2) \rgyr[P_2,P_1])\rgyr[P_1,P_2]
\\
&\overbrace{=\!\!=\!\!=}^{(4)} \hspace{0.2cm}
\om P_1\rgyr[P_1,P_2] \op (P_1\op P_2)
\\
&\overbrace{=\!\!=\!\!=}^{(5)} \hspace{0.2cm}
P_2
\,.
\end{split}
\end{equation}
Derivation of the numbered equalities in \eqref{madkit02} follows:
\begin{enumerate}
\item \label{dnhst01}
Follows from \eqref{madkit00} and from
Def.~\ref{hdken} of $\opp$ applied to $P_1\opp P_2$.
\item \label{dnhst02}
Follows from Def.~\ref{hdken} of $\opp$.
\item \label{dnhst03}
Follows from \eqref{regfv}, p.~\pageref{regfv}.
\item \label{dnhst04}
Follows from the second identity in \eqref{adin105} of
Theorem \ref{dokrd}, p.~\pageref{dokrd}, applied with $O_m=\rgyr[P_2,P_1]$,
and from the bi-gyration inversion law \eqref{anim107}, p.~\pageref{anim107}.
\item \label{dnhst05}
Follows from the left cancellation law \eqref{rtmdc1b}, p.~\pageref{rtmdc1b},
in $(\Rnm,\op)$.
\end{enumerate}
\end{proof} 

\begin{llemma}\label{mdf1}
Let $O_n\in SO(n)$ and $O_m\in SO(m)$, $n,m\in\Nb$. Then,
\begin{equation} \label{flgd1}
O_nPO_m = P
\end{equation}
for all $P\in\Rnm$ if and only if $O_n=I_n$ and $O_m=I_m$.
\end{llemma}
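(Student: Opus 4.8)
The plan is to probe the identity $O_nPO_m=P$ with the standard matrix units. Write $e_i\in\Rb^n$ and $e_j\in\Rb^m$ for the standard basis column vectors, so that $E_{ij}:=e_ie_j^{\,t}\in\Rnm$ is the matrix with a single nonzero entry, equal to $1$, in position $(i,j)$. The ``if'' direction is immediate, since $I_nPI_m=P$ for every $P\in\Rnm$. For the ``only if'' direction I would assume $O_nPO_m=P$ for all $P\in\Rnm$ and specialize to $P=E_{ij}$, which gives
\[
(O_ne_i)(O_m^{\,t}e_j)^t \;=\; O_n\,e_ie_j^{\,t}\,O_m \;=\; e_ie_j^{\,t}
\]
for all $i\in\{1,\dots,n\}$ and all $j\in\{1,\dots,m\}$.

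Next I would read off the structural constraints. For each fixed pair $(i,j)$ the right-hand side $e_ie_j^{\,t}$ is a nonzero rank-one matrix whose column space is $\mathrm{span}\{e_i\}$ and whose row space is $\mathrm{span}\{e_j^{\,t}\}$; since $O_ne_i$ and $O_m^{\,t}e_j$ are nonzero (they are images of basis vectors under invertible maps), comparing column and row spaces forces $O_ne_i$ to be a scalar multiple of $e_i$ and $O_m^{\,t}e_j$ to be a scalar multiple of $e_j$. Because $O_n$ and $O_m$ are orthogonal they preserve norms, so in fact $O_ne_i=\varepsilon_ie_i$ and $O_m^{\,t}e_j=\delta_je_j$ with $\varepsilon_i,\delta_j\in\{+1,-1\}$; as $i$ and $j$ range over all indices this says precisely that $O_n$ and $O_m$ are diagonal matrices with diagonal entries $\pm1$. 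Comparing the $(i,j)$ entry on the two sides of the displayed identity yields $\varepsilon_i\delta_j=1$ for every pair $(i,j)$. Fixing $j$ and varying $i$ shows that all the $\varepsilon_i$ coincide with a common value $\varepsilon\in\{+1,-1\}$, and then all the $\delta_j$ equal $\varepsilon^{-1}=\varepsilon$. Hence $O_n=\varepsilon I_n$ and $O_m=\varepsilon I_m$.

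Finally I would invoke special orthogonality to fix the sign: from $1=\det O_n=\varepsilon^{\,n}$ and $1=\det O_m=\varepsilon^{\,m}$ one concludes $\varepsilon=1$, so $O_n=I_n$ and $O_m=I_m$, as claimed. The whole argument is elementary, and the only step I expect a reader to want spelled out is this last one: the matrix-unit computation by itself delivers only $(O_n,O_m)=(\varepsilon I_n,\varepsilon I_m)$, and it is the determinant normalization $\det O_n=\det O_m=1$ that rules out $\varepsilon=-1$. (Equivalently, one may note that $P\mapsto O_nPO_m$ is the action of $O_n\otimes O_m^{\,t}$ on $\Rb^n\otimes\Rb^m\cong\Rnm$, whose kernel is $\{(\varepsilon I_n,\varepsilon I_m):\varepsilon=\pm1\}$; imposing $\det=1$ again selects $\varepsilon=1$.)
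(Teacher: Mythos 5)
Your argument follows essentially the same route as the paper's: both probe the hypothesis with the matrix units $E_{ij}=e_ie_j^{\,t}$ (the paper's $P_{ij}$) and conclude that $O_n$ and $O_m$ must be scalar matrices $\varepsilon I_n$, $\varepsilon I_m$ with a common sign $\varepsilon=\pm1$. The paper reads this off by writing out and comparing the entries of $O_n^tP_{ij}$ and $P_{ij}O_m$ directly, whereas you argue via column/row spaces of the rank-one products together with norm preservation; that difference is cosmetic.

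However, the one step you single out as the only one "a reader would want spelled out" is precisely where both your proof and the paper's break down. From $\det O_n=\varepsilon^{\,n}=1$ and $\det O_m=\varepsilon^{\,m}=1$ you conclude $\varepsilon=1$, but this follows only if at least one of $m,n$ is odd. If $m$ and $n$ are both even, then $-I_n\in SO(n)$ and $-I_m\in SO(m)$, and $(-I_n)P(-I_m)=P$ for every $P\in\Rnm$, so the "only if" direction of the lemma is actually false in that case. Your own parenthetical correctly identifies the kernel of the action as $\{(\varepsilon I_n,\varepsilon I_m):\varepsilon=\pm1\}$; no determinant normalization can shrink it further when both dimensions are even. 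The paper's proof has the identical gap (it asserts $\lambda=1$ "since $O_n\in SO(n)$ and $O_m\in SO(m)$"), so this is a defect of the lemma as stated rather than of your argument in particular, but it should be flagged: one must either assume $m$ or $n$ odd, or weaken the conclusion to $(O_n,O_m)=\pm(I_n,I_m)$, with corresponding adjustments to Lemma \ref{mdf2} and the uniqueness claims that depend on it.
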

\begin{proof}
If $O_n=I_n$ and $O_m=I_m$, then obviously \eqref{flgd1}
is true for all $P\in\Rnm$.

Conversely, assuming $O_nPO_m = P$, or equivalently,
\begin{equation} \label{flgd01}
O_n^tP = PO_m
\,,
\end{equation}
$O_n\in SO(n)$, $O_m\in SO(m)$,
for all $P\in\Rnm$, we will show that $O_n=I_n$ and $O_m=I_m$.

Let
\begin{equation} \label{flgd02}
O_n^t =
\begin{pmatrix} a_{11} & ~~\ldots ~~ & a_{1n} \\[4pt]
                \vdots &     ~       &        \\[4pt]
                a_{n1} & ~~\ldots ~~ & a_{nn} \end{pmatrix}
\in SO(n)
\end{equation}
and
\begin{equation} \label{flgd03}
O_m =
\begin{pmatrix} b_{11} & ~~\ldots ~~ & b_{1m} \\[4pt]
                \vdots &     ~       &        \\[4pt]
                b_{m1} & ~~\ldots ~~ & b_{mm} \end{pmatrix}
\in SO(m)
\,.
\end{equation}

Furthermore, let $P_{ij}\in\Rnm$ be the matrix
\begin{equation} \label{flgd04}
P_{ij} =
\begin{pmatrix} 0 & \ldots & 0 & \dots & 0 \\[4pt]
                \vdots &     ~       &        \\[4pt]
                0 & \ldots & 1 & \dots & 0 \\[4pt]
                \vdots &     ~       &        \\[4pt]
                0 & \ldots & 0 & \dots & 0 \end{pmatrix}
\in\Rnm
\end{equation}
with one at the $ij$-entry and zeros elsewhere,
$i=1,\ldots,n$, $j=1,\ldots,m$.

Then the matrix product $O_n^tP_{ij}$,
\begin{equation} \label{flgd05}
O_n^tP_{ij} =
\begin{pmatrix} 0 & \ldots & a_{1i} & \ldots & 0 \\[4pt]
                \vdots &     ~       &        \\[4pt]
                0 & \ldots & a_{ii} & \ldots & 0 \\[4pt]
                \vdots &     ~       &        \\[4pt]
                0 & \ldots & a_{ni} & \ldots & 0 \end{pmatrix}
\in\Rnm
\,,
\end{equation}
is a matrix with $j$-th column $(a_{1i},\ldots,a_{ii},\ldots,a_{ni})^t$
and zeros elsewhere. Shown explicitly in \eqref{flgd05} are the first
column, the $j$-th column and the $m$-th column of the matrix $O_n^tP_{ij}$,
along with its first row, $i$-th row and $n$-th row.

Similarly, the matrix product $P_{ij}O_m$,
\begin{equation} \label{flgd06}
P_{ij}O_m =
\begin{pmatrix} 0 & \ldots &    0   & \ldots & 0 \\[4pt]
                \vdots &     ~       &        \\[4pt]
            b_{1j}& \ldots & b_{jj} & \ldots & b_{jm} \\[4pt]
                \vdots &     ~       &        \\[4pt]
                0 & \ldots &    0   & \ldots & 0 \end{pmatrix}
\in\Rnm
\,,
\end{equation}
is a matrix with $i$-th row $(b_{1j},\ldots,b_{jj},\ldots,b_{jm})$
and zeros elsewhere. Shown explicitly in \eqref{flgd06} are the first
column, the $j$-th column and the $m$-th column of the matrix $P_{ij}O_m$,
along with its first row, $i$-th row and $n$-th row.

It follows from \eqref{flgd01} that \eqref{flgd05} and \eqref{flgd06}
are equal.
Hence, by comparing entries of the matrices in
\eqref{flgd05}\,--\,\eqref{flgd06} we have
\begin{equation} \label{flgd07}
a_{ii}=b_{jj}
\end{equation}
and
\begin{equation} \label{flgd08}
\begin{split}
a_{ii_1} &= 0
\\[4pt]
b_{jj_1} &= 0
\end{split}
\end{equation}
for all $i,i_1=1,\ldots,n$,
and all $j,j_1=1,\ldots,m$, $i_1\ne i$ and $j_1\ne j$.

By \eqref{flgd07}\,--\,\eqref{flgd08} and
\eqref{flgd02}\,--\,\eqref{flgd03} we have
\begin{equation} \label{flgd09}
\begin{split}
O_n^t &= \lambda I_n
\\[4pt]
O_m   &= \lambda I_m
\,.
\end{split}
\end{equation}
Moreover, $\lambda=1$ since, by assumption,
$O_n\in SO(n)$ and $O_m\in SO(m)$. Hence, $O_n=I_n$ and $O_m=I_m$,
as desired.
\end{proof}

The following Lemma \ref{mdf2} is an immediate consequence of
Lemma \ref{mdf1}.
\begin{llemma}\label{mdf2}
Let $O_{n,k}\in SO(n)$ and $O_{m,k}\in SO(m)$, $n,m\in\Nb$, $k=1,2$. Then,
\begin{equation} \label{flge1}
O_{n,1}PO_{m,1} = O_{n,2}PO_{m,2}
\end{equation}
for all $P\in\Rnm$ if and only if $O_{n,1}=O_{n,2}$ and $O_{m,1}=O_{m,2}$.
\end{llemma}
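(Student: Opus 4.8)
The plan is to derive Lemma \ref{mdf2} directly from Lemma \ref{mdf1} by absorbing one bi-rotation into the other via an invertible change of variables; there is essentially no obstacle beyond careful bookkeeping of left versus right factors.

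First I would dispose of the trivial direction: if $O_{n,1}=O_{n,2}$ and $O_{m,1}=O_{m,2}$, then the two sides of \eqref{flge1} are literally the same expression, so the identity holds for every $P\in\Rnm$. For the converse, I would assume $O_{n,1}PO_{m,1}=O_{n,2}PO_{m,2}$ for all $P\in\Rnm$. Since $O_{n,2}\in SO(n)$ and $O_{m,2}\in SO(m)$ are invertible, with $O_{n,2}^{-1}=O_{n,2}^t\in SO(n)$ and $O_{m,2}^{-1}=O_{m,2}^t\in SO(m)$, I would left-multiply by $O_{n,2}^{-1}$ and right-multiply by $O_{m,2}^{-1}$ to obtain
\begin{equation} \label{flge02}
(O_{n,2}^{-1}O_{n,1})\,P\,(O_{m,1}O_{m,2}^{-1}) = P
\end{equation}
for all $P\in\Rnm$. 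Setting $O_n:=O_{n,2}^{-1}O_{n,1}$ and $O_m:=O_{m,1}O_{m,2}^{-1}$, and using that $SO(n)$ and $SO(m)$ are closed under products and inverses so that $O_n\in SO(n)$ and $O_m\in SO(m)$, equation \eqref{flge02} becomes $O_nPO_m=P$ for all $P\in\Rnm$.

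Now I would invoke Lemma \ref{mdf1}: the identity $O_nPO_m=P$ for all $P\in\Rnm$ with $O_n\in SO(n)$, $O_m\in SO(m)$ forces $O_n=I_n$ and $O_m=I_m$. Unravelling the definitions, this says $O_{n,2}^{-1}O_{n,1}=I_n$ and $O_{m,1}O_{m,2}^{-1}=I_m$; left-multiplying the first equation by $O_{n,2}$ and right-multiplying the second by $O_{m,2}$ yields $O_{n,1}=O_{n,2}$ and $O_{m,1}=O_{m,2}$, which completes the proof. The only point requiring any attention is that one bi-rotation must be inverted on the left and the other on the right (reflecting the two-sided action $P\mapsto O_nPO_m$), together with the elementary remark that this keeps all matrices inside the special orthogonal groups, so the reduction to Lemma \ref{mdf1} is legitimate.
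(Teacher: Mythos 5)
Your proof is correct and is exactly the reduction the paper intends: the paper simply declares Lemma \ref{mdf2} to be an immediate consequence of Lemma \ref{mdf1}, and your left- and right-multiplication by $O_{n,2}^{-1}$ and $O_{m,2}^{-1}$ supplies precisely the missing bookkeeping.
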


\section{Bi-gyration Decomposition and Polar Decomposition}
\label{trgd}

In this section we present manipulations that lead to the
bi-gyroassociative and bi-gyrocommutative laws of the binary operation $\opp$
in Theorems \ref{thmasso45} and \ref{timbk} below.

The product of two bi-boosts, $B(P_1)$ and $B(P_2)$, $P_1,P_2\in\Rnm$,
is a Lorentz transformation $\Lambda=B(P_1)B(P_2)\in SO(m,n)$
that need not be a bi-boost. As such, it
possesses the bi-gyration decomposition \eqref{anim66a}, p.~\pageref{anim66a},
as well as the polar decomposition \eqref{anim66c}, p.~\pageref{anim66c},
along with the bi-gyration in \eqref{anim92}, p.~\pageref{anim92}.

The bi-gyration decomposition of the bi-boost product gives rise to
the binary operation $\op$ in $\Rnm$ as follows.
By \eqref{anim91}, p.~\pageref{anim91}, the bi-boost product
$B(P_1)B(P_2)$ possesses the unique bi-gyration decomposition
\eqref{anim93},
\begin{equation} \label{elfd01}
B(P_1)B(P_2) = \rho(\rgyr[P_1,P_2])B(P_{12})\lambda(\lgyr[P_1,P_2])
\end{equation}
where, by Def.~\ref{dfkhvb}, p.~\pageref{dfkhvb},
\begin{equation} \label{elfd01k}
P_{12} =: P_1\op P_2
\,.
\end{equation}

Similarly,
the polar decomposition of the bi-boost product gives rise to
the binary operation $\opp$ in $\Rnm$ as follows.
By \eqref{anim66c}, p.~\pageref{anim66c}, and
\eqref{anim92}, the bi-boost product
$B(P_1)B(P_2)$ possesses the unique polar decomposition
\begin{equation} \label{elfd03}
B(P_1)B(P_2) = B(P_{12}^{\prime\prime})\rho(\rgyr[P_1,P_2])
\lambda(\lgyr[P_1,P_2])
\end{equation}
where, by definition,
\begin{equation} \label{elfd03k}
P_{12}^{\prime\prime} =: P_1\op^{\prime\prime} P_2
\,.
\end{equation}

In order to see the relationship between the binary operations
$\op$ and $\opp$ in $\Rnm$ we employ the second identity
in \eqref{anim80}, p.~\pageref{anim80},
with $O_m=\rgyr[P_1,P_2]$ to manipulate the polar decomposition
\eqref{elfd03} into the equivalent bi-gyration decomposition,
\begin{equation} \label{elfd04}
\begin{split}
B(P_1)B(P_2) &=
B(P_{12}^{\prime\prime})\rho(\rgyr[P_1,P_2])\lambda(\lgyr[P_1,P_2])
\\ &=
\rho(\rgyr[P_1,P_2]) B(P_{12}^{\prime\prime}\rgyr[P_1,P_2])
\lambda(\lgyr[P_1,P_2])
\,.
\end{split}
\end{equation}

Comparing \eqref{elfd04} with \eqref{elfd01}, noting that the
bi-gyration decomposition is unique, we find that
$P_{12}^{\prime\prime}\rgyr[P_1,P_2] = P_{12}$,
or equivalently, by means of \eqref{elfd01k} and \eqref{elfd03k},
\begin{equation} \label{elfd05}
\begin{split}
P_1\op^{\prime\prime} P_2 &= (P_1\op P_2)\rgyr[P_2,P1]
\\[4pt]
P_1\op P_2 &= (P_1\op^{\prime\prime} P_2)\rgyr[P_1,P2]
\end{split}
\end{equation}
in agreement with the definition of $\opp$ in Def.~\ref{hdken}.
Hence,
\begin{equation} \label{eqf250}
\op^{\prime\prime} = \opp
\,.
\end{equation}
It follows from \eqref{eqf250} that the bi-gyrogroup operation
$\opp=\op^{\prime\prime}$
in Def.~\ref{hdken} stems from the polar decomposition \eqref{elfd03},
just as the bi-gyrogroupoid operation $\op$ stems from the
bi-gyration decomposition \eqref{elfd01}.

It is convenient here to temporarily use the short notation
\begin{equation} \label{dunu00}
\begin{split}
\lab &:= \lgyr[P_1,P_2]
\\[4pt]
\rab &:= \rgyr[P_1,P_2]
\end{split}
\end{equation}
in intermediate results, turning back to the full notation in
final results, noting that $L_{P_1,P_2}^{-1}=L_{P_2,P_1}$
and $R_{P_1,P_2}^{-1}=R_{P_2,P_1}$.

Identities \eqref{elfd03} and \eqref{elfd05} imply
\begin{equation} \label{dunu01}
\rho(\rab)\lambda(\lab) = B(-(P_1\op P_2)\rba) B(P_1)B(P_2)
\,.
\end{equation}

Identities \eqref{elfd01} and \eqref{elfd05} imply,
by right gyration inversion,
the following chain of equations, which are numbered
for subsequent explanation.
\begin{equation} \label{dunu02}
\begin{split}
B(P_1\op P_2)\lambda(\lab)
&\overbrace{=\!\!=\!\!=}^{(1)} \hspace{0.2cm}
\rho(\rba) B(P_1)B(P_2)
\\
&\overbrace{=\!\!=\!\!=}^{(2)} \hspace{0.2cm}
B(P_1\rab)\rho(\rba) B(P_2)
\\
&\overbrace{=\!\!=\!\!=}^{(3)} \hspace{0.2cm}
B(P_1\rab) B(P_2\rab) \rho(\rba)
\,.
\end{split}
\end{equation}
Derivation of the numbered equalities in \eqref{dunu02} follows:
\begin{enumerate}
\item \label{vdy01}
This identity is obtained from \eqref{elfd01} and \eqref{elfd01k} by
using the right gyration inversion law in \eqref{anim107} according
to which $\rho(\rgyr[P_1,P_2])^{-1}=\rho(\rba)$.
\item \label{vdy02}
Follows from Item \eqref{vdy01} by an application to $B(P_1)$ of the
second identity in
\eqref{anim80}, p.~\pageref{anim80}, with $O_m=\rba$, noting the
right gyration inversion law, $R_{P_1,P_2} R_{P_2,P_1} = I_m$.
\item \label{vdy03}
Like Item \eqref{vdy02}, Item \eqref{vdy03}
follows from an application to $B(P_2)$ of the second identity in
\eqref{anim80}, p.~\pageref{anim80}, with $O_m=\rba$, noting the
right gyration inversion law, $R_{P_1,P_2} R_{P_2,P_1} = I_m$.
\end{enumerate}

By means of \eqref{dunu02} and right gyration inversion we have
\begin{equation} \label{dunu03}
B(P_1\op P_2) = B(P_1\rab)B(P_2\rab)\rho(\rba)\lambda(\lba)
\end{equation}
so that, by bi-boost inversion,
\begin{equation} \label{dunu04}
\rho(\rba)\lambda(\lba) = B(\om P_2\rab)B(\om P_1\rab) B(P_1\op P_2)
\,.
\end{equation}

Inverting both sides of \eqref{dunu04} and noting that the
matrices $\lambda(\lab)$ and $\rho(\rab)$ commute, we obtain the
identity
\begin{equation} \label{dunu05}
\rho(\rab)\lambda(\lab) = B(\om (P_1\op P_2)) B(P_1\rab)B(P_2\rab)
\,.
\end{equation}

Comparing \eqref{dunu01} and \eqref{dunu05}, we obtain the identity
\begin{equation} \label{dunu06}
\begin{split}
B(\om(P_1\op P_2)\rba) B(P_1)B(P_2) &=
B(\om(P_1\op P_2)) B(P_1\rab)B(P_2\rab)
\\[4pt] &=
\rho(\rab) \lambda(\lab)
\,,
\end{split}
\end{equation}
which, in full notation, takes the form
\begin{equation} \label{dunu07}
\begin{split}
&B(\om(P_1\op P_2)\rgyr[P_2,P_1]) B(P_1)B(P_2)
\\[4pt] &=
B(\om(P_1\op P_2))B(P_1\rgyr[P_1,P_2]) B(P_2\rgyr[P_1,P_2])
\\[4pt] &=
\rho(\rgyr[P_1,P_2]) \lambda(\lgyr[P_1,P_2])
\,.
\end{split}
\end{equation}

By Def.~\ref{hdken}, the extreme sides of \eqref{dunu07} yield the
identity
\begin{equation} \label{dunu08}
\rho(\rgyr[P_1,P_2]) \lambda(\lgyr[P_1,P_2]) =
B(\om(P_1\opp P_2)) B(P_1)B(P_2)
\,,
\end{equation}
so that for all $P_1,P_2,X\in\Rnm$,
\begin{equation} \label{dunu09}
\rho(\rgyr[P_1,P_2]) \lambda(\lgyr[P_1,P_2])B(X) =
B(\om(P_1\opp P_2)) B(P_1)B(P_2)B(X)
\,.
\end{equation}

Let $J_1$ ($J_2$) denote the left (right) side of \eqref{dunu09}.
Using the column notation in \eqref{anim68}, p.~\pageref{anim68},
we manipulate the left side, $J_1$, of \eqref{dunu09} as follows,
where we apply the Lorentz transformation product law
\eqref{anim98}, p.~\pageref{anim98}, and note Corollary \ref{vftr}
on trivial bi-gyrations.
\begin{equation} \label{dunu10}
\begin{split}
J_1 &= \rho(\rgyr[P_1,P_2]) \lambda(\lgyr[P_1,P_2])B(X)
\\[8pt] &=
\begin{pmatrix} 0_{n,m} \\[4pt] \lgyrab \\[4pt] I_m \end{pmatrix}
\begin{pmatrix} 0_{n,m} \\[4pt] I_n \\[4pt] \rgyrab \end{pmatrix}
\begin{pmatrix} X       \\[4pt] I_n \\[4pt] I_m \end{pmatrix}
=
\begin{pmatrix} 0_{n,m} \\[4pt] \lgyrab \\[4pt] \rgyrab \end{pmatrix}
\begin{pmatrix} X       \\[4pt] I_n \\[4pt] I_m \end{pmatrix}
\\[8pt] &=
\begin{pmatrix} \lgyrab X \\[4pt] \lgyr[0_{n,m},\lgyrab X] \lgyrab
\\[4pt] \rgyrab \rgyr[0_{n,m},\lgyrab X] \end{pmatrix}
=
\begin{pmatrix} \lgyrab X \\[4pt] \lgyrab \\[4pt] \rgyrab \end{pmatrix}
=:
\begin{pmatrix} A_1 \\[4pt] B_1 \\[4pt] C_1 \end{pmatrix}
\,.
\end{split}
\end{equation}

Similarly, applying the Lorentz transformation product law \eqref{anim98}
we manipulate the right side, $J_2$, of \eqref{dunu09} as follows.
\begin{equation} \label{dunu11}
\begin{split}
J_2 &= B(\om(P_1\opp P_2)) B(P_1)B(P_2)B(X)
\\[8pt] &=
\begin{pmatrix} \om(P_1\opp P_2) \\[4pt] I_n \\[4pt] I_m \end{pmatrix}
\begin{pmatrix} P_1 \\[4pt] I_n \\[4pt] I_m \end{pmatrix}
\begin{pmatrix} P_2 \\[4pt] I_n \\[4pt] I_m \end{pmatrix}
\begin{pmatrix} X   \\[4pt] I_n \\[4pt] I_m \end{pmatrix}
=
\begin{pmatrix} \om(P_1\opp P_2) \\[4pt] I_n \\[4pt] I_m \end{pmatrix}
\begin{pmatrix} P_1 \\[4pt] I_n \\[4pt] I_m \end{pmatrix}
\begin{pmatrix} P_2\op X \\[4pt] \lgyr[P_2,X] \\[4pt] \rgyr[P_2,X]
\end{pmatrix}
\\[8pt] &=
\begin{pmatrix} \om(P_1\opp P_2) \\[4pt] I_n \\[4pt] I_m \end{pmatrix}
\begin{pmatrix} P_1\rgyr[P_2,X]\op (P_2\op X) \\[4pt]
\lgyr[P_1\rgyr[P_2,X],P_2\op X] \lgyr[P_2,X] \\[4pt]
\rgyr[P_2,X] \rgyr[P_1\rgyr[P_2,X],P_2\op X] \end{pmatrix}
\,.
\end{split}
\end{equation}

In the following equations \eqref{dunu12} we adjust each entry of the
right column of the extreme right side of \eqref{dunu11} to our needs.

By the second equation in \eqref{adin105}, p.~\pageref{adin105},
with $O_m=\rgyr[P_2,X]$, and the
right gyration inversion law \eqref{madaysb}, and by
\eqref{hdken01}\,--\,\eqref{hdken01s}, we have
\begin{subequations} \label{dunu12}
\begin{equation} \label{dunu12a}
\begin{split}
P_1\rgyr[P_2,X]\op(P_2\op X) &=
\{ P_1\op(P_2\op X)\rgyr[X,P_2]\}\rgyr[P_2,X]
\\[4pt] &=
\{ P_1\op(P_2\opp X)\} \rgyr[P_2,X]
\\[4pt] &=
\{ P_1\opp(P_2\opp X) \} \rgyr[P_1,P_2\opp X] \rgyr[P_2,X]
\,.
\end{split}
\end{equation}

By \eqref{bilva} with $O_m=\rgyr[X,P_2]$, and the
right gyration inversion law \eqref{madaysb}, and by
\eqref{hdken01}, we have
\begin{equation} \label{dunu12b}
\begin{split}
\lgyr[P_1\rgyr[P_2,X],P_2\op X] &=
\lgyr[P_1,(P_2\op X)\rgyr[X,P_2]]
\\[4pt] &=
\lgyr[P_1,P_2\opp X]
\,.
\end{split}
\end{equation}

By \eqref{spc02} with $O_m=\rgyr[P_2,X]$, and the
right gyration inversion law \eqref{madaysb}, and by
\eqref{hdken01}, we have
\begin{equation} \label{dunu12c}
\begin{split}
\rgyr[P_2,X] \rgyr[P_1\rgyr[P_2,X],P_2\op X] &=
\rgyr[P_1,(P_2\op X)\rgyr[X,P_2]] \rgyr[P_2,X]
\\[4pt] &=
\rgyr[P_1,P_2\opp X] \rgyr[P_2,X]
\,.
\end{split}
\end{equation}
\end{subequations}

By means of the equations in \eqref{dunu12}, the extreme right side
of \eqref{dunu11} can be written as
\begin{equation} \label{dunu13}
J_2 =
\begin{pmatrix} \om(P_1\opp P_2) \\[4pt] I_n \\[4pt] I_m \end{pmatrix}
\begin{pmatrix} 
\{ P_1\opp(P_2\opp X) \} \rgyr[P_1,P_2\opp X] \rgyr[P_2,X] \\[4pt]
\lgyr[P_1,P_2\opp X] \lgyr[P_2,X] \\[4pt]
\rgyr[P_1,P_2\opp X] \rgyr[P_2,X] \end{pmatrix}
=: \begin{pmatrix} A_2 \\[4pt] B_2 \\[4pt] C_2 \end{pmatrix}
\,.
\end{equation}

We now face the task of calculating $A_2$, $B_2$ and $C_2$ by means
of the Lorentz product law \eqref{anim98}.
Applying the Lorentz product law to \eqref{dunu13}, we calculate
the second entry, $B_2$, of $J_2$ and simplify it in the
following chain of equations, which are numbered
for subsequent explanation,
and where we use the notation
\begin{equation} \label{fiens}
O_m = \rgyr[P_1,P_2\opp X]\rgyr[P_2,X]
\,.
\end{equation}
\begin{equation} \label{dunu14}
\begin{split}
B_2
&\overbrace{=\!\!=\!\!=}^{(1)} \hspace{0.2cm}
\lgyr[\om(P_1\opp P_2) O_m, \{ P_1\opp(P_2\opp X)\} O_m]
\lgyr[P_1,P_2\opp X] \lgyr[P_2,X]
\\
&\overbrace{=\!\!=\!\!=}^{(2)} \hspace{0.2cm}
\lgyr[\om(P_1\opp P_2) , P_1\opp(P_2\opp X)]
\lgyr[P_1,P_2\opp X] \lgyr[P_2,X]
\,.
\end{split}
\end{equation}
Derivation of the numbered equalities in \eqref{dunu14} follows:
\begin{enumerate}
\item \label{dlth01}
This equation is obtained by calculating the Lorentz transformation
product in \eqref{dunu13} by means of \eqref{anim98}, selecting the
resulting second entry, and using the notation in \eqref{fiens}.
\item \label{dlth02}
Follows from Item \eqref{dlth01} by omitting the matrix $O_m$ from
the two entries of $\lgyr$
according to \eqref{bilva}, p.~\pageref{bilva}.
\end{enumerate}

By \eqref{dunu09}, $J_1=J_2$ and hence, by \eqref{dunu10} and
\eqref{dunu13}, $B_2=B_1$, that is, by \eqref{dunu14} and
\eqref{dunu10},
\begin{equation} \label{dunu15}
\lgyr[\om(P_1\opp P_2) , P_1\opp(P_2\opp X)]
\lgyr[P_1,P_2\opp X] \lgyr[P_2,X]
= \lgyr[P_1,P_2]
\end{equation}
for all $P_1,P_2,X\in\Rnm$.

Similarly, we calculate
the third entry, $C_2$, of $J_2$ and simplify it in the
following chain of equations, which are numbered
for subsequent explanation.
\begin{equation} \label{dunu16}
\begin{split}
C_2
&\overbrace{=\!\!=\!\!=}^{(1)} \hspace{0.2cm}
\rgyr[P_1,P_2\opp X]\rgyr[P_2,X]
\\ & \hspace{0.0cm}
\rgyr[\om(P_1\opp P_2)\rgyr[P_1,P_2\opp X] \rgyr[P_2,X] ,
\{ P_1\opp(P_1\opp X)\} \rgyr[P_1,P_2\opp X] \rgyr[P_2,X]]
\\
&\overbrace{=\!\!=\!\!=}^{(2)} \hspace{0.2cm}
\rgyr[P_1,P_2\opp X]\rgyr[\om(P_1\opp P_2)\rgyr[P_1,P_2\opp X],
\{ P_1\opp (P_2\opp X)\} \rgyr[P_1,P_2\opp X]]
\\ & \hspace{1.0cm}
\rgyr[P_2,X]
\\
&\overbrace{=\!\!=\!\!=}^{(3)} \hspace{0.2cm}
\rgyr[\om(P_1\opp P_2), P_1\opp(P_2\opp X)]
\rgyr[P_1,P_2\opp X]\rgyr[P_2,X]
\,.
\end{split}
\end{equation}
Derivation of the numbered equalities in \eqref{dunu16} follows:
\begin{enumerate}
\item \label{ukdnb01}
This equation is obtained by calculating the Lorentz transformation
product in \eqref{dunu13} by means of \eqref{anim98}, and selecting
the resulting third entry.
\item \label{ukdnb02}
Follows from Item \eqref{ukdnb01} by applying Identity
\eqref{spc02}, p.~\pageref{spc02}, with $O_m=\rgyr[P_2,X]$.
\item \label{ukdnb03}
Follows from Item \eqref{ukdnb02} by applying Identity
\eqref{spc02}, p.~\pageref{spc02}, with $O_m=\rgyr[P_1,P_2 \opp X]$.
\end{enumerate}

By \eqref{dunu09}, $J_1=J_2$ and hence, by \eqref{dunu10} and
\eqref{dunu13}, $C_2=C_1$, that is, by \eqref{dunu16}
and \eqref{dunu10},
\begin{equation} \label{dunu17}
\rgyr[\om(P_1\opp P_2), P_1\opp(P_2\opp X)]
\rgyr[P_1,P_2\opp X]\rgyr[P_2,X]
= \rgyr[P_1,P_2]
\end{equation}
for all $P_1,P_2,X\in\Rnm$.

We are now in a position to calculate
the first entry, $A_2$, of $J_2$ and simplify it in the
following chain of equations, which are numbered
for subsequent explanation.
\begin{equation} \label{dunu18}
\begin{split}
A_2
&\overbrace{=\!\!=\!\!=}^{(1)} \hspace{0.2cm}
\om(P_1\opp P_2) \rgyr[P_1,P_2\opp X] \rgyr[P_2,X]
\\ & \hspace{1.0cm}
\op \{ P_1\opp (P_2\opp X)\} \rgyr[P_1,P_2\opp X] \rgyr[P_2,X]
\\
&\overbrace{=\!\!=\!\!=}^{(2)} \hspace{0.2cm}
\{\om(P_1\opp P_2)\op\{ P_1\opp(P_2\opp X)\}\}
\rgyr[P_1,P_2\opp X] \rgyr[P_2,X]
\\
&\overbrace{=\!\!=\!\!=}^{(3)} \hspace{0.2cm}
\{\om(P_1\opp P_2)\opp\{ P_1\opp(P_2\opp X)\}\}
\rgyr[\om(P_1\opp P_2),P_1\opp (P_2\opp X)]
\\ & \hspace{1.0cm}
\rgyr[P_1,P_2\opp X] \rgyr[P_2,X]
\\
&\overbrace{=\!\!=\!\!=}^{(4)} \hspace{0.2cm}
\{\om(P_1\opp P_2)\opp\{ P_1\opp(P_2\opp X)\}\}
\rgyr[P_1,P_2]
\,.
\end{split}
\end{equation}
Derivation of the numbered equalities in \eqref{dunu18} follows:
\begin{enumerate}
\item \label{tksnr01}
This equation is obtained by calculating the Lorentz transformation
product in \eqref{dunu13} by means of \eqref{anim98}, and selecting
the resulting first entry.
\item \label{tksnr02}
Item \eqref{tksnr02} is obtained by using the second Identity in
\eqref{adin105} with $$O_m=\rgyr[P_1,P_2\opp X] \rgyr[P_2,X].$$
\item \label{tksnr03}
The binary operation $\op$ that appears in Item \eqref{tksnr02}
is expressed here in terms of the binary operation $\opp$
by means of \eqref{hdken01s}.
\item \label{tksnr04}
Item \eqref{tksnr04} follows from Item \eqref{tksnr03} by
Identity \eqref{dunu17}.
\end{enumerate}

By \eqref{dunu09}, $J_1=J_2$ and hence, by \eqref{dunu10} and
\eqref{dunu13}, $A_2=A_1$, that is, by \eqref{dunu18} and
\eqref{dunu10},
\begin{equation} \label{dunu19}
\{\om(P_1\opp P_2)\opp\{ P_1\opp(P_2\opp X)\}\}
\rgyr[P_1,P_2]
= \lgyr[P_1,P_2]X
\,.
\end{equation}
Hence, by right gyration inversion,
\begin{equation} \label{dunu20}
\omp(P_1\opp P_2)\opp\{ P_1\opp(P_2\opp X)\}
= \lgyr[P_1,P_2]X \rgyr[P_2,P_1]
\end{equation}
for all $P_1,P_2,X\in\Rnm$.

Left gyroadding $(P_1\opp P_2)\opp$ to both sides of
\eqref{dunu20} and applying the left cancellation law
\eqref{markit01}, we obtain the {\it left bi-gyroassociative law},
\begin{equation} \label{bkre}
\begin{split}
&(P_1\opp P_2)\opp\lgyr[P_1,P_2] X \rgyr[P_2,P_1]
\\[6pt] &=
(P_1\opp P_2)\opp\{\omp(P_1\opp P_2)\opp\{ P_1\opp(P_2\opp X)\}\}
\\[6pt] &=
P_1\opp(P_2\opp X)
\,.
\end{split}
\end{equation}

\begin{ttheorem}\label{thmasso45} 
{\bf (Bi-gyrogroup Left and Right Bi-gyroassociative Law).}
The binary operation $\opp$ in $\Rnm$ possesses
the left bi-gyroassociative law
\begin{equation} \label{dunu21}
P_1\opp(P_2\opp X) = (P_1\opp P_2) \opp
\lgyr[P_1,P_2] X \rgyr[P_2,P_1]
\end{equation}
and the right bi-gyroassociative law
\begin{equation} \label{dunu22}
(P_1\opp P_2) \opp X = P_1\opp
(P_2\opp\lgyr[P_2,P_1] X \rgyr[P_1,P_2])
\end{equation}
for all $P_1,P_2,X\in\Rnm$.
\end{ttheorem}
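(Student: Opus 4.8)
The left bi-gyroassociative law \eqref{dunu21} is, for all practical purposes, already established in the run-up to the theorem: it is precisely Identity \eqref{bkre}, read between its first and its last member. Recall that \eqref{bkre} was obtained by left gyroadding $(P_1\opp P_2)\opp$ to both sides of \eqref{dunu20} and then applying the left cancellation law \eqref{markit01} in $(\Rnm,\opp)$. So the first half of the theorem needs no fresh argument; the plan is simply to transcribe \eqref{bkre} into the displayed form \eqref{dunu21}.

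For the right bi-gyroassociative law \eqref{dunu22} the plan is to feed the substitution $X\mapsto\lgyr[P_2,P_1]\,X\,\rgyr[P_1,P_2]$ into the already-proved identity \eqref{dunu21}. On the right-hand side of \eqref{dunu21} the block $\lgyr[P_1,P_2]\,X\,\rgyr[P_2,P_1]$ then becomes
\begin{equation*}
\lgyr[P_1,P_2]\lgyr[P_2,P_1]\,X\,\rgyr[P_1,P_2]\rgyr[P_2,P_1] = X ,
\end{equation*}
the collapse being exactly the bi-gyration inversion law \eqref{anim107} (equivalently \eqref{madays}), i.e. $\lgyr[P_1,P_2]\lgyr[P_2,P_1]=I_n$ and $\rgyr[P_1,P_2]\rgyr[P_2,P_1]=I_m$. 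The left-hand side of \eqref{dunu21} becomes $P_1\opp\bigl(P_2\opp\lgyr[P_2,P_1]\,X\,\rgyr[P_1,P_2]\bigr)$, and equating the two members produces \eqref{dunu22} verbatim.

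Where is the work? Not in the two steps above. The substantive effort was already spent upstream, in \eqref{dunu10}--\eqref{dunu20}, where $J_1$ and $J_2$ were expanded by the Lorentz transformation product law \eqref{anim98} and compared column by column; this rests on the uniqueness of the bi-gyration decomposition (Theorem \ref{dokrn1}), the commuting and invariance relations for bi-gyrations (Theorems \ref{spcomw} and \ref{spcowd}, Corollary \ref{vfsv}), and the trivial bi-gyrations of Corollary \ref{vftr}. If one insisted on reproving the theorem from scratch rather than from \eqref{dunu20}, the hard part would be precisely that entry-by-entry comparison, and in particular securing the first-entry identity \eqref{dunu19} (after the auxiliary second- and third-entry identities \eqref{dunu15} and \eqref{dunu17}); once \eqref{dunu20} is in hand, only the two routine manipulations described above remain.
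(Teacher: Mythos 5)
Your proposal is correct and follows the paper's own proof essentially verbatim: the left law is read off from \eqref{bkre}, and the right law is obtained by applying the left law with $X$ replaced by $\lgyr[P_2,P_1]X\rgyr[P_1,P_2]$ and collapsing via the bi-gyration inversion law, which is exactly the chain \eqref{bkre1}. Your remark that the real work lives upstream in \eqref{dunu10}--\eqref{dunu20} is also an accurate reading of the paper's structure.
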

\begin{proof}
The left bi-gyroassociative law \eqref{dunu21} is proved in
\eqref{bkre}.

The right bi-gyroassociative law \eqref{dunu22} results from an application
of the left bi-gyroassociative law to the right side of \eqref{dunu22},
by means of bi-gyration inversion,
\begin{equation} \label{bkre1}
\begin{split}
&P_1\opp(P_2\opp\lgyr[P_2,P_1] X \rgyr[P_1,P_2])
\\[8pt] &=
(P_1\opp P_2)\opp\lgyr[P_1,P_2]\lgyr[P_2,P_1] X
\rgyr[P_1,P_2]\rgyr[P_2,P_1]
\\[8pt] &=
(P_1\opp P_2)\opp X
\,.
\end{split}
\end{equation}
\end{proof} 

\section{Bi-gyrocommutative Law}
\label{trge}

The bi-gyrocommutative law in $(\Rnm,\opp)$ is obtained in Sect.~\ref{trgd}
by comparing the bi-gyration decomposition and the polar decomposition
of the bi-boost product $\Lambda=B(P_1)B(P_2)$.
In this section we derive the bi-gyrocommutative law in $(\Rnm,\opp)$
from its counterpart \eqref{anim109s}, p.~\pageref{anim109s}, in
$(\Rnm,\op)$.

\begin{ttheorem}\label{timbk} 
{\bf (Bi-gyrocommutative Law in $(\Rnm,\opp)$).}
The binary operation $\opp$ in $\Rnm$ possesses the
bi-gyrocommutative law
\begin{equation} \label{bagt1}
P_1 \opp P_2 = \lgyr[P_1,P_2] (P_2\opp P_1) \rgyr[P_2,P_1]
\end{equation}
for all $P_1,P_2\in\Rnm$.
\end{ttheorem}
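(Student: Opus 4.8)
The plan is to derive the bi-gyrocommutative law in $(\Rnm,\opp)$ directly from the already-established bi-gyrocommutative law in $(\Rnm,\op)$, namely Theorem~\ref{tmkdrh}, which states $P_1\op P_2 = \lgyr[P_1,P_2](P_2\op P_1)\rgyr[P_1,P_2]$. The bridge between the two operations is Definition~\ref{hdken}, $P_1\opp P_2 = (P_1\op P_2)\rgyr[P_2,P_1]$, together with its left-handed form $P_1\opp P_2 = \lgyr[P_1,P_2](P_2\op P_1)$ from \eqref{hdken03t}, and the inversion laws \eqref{madaysa}\,--\,\eqref{madaysb}.

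First I would start from the right-hand side of the claimed identity \eqref{bagt1} and rewrite $P_2\opp P_1$ using the left form \eqref{hdken03t}: $P_2\opp P_1 = \lgyr[P_2,P_1](P_1\op P_2)$. Substituting,
\[
\lgyr[P_1,P_2](P_2\opp P_1)\rgyr[P_2,P_1]
= \lgyr[P_1,P_2]\lgyr[P_2,P_1](P_1\op P_2)\rgyr[P_2,P_1].
\]
By the left gyration inversion law \eqref{madaysa}, $\lgyr[P_1,P_2]\lgyr[P_2,P_1]=I_n$, so the right side collapses to $(P_1\op P_2)\rgyr[P_2,P_1]$, which is exactly $P_1\opp P_2$ by Definition~\ref{hdken}. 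That completes the proof. Alternatively, and perhaps more symmetrically, one can start from $P_1\opp P_2 = (P_1\op P_2)\rgyr[P_2,P_1]$, apply Theorem~\ref{tmkdrh} to expand $P_1\op P_2 = \lgyr[P_1,P_2](P_2\op P_1)\rgyr[P_1,P_2]$, then recognize $\rgyr[P_1,P_2]\rgyr[P_2,P_1]=I_m$ by \eqref{madaysb} after first rewriting $(P_2\op P_1)\rgyr[P_1,P_2]$ — but one must be careful about the order in which the right gyrations act, which is why the first route above is cleaner.

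There is essentially no serious obstacle here: the result is a short algebraic manipulation once the conversion formulas \eqref{hdken01}, \eqref{hdken03t} and the bi-gyration inversion laws \eqref{madaysa}\,--\,\eqref{madaysb} are in hand. The one point requiring mild care is bookkeeping the sides on which left gyrations ($\in SO(n)$, acting on the left of an $n\times m$ matrix) and right gyrations ($\in SO(m)$, acting on the right) operate, so that the cancellations $\lgyr[P_1,P_2]\lgyr[P_2,P_1]=I_n$ and $\rgyr[P_1,P_2]\rgyr[P_2,P_1]=I_m$ are applied to adjacent factors in the correct order. I would therefore write the chain of equalities explicitly, citing Definition~\ref{hdken} (or \eqref{hdken03t}), Theorem~\ref{tmkdrh}, and Theorem~\ref{tmdkert} at each step, and note in closing that when $m=1$ the right gyrations are trivial and \eqref{bagt1} reduces to the ordinary gyrocommutative law of gyrogroup theory.
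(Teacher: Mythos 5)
Your proposal is correct and uses essentially the same ingredients and route as the paper: the paper substitutes the conversion formulas \eqref{hdken01s} into the bi-gyrocommutative law \eqref{anim109s} of $(\Rnm,\op)$ and cancels right gyrations via \eqref{madaysb}, which is exactly your "alternative, more symmetric" second route, while your primary route via \eqref{hdken03t} is the same argument in disguise, since \eqref{hdken03t} is itself obtained from \eqref{anim109s} and \eqref{hdken01}. Both are valid one-line manipulations and nothing is missing.
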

\begin{proof}
By means of \eqref{hdken01s}, p.~\pageref{hdken01s}, and
right gyration inversion \eqref{madaysb}, p.~\pageref{madaysb},
the bi-gyrocommutative law \eqref{anim109s}, p.~\pageref{anim109s},
in $(\Rnm,\op)$ can be expressed in terms of $\opp$ rather than $\op$,
obtaining
\begin{equation} \label{bagt2}
\begin{split}
(P_1\opp P_2)\rgyrab &= \lgyrab (P_2\opp P_1)\rgyrba\rgyrab
\\[4pt]
&= \lgyrab(P_2\opp P_1)
\,.
\end{split}
\end{equation}

Identity \eqref{bagt1} of the Theorem follows immediately
from \eqref{bagt2} by right gyration inversion.
\end{proof} 

\section{Gyrogroup Gyrations}
\label{trgf}

The bi-gyroassociative laws \eqref{dunu21}\,--\,\eqref{dunu22}
and the bi-gyrocommutative law \eqref{bagt1} suggest the following
definition of gyrations in terms of left and right gyrations.
\begin{ddefinition}\label{itsk1}
{\bf (Gyrogroup Gyrations).}
The gyrator $$\gyr:\Rnm\times\Rnm\rightarrow\Aut(\Rnm,\opp)$$
generates automorphisms called gyrations, $\gyrabp\in\Aut(\Rnm,\opp)$,
given by the equation
\begin{equation} \label{kvir01}
\gyrabp X=\lgyrab X\rgyrba
\end{equation}
for all $P_1,P_2,X\in\Rnm$, where left gyrations, $\lgyrab$,
and right gyrations, $\rgyrba$,
are given in \eqref{anim94s}, p.~\pageref{anim94s}.
The gyration $\gyrabp$ is said to be the gyration generated by
$P_1,P_2\in\Rnm$.
Being automorphisms of $(\Rnm,\opp)$, gyrations are also called
gyroautomorphisms.
\end{ddefinition}

Def.~\ref{itsk1} will turn out rewarding, leading to the discovery that
any bi-gyrogroup $(\Rnm,\opp)$ is a gyrocommutative gyrogroup.

\begin{ttheorem}\label{itsk2} 
{\bf (Gyrogroup Gyroassociative and gyrocommutative Laws).}
The binary operation $\opp$ in $\Rnm$ obeys the left and the right
gyroassociative law
\begin{equation} \label{kvir02}
P_1\opp (P_2\opp X) = (P_1\opp P_2) \opp\gyrabp X
\end{equation}
and
\begin{equation} \label{kvir03}
(P_1\opp P_2) \opp X = P_1\opp (P_2\opp\gyrbap X)
\end{equation}
and the gyrocommutative law
\begin{equation} \label{kvir03d5}
P_1\opp P_2 = \gyrabp (P_2\opp P_1)
\,.
\end{equation}
\end{ttheorem}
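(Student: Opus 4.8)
The plan is to obtain all three identities of the theorem by pure substitution, since the substantive analytic work has already been carried out in establishing the bi-gyroassociative laws of Theorem~\ref{thmasso45} and the bi-gyrocommutative law of Theorem~\ref{timbk}; the gyration notation of Def.~\ref{itsk1} has been set up precisely so that these earlier results assume the desired form. As a preliminary remark I would first note that $\gyrabp$ is a well-defined automorphism of $(\Rnm,\opp)$: by Theorem~\ref{dokre} the maps $X\mapsto\lgyrab X$ and $X\mapsto X\rgyrba$ are, respectively, a left and a right automorphism of $(\Rnm,\opp)$, and their composite is therefore an automorphism, so $\gyr$ indeed maps into $\Aut(\Rnm,\opp)$ as claimed in Def.~\ref{itsk1}.

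Next I would prove the left gyroassociative law \eqref{kvir02}. Starting from the left bi-gyroassociative law \eqref{dunu21},
\[
P_1\opp (P_2\opp X) = (P_1\opp P_2) \opp \lgyr[P_1,P_2]\, X\, \rgyr[P_2,P_1],
\]
I recognize the string $\lgyr[P_1,P_2]\, X\, \rgyr[P_2,P_1]$ on the right-hand side as $\gyrabp X$, by \eqref{kvir01} in Def.~\ref{itsk1}, which yields \eqref{kvir02} immediately. Symmetrically, for the right gyroassociative law \eqref{kvir03}, I start from the right bi-gyroassociative law \eqref{dunu22},
\[
(P_1\opp P_2) \opp X = P_1\opp (P_2\opp \lgyr[P_2,P_1]\, X\, \rgyr[P_1,P_2]),
\]
and identify $\lgyr[P_2,P_1]\, X\, \rgyr[P_1,P_2]$ with $\gyrbap X$ — this is \eqref{kvir01} with the roles of $P_1$ and $P_2$ interchanged — to obtain \eqref{kvir03}.

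For the gyrocommutative law \eqref{kvir03d5} I would invoke the bi-gyrocommutative law \eqref{bagt1} of Theorem~\ref{timbk},
\[
P_1 \opp P_2 = \lgyr[P_1,P_2] (P_2\opp P_1) \rgyr[P_2,P_1],
\]
and read the right-hand side, with $X = P_2\opp P_1$, as $\gyrabp(P_2\opp P_1)$ via Def.~\ref{itsk1}, giving \eqref{kvir03d5}.

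There is no analytic obstacle here; the only point requiring care is the bookkeeping of the index order in the left and right gyration factors, namely that $\gyrabp$ pairs $\lgyr[P_1,P_2]$ with $\rgyr[P_2,P_1]$ whereas $\gyrbap$ pairs $\lgyr[P_2,P_1]$ with $\rgyr[P_1,P_2]$, so that the raw strings appearing in \eqref{dunu21}, \eqref{dunu22} and \eqref{bagt1} match the gyration notation of Def.~\ref{itsk1} exactly rather than up to a bi-gyration inversion. In short, Theorem~\ref{itsk2} is a notational repackaging of Theorems~\ref{thmasso45} and \ref{timbk} that isolates the composite object $\gyrabp$ and prepares the ground for recognizing $(\Rnm,\opp)$ as a gyrocommutative gyrogroup.
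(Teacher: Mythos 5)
Your proposal is correct and follows essentially the same route as the paper: the paper's proof likewise derives \eqref{kvir02}--\eqref{kvir03} by reading the strings $\lgyr[P_1,P_2]X\rgyr[P_2,P_1]$ and $\lgyr[P_2,P_1]X\rgyr[P_1,P_2]$ in \eqref{dunu21}--\eqref{dunu22} as $\gyrabp X$ and $\gyrbap X$ via Def.~\ref{itsk1}, and \eqref{kvir03d5} directly from \eqref{bagt1}. Your extra bookkeeping of the index order and the remark that $\gyrabp$ is an automorphism are accurate but not needed beyond what the paper records.
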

\begin{proof}
Identities \eqref{kvir02}\,--\,\eqref{kvir03} follow immediately from
Def.~\ref{itsk1} and the left and right bi-gyroassociative law
\eqref{dunu21}\,--\,\eqref{dunu22}. Similarly, \eqref{kvir03d5}
follow immediately from Def.~\ref{itsk1} and the
bi-gyrocommutative law \eqref{bagt1}.
\end{proof} 

\begin{llemma}\label{itsk3}
The relation \eqref{kvir01} between gyrations $\gyrabp$ and
corresponding bi-gyrations $(\lgyrab,\rgyrba)$, $P_1,P_2\in(\Rnm,\opp)$,
is bijective.
\end{llemma}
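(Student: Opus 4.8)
The plan is to interpret \eqref{kvir01} as a map $\Phi$ sending each bi-gyration $(\lgyr[P_1,P_2],\rgyr[P_2,P_1])$, $P_1,P_2\in\Rnm$, to the gyration $\gyrabp$ determined by $\gyrabp X=\lgyr[P_1,P_2]X\rgyr[P_2,P_1]$ for all $X\in\Rnm$, and to show that $\Phi$ is a well-defined bijection from the set of all such bi-gyrations onto the set of all gyrations of $(\Rnm,\opp)$. Well-definedness is immediate: if two parameter pairs yield one and the same pair of orthogonal matrices $(\lgyr[P_1,P_2],\rgyr[P_2,P_1])$, then substituting into \eqref{kvir01} produces literally the same linear map on $\Rnm$. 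Surjectivity is equally immediate, since by Def.~\ref{itsk1} every gyration is by construction of the form $\gyrabp$ for some $P_1,P_2\in\Rnm$ and hence lies in the image of $\Phi$; that each $\gyrabp$ is indeed an element of $\Aut(\Rnm,\opp)$ follows from Theorem~\ref{dokre} (left and right rotations are automorphisms of $(\Rnm,\opp)$) and is in any case already recorded in Def.~\ref{itsk1}.

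The one substantive point is injectivity of $\Phi$. I would suppose $P_1,P_2,Q_1,Q_2\in\Rnm$ are such that the gyrations generated by $(P_1,P_2)$ and by $(Q_1,Q_2)$ coincide; by \eqref{kvir01} this means
\[
\lgyr[P_1,P_2]\,X\,\rgyr[P_2,P_1]=\lgyr[Q_1,Q_2]\,X\,\rgyr[Q_2,Q_1]
\]
for all $X\in\Rnm$. Since $\lgyr[P_1,P_2],\lgyr[Q_1,Q_2]\in SO(n)$ and $\rgyr[P_2,P_1],\rgyr[Q_2,Q_1]\in SO(m)$, I would then apply Lemma~\ref{mdf2} with $O_{n,1}=\lgyr[P_1,P_2]$, $O_{m,1}=\rgyr[P_2,P_1]$, $O_{n,2}=\lgyr[Q_1,Q_2]$ and $O_{m,2}=\rgyr[Q_2,Q_1]$, which forces $\lgyr[P_1,P_2]=\lgyr[Q_1,Q_2]$ and $\rgyr[P_2,P_1]=\rgyr[Q_2,Q_1]$. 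Thus the two bi-gyrations are equal, so $\Phi$ is injective, and the three observations together give the asserted bijective correspondence.

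I do not expect a real obstacle here: essentially all of the content is carried by Lemma~\ref{mdf2} (itself a consequence of Lemma~\ref{mdf1}), and the only thing requiring care is the reading of the statement. ``Bijective'' here means a bijection between the \emph{set} of bi-gyrations $\{(\lgyr[P_1,P_2],\rgyr[P_2,P_1])\}$ and the \emph{set} of gyrations $\{\gyrabp\}$, not between the indexing pairs $(P_1,P_2)$, since distinct parameter pairs may well generate the same bi-gyration and hence the same gyration. This is precisely why Lemma~\ref{mdf2}, a matrix separation statement, is the right tool, rather than any uniqueness-of-parametrization property of $SO(m,n)$.
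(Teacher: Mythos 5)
Your proposal is correct and follows essentially the same route as the paper: both directions reduce to the observation that equality of the maps $X\mapsto\lgyr[P_1,P_2]X\rgyr[P_2,P_1]$ forces equality of the left and right gyration factors via Lemma~\ref{mdf2}, the converse direction being immediate from \eqref{kvir01}. Your framing in terms of well-definedness, injectivity and surjectivity of a map $\Phi$ is just a repackaging of the paper's ``if and only if'' argument, with the same key lemma doing all the work.
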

\begin{proof}
Let $P_k\in\Rnm$, $k=1,2,3,4$. Assuming
\begin{equation} \label{kvir04}
(\lgyrab,\rgyrba)=(\lgyr[P_3,P_4],\rgyr[P_4,P_3])
\,,
\end{equation}
it clearly follows from \eqref{kvir01} that
\begin{equation} \label{kvir05}
\gyrabp = \gyr[P_3,P_4]
\,.
\end{equation}

Conversely, assuming \eqref{kvir05}, then
\begin{equation} \label{kvir06}
\gyrabp X = \gyr[P_3,P_4]X
\end{equation}
for all $X\in\Rnm$, so that by \eqref{kvir01}
\begin{equation} \label{kvir07}
\lgyrab X\rgyrba = \lgyr[P_3,P_4]X\rgyr[P_4,P_3]
\end{equation}
for all $X\in\Rnm$.

Noting that $\lgyr[P,Q]\in SO(n)$ and $\rgyr[P,Q]\in SO(m)$
for any $P,Q\in\Rnm$, \eqref{kvir04} follows from \eqref{kvir07}
and Lemma \ref{mdf2}, p.~\pageref{mdf2}, and the proof is complete.
\end{proof}

It is anticipated in Def.~\ref{itsk1} that gyrations are
automorphisms. The following theorem asserts that this is indeed
the case.

\begin{ttheorem}\label{itsk4}
{\bf (Gyroautomorphism).}
Gyrations $\gyrabp$ of a bi-gyrogroup $(\Rnm,\opp)$ are
automorphisms of the bi-gyrogroup.
\end{ttheorem}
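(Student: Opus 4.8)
The plan is to show that each gyration $\gyrabp$ is (i) a bijection of $\Rnm$ and (ii) respects the bi-gyrogroup operation $\opp$. Bijectivity is immediate: by Def.~\ref{itsk1}, $\gyrabp X = \lgyrab X\rgyrba$, which is the composition of the map $X\mapsto \lgyrab X$ (left multiplication by $\lgyrab\in SO(n)$, a bijection of $\Rnm$) with the map $X\mapsto X\rgyrba$ (right multiplication by $\rgyrba\in SO(m)$, also a bijection of $\Rnm$); its inverse is $X\mapsto \lgyrba X\rgyrab$, using the bi-gyration inversion law \eqref{madays}. So the work is entirely in verifying the homomorphism property $\gyrabp(X\opp Y) = \gyrabp X\opp\gyrabp Y$ for all $X,Y\in\Rnm$.

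For the homomorphism property, the natural route is to use Theorem~\ref{dokre} (the bi-gyrogroup left and right automorphisms), which states that any $O_n\in SO(n)$ is a left automorphism of $(\Rnm,\opp)$, any $O_m\in SO(m)$ is a right automorphism, and the combined map $P\mapsto O_nPO_m$ is a bi-automorphism, i.e.\ $O_n(X\opp Y)O_m = O_nXO_m\opp O_nYO_m$. First I would apply this with $O_n = \lgyrab$ and $O_m = \rgyrba$; since $\lgyrab\in SO(n)$ and $\rgyrba\in SO(m)$, we get
\begin{equation*}
\gyrabp(X\opp Y) = \lgyrab(X\opp Y)\rgyrba = \lgyrab X\rgyrba\opp\lgyrab Y\rgyrba = \gyrabp X\opp\gyrabp Y\,,
\end{equation*}
which is exactly the required automorphism property. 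Hence $\gyrabp\in\Aut(\Rnm,\opp)$, as anticipated in Def.~\ref{itsk1}.

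**The main obstacle** is essentially bookkeeping rather than a genuine difficulty: one must be careful that Theorem~\ref{dokre} is invoked with the correct membership ($\lgyrab\in SO(n)$ and $\rgyrba\in SO(m)$, which hold by the construction of left and right gyrations in \eqref{anim92}), and that the ordering of arguments in $\gyr$ versus $\lgyr$ and $\rgyr$ in Def.~\ref{itsk1} — namely $\gyrabp X = \lgyrab X\rgyrba$ with the \emph{swapped} index pair on the right gyration — is tracked consistently throughout. There is no analytic content; once Theorem~\ref{dokre} is in hand the result drops out in one line. An alternative, more self-contained approach would be to verify the automorphism property directly from the defining formulas \eqref{anim94s} for $\lgyr$ and $\rgyr$ together with the bi-gyroaddition formula \eqref{anim94s}, but this reduces to re-proving Theorem~\ref{dokre} and is strictly less economical; I would not pursue it.
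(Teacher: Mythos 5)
Your proposal is correct and follows essentially the same route as the paper: invertibility of $\gyrabp$ via the bi-gyration inversion law, and the homomorphism property by applying the third identity of Theorem~\ref{dokre} (equivalently, \eqref{adin105s}) with $O_n=\lgyrab$ and $O_m=\rgyrba$. No gaps.
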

\begin{proof}
It follows from the bi-gyration inversion law in
Theorem \ref{tmdkert}, p.~\pageref{tmdkert}, and from \eqref{kvir01}
that $\gyrabp$ is invertible,
\begin{equation} \label{kvir08}
\gyr^{-1}[P_1,P_2] = \gyr[P_2,P_1]
\end{equation}
for all $P_1,P_2\in\Rnm$.

Furthermore, noting that $\lgyrab\in SO(n)$ and $\rgyrab\in SO(m)$
it follows from \eqref{kvir01} and the third identity in
\eqref{adin105s}, p.~\pageref{adin105s}, that
\begin{equation} \label{kvir09}
\gyr[P_1,P_2](P\opp Q) = \gyr[P_1,P_2]P\opp\gyr[P_1,P_2]Q
\end{equation}
for all $P_1,P_2,P,Q\in\Rnm$.
Hence, by \eqref{kvir08} and \eqref{kvir09},
gyrations of $(\Rnm,\opp)$ are automorphisms of $(\Rnm,\opp)$,
and the proof is complete.
\end{proof}

\begin{ttheorem}\label{itsk5}
{\bf (Left Gyration Reduction Properties).}
Left gyrations of a bi-gyrogroup $(\Rnm,\opp)$ possess
the left gyration left reduction property
\begin{equation} \label{kvir10}
\lgyr[P_1,P_2] = \lgyr[P_1\opp P_2,P_2]
\end{equation}
and 
the left gyration right reduction property
\begin{equation} \label{kvir11}
\lgyr[P_1,P_2] = \lgyr[P_1,P_2\opp P_1]
\,.
\end{equation}
\end{ttheorem}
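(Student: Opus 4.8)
The plan is to deduce both reduction properties in the bi-gyrogroup $(\Rnm,\opp)$ directly from their counterparts in the bi-gyrogroupoid $(\Rnm,\op)$ that were already established in Theorem~\ref{tmjkdf}, using only the defining relation between the two binary operations recorded in \eqref{hdken01s} and the bi-gyration invariance relation \eqref{bilva} of Theorem~\ref{spcowd}. The key observation is that passing from $\op$ to $\opp$ merely right-multiplies an entry by a rotation $\rgyr[\,\cdot\,,\,\cdot\,]\in SO(m)$, while a left gyration is unchanged when \emph{both} of its arguments are right-multiplied by the same element of $SO(m)$. So the whole argument is a matter of exhibiting a common right factor in the two arguments of the left gyration and then cancelling it.

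To prove \eqref{kvir10} I would start from the first reduction property of Theorem~\ref{tmjkdf},
\[
\lgyr[P_1,P_2] = \lgyr[\,P_1\op P_2,\; P_2\rgyr[P_1,P_2]\,],
\]
and rewrite its first argument by means of \eqref{hdken01s}, $P_1\op P_2 = (P_1\opp P_2)\rgyr[P_1,P_2]$. Both arguments of the right-hand left gyration are then right-multiplied by the common factor $\rgyr[P_1,P_2]\in SO(m)$, so \eqref{bilva} applied with $O_m=\rgyr[P_1,P_2]$ removes it and yields $\lgyr[P_1,P_2]=\lgyr[P_1\opp P_2,P_2]$, which is \eqref{kvir10}.

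The proof of \eqref{kvir11} is the mirror image. Starting from the second reduction property of Theorem~\ref{tmjkdf},
\[
\lgyr[P_1,P_2] = \lgyr[\,P_1\rgyr[P_2,P_1],\; P_2\op P_1\,],
\]
I would rewrite its second argument using \eqref{hdken01s} with the roles of $P_1$ and $P_2$ interchanged, $P_2\op P_1 = (P_2\opp P_1)\rgyr[P_2,P_1]$. Now both arguments are right-multiplied by the common factor $\rgyr[P_2,P_1]\in SO(m)$, and a second application of \eqref{bilva}, this time with $O_m=\rgyr[P_2,P_1]$, strips it off to give $\lgyr[P_1,P_2]=\lgyr[P_1,P_2\opp P_1]$, which is \eqref{kvir11}.

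I do not anticipate a genuine obstacle: once Theorem~\ref{tmjkdf} and the invariance relation \eqref{bilva} are available, the proof is a two-line bookkeeping computation in each case. The only point demanding a little care is making sure that the factor appearing on the right of each argument is literally the \emph{same} $SO(m)$-element before invoking \eqref{bilva}; this is exactly what choosing the appropriate instance of \eqref{hdken01s} (for $(P_1,P_2)$ in the first case and for $(P_2,P_1)$ in the second) guarantees.
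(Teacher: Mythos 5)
Your proposal is correct and follows essentially the same route as the paper: both arguments start from the two identities of Theorem~\ref{tmjkdf} and use the invariance relation \eqref{bilva} together with the relation between $\op$ and $\opp$ to absorb the right gyration factor. The paper phrases the step as right-multiplying both arguments by the inverse gyration and then invoking \eqref{hdken01}, whereas you exhibit the common right factor via \eqref{hdken01s} and strip it off, but these are the same computation read in opposite directions.
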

\begin{proof}
By \eqref{dmgdc1sa}, p.~\pageref{dmgdc1sa},
\eqref{bilva}, p.~\pageref{bilva}, with $O_m=\rgyr[P_2,P_1]$,
gyration inversion, and \eqref{hdken01}, p.~\pageref{hdken01},
we have the following chain of equations,
\begin{equation} \label{akuv03}
\begin{split}
\lgyr[P_1,P_2] &= \lgyr[P_1\op P_2,P_2\rgyr[P_1,P_2]]
\\[6pt] &=
\lgyr[(P_1\op P_2)\rgyr[P_2,P_1],P_2\rgyr[P_1,P_2]\rgyr[P_2,P_1]]
\\[6pt] &=
\lgyr[(P_1\op P_2)\rgyr[P_2,P_1],P_2]
\\[6pt] &=
\lgyr[P_1\opp P_2,P_2]
\,,
\end{split}
\end{equation}
thus proving \eqref{kvir10}.

By \eqref{dmgdc1sb}, p.~\pageref{dmgdc1sb},
\eqref{bilvb}, p.~\pageref{bilvb}, with $O_m=\rgyr[P_1,P_2]$,
gyration inversion, and \eqref{hdken01}, p.~\pageref{hdken01},
we have the following chain of equations,
\begin{equation} \label{akuv04}
\begin{split}
\lgyr[P_1,P_2] &= \lgyr[P_1\rgyr[P_2,P_1],P_2\op P_1]
\\[6pt] &=
\lgyr[P_1\rgyr[P_2,P_1]\rgyr[P_1,P_2],(P_2\op P_1)\rgyr[P_1,P_2]]
\\[6pt] &=
\lgyr[P_1,(P_2\op P_1)\rgyr[P_1,P_2]]
\\[6pt] &=
\lgyr[P_1,P_2\opp P_1]
\,,
\end{split}
\end{equation}
thus proving \eqref{kvir11}.
\end{proof}

\begin{ttheorem}\label{itsk5r}
{\bf (Right Gyration Reduction Properties).}
Right gyrations of a bi-gyrogroup $(\Rnm,\opp)$ possess
the right gyration left reduction property
\begin{equation} \label{piku01}
\rgyr[P_1,P_2] = \rgyr[P_1\opp P_2,P_2]
\end{equation}
and
the right gyration right reduction property
\begin{equation} \label{piku02}
\rgyr[P_1,P_2] = \rgyr[P_1,P_2\opp P_1]
\,.
\end{equation}
\end{ttheorem}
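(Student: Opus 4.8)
The plan is to mirror the proof of Theorem \ref{itsk5} (the left gyration reduction properties in a bi-gyrogroup), interchanging the roles of ``left'' and ``right'' throughout. The natural starting points are the right gyration reduction properties already established for $\op$ in Theorem \ref{tmjkdg}, namely $\rgyr[P_1,P_2]=\rgyr[\lgyr[P_1,P_2]P_1,P_1\op P_2]$ (eq.~\eqref{dmgdc1sc}) and $\rgyr[P_1,P_2]=\rgyr[P_2\op P_1,\lgyr[P_2,P_1]P_2]$ (eq.~\eqref{dmgdc1sd}). The tools I would invoke are: the invariance of right gyrations under left rotations, $\rgyr[O_nP_1,O_nP_2]=\rgyr[P_1,P_2]$ (eq.~\eqref{bilvb}); the bi-gyration inversion law $\lgyr^{-1}[P_1,P_2]=\lgyr[P_2,P_1]$ (Theorem \ref{tmdkert}); and the identities \eqref{hdken03t}--\eqref{hdken03u} expressing $\opp$ through $\op$ and a left gyration, $P_1\opp P_2=\lgyr[P_1,P_2](P_2\op P_1)$.

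To prove the right reduction property \eqref{piku02}, $\rgyr[P_1,P_2]=\rgyr[P_1,P_2\opp P_1]$, I would begin with \eqref{dmgdc1sc} and left-rotate both entries of $\rgyr$ by $O_n=\lgyr[P_2,P_1]=\lgyr^{-1}[P_1,P_2]$ using \eqref{bilvb}: the first entry collapses to $P_1$ by gyration inversion, and the second entry becomes $\lgyr[P_2,P_1](P_1\op P_2)$, which by the relabelled form of \eqref{hdken03t} is exactly $P_2\opp P_1$. To prove the left reduction property \eqref{piku01}, $\rgyr[P_1,P_2]=\rgyr[P_1\opp P_2,P_2]$, I would instead begin with \eqref{dmgdc1sd} and left-rotate both entries by $O_n=\lgyr[P_1,P_2]=\lgyr^{-1}[P_2,P_1]$: the second entry collapses to $P_2$, and the first becomes $\lgyr[P_1,P_2](P_2\op P_1)=P_1\opp P_2$ by \eqref{hdken03t} directly. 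Each argument is then a three-line chain of equalities, entirely parallel to \eqref{akuv03}--\eqref{akuv04}.

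I do not anticipate a genuine obstacle; the proof is routine once the correct rotation is chosen. The one point requiring care is that right gyrations are invariant under \emph{left} rotations (eq.~\eqref{bilvb}), not right rotations, so the passage $\op\leftrightarrow\opp$ must be routed through \eqref{hdken03t}--\eqref{hdken03u}, which involve a left gyration, rather than through \eqref{hdken01}--\eqref{hdken01s}, which involve a right gyration. The associated bookkeeping hazard is simply keeping straight which gyration inverts which, i.e.\ that $\lgyr^{-1}[P_1,P_2]=\lgyr[P_2,P_1]$, so that the auxiliary left gyrations introduced by \eqref{bilvb} cancel cleanly against the left gyration appearing inside the entries of \eqref{dmgdc1sc} and \eqref{dmgdc1sd}.
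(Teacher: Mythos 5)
Your proposal is correct and follows essentially the same route as the paper: the paper's proof of \eqref{piku01} starts from \eqref{dmgdc1sd} and applies \eqref{bilvb} with $O_n=\lgyr[P_1,P_2]$, and its proof of \eqref{piku02} starts from \eqref{dmgdc1sc} and applies \eqref{bilvb} with $O_n=\lgyr[P_2,P_1]$, in each case finishing with gyration inversion and \eqref{hdken03t}, exactly as you describe. Your remark about routing the passage from $\op$ to $\opp$ through the left-gyration form \eqref{hdken03t} rather than the right-gyration form \eqref{hdken01} is precisely the point the paper's chains \eqref{akuv01}--\eqref{akuv02} rely on.
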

\begin{proof}
By \eqref{dmgdc1sd}, p.~\pageref{dmgdc1sd},
\eqref{bilvb}, p.~\pageref{bilvb}, with $O_n=\lgyr[P_1,P_2]$,
gyration inversion, and \eqref{hdken03t}, p.~\pageref{hdken03t},
we have the following chain of equations,
\begin{equation} \label{akuv01}
\begin{split}
\rgyr[P_1,P_2] &= \rgyr[P_2\op P_1,\lgyr[P_2,P_1]P_2]
\\[6pt] &=
\rgyr[\lgyr[P_1,P_2](P_2\op P_1),\lgyr[P_1,P_2]\lgyr[P_2,P_1]P_2]
\\[6pt] &=
\rgyr[\lgyr[P_1,P_2](P_2\op P_1),P_2]
\\[6pt] &=
\rgyr[P_1\opp P_2,P_2]
\,,
\end{split}
\end{equation}
thus proving \eqref{piku01}.

By \eqref{dmgdc1sc}, p.~\pageref{dmgdc1sc},
\eqref{bilvb}, p.~\pageref{bilvb}, with $O_n=\lgyr[P_2,P_1]$,
gyration inversion, and \eqref{hdken03t}, p.~\pageref{hdken03t},
we have the following chain of equations,
\begin{equation} \label{akuv02}
\begin{split}
\rgyr[P_1,P_2] &= \rgyr[\lgyr[P_1,P_2]P_1,P_1\op P_2]
\\[6pt] &=
\rgyr[\lgyr[P_2,P_1]\lgyr[P_1,P_2]P_1,\lgyr[P_2,P_1](P_1\op P_2)]
\\[6pt] &=
\rgyr[P_1,\lgyr[P_2,P_1](P_1\op P_2)]
\\[6pt] &=
\rgyr[P_1,P_2\opp P_1]
\,,
\end{split}
\end{equation}
thus proving \eqref{piku02}.
\end{proof}

\begin{ttheorem}\label{itskf}
{\bf (Gyration Reduction Properties).}
The gyrations of any bi-gyrogroup $(\Rnm,\opp)$, $m,n\in\Nb$, possess the
left and right reduction property
\begin{equation} \label{ptlm1}
\gyr[P_1,P_2] = \gyr[P_1\opp P_2,P_2]
\end{equation}
and
\begin{equation} \label{ptlm2}
\gyr[P_1,P_2] = \gyr[P_1,P_2\opp P_1]
\,.
\end{equation}
\end{ttheorem}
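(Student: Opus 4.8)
The plan is to obtain each of the two reduction properties of Theorem \ref{itskf} directly from the one-sided reduction properties already proved for left gyrations (Theorem \ref{itsk5}) and for right gyrations (Theorem \ref{itsk5r}), combined through the correspondence of Lemma \ref{itsk3} between a gyration $\gyrabp$ and the bi-gyration $(\lgyrab,\rgyrba)$ that produces it via the defining relation \eqref{kvir01}.

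First I would treat the left reduction property \eqref{ptlm1}. By Definition \ref{itsk1}, $\gyr[P_1,P_2]$ is determined by the bi-gyration $(\lgyr[P_1,P_2],\rgyr[P_2,P_1])$, while $\gyr[P_1\opp P_2,P_2]$ is determined by $(\lgyr[P_1\opp P_2,P_2],\rgyr[P_2,P_1\opp P_2])$. Equality of the left components is exactly the left gyration left reduction property \eqref{kvir10}. Equality of the right components, namely $\rgyr[P_2,P_1]=\rgyr[P_2,P_1\opp P_2]$, is the right gyration right reduction property \eqref{piku02} after renaming the pair $(P_1,P_2)$ as $(P_2,P_1)$. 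By Lemma \ref{itsk3}, equal bi-gyrations yield equal gyrations, giving \eqref{ptlm1}.

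Next I would treat the right reduction property \eqref{ptlm2} in the mirror-image way. The gyration $\gyr[P_1,P_2\opp P_1]$ corresponds to the bi-gyration $(\lgyr[P_1,P_2\opp P_1],\rgyr[P_2\opp P_1,P_1])$. Its left component equals $\lgyr[P_1,P_2]$ by the left gyration right reduction property \eqref{kvir11}, and its right component equals $\rgyr[P_2,P_1]$ by the right gyration left reduction property \eqref{piku01} after renaming $(P_1,P_2)$ as $(P_2,P_1)$. Hence the bi-gyration of $\gyr[P_1,P_2\opp P_1]$ coincides with that of $\gyr[P_1,P_2]$, and Lemma \ref{itsk3} again yields the claim.

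There is no genuine obstacle here: the theorem is an immediate corollary of Theorems \ref{itsk5} and \ref{itsk5r}. The only point needing a moment's care is the bookkeeping of the index swap $(P_1,P_2)\leftrightarrow(P_2,P_1)$ required to match the right-gyration identities \eqref{piku01}--\eqref{piku02} to the right factor $\rgyr[P_2,P_1]$ that appears (rather than $\rgyr[P_1,P_2]$) in the definition \eqref{kvir01} of $\gyrabp$. Should one prefer to avoid invoking Lemma \ref{itsk3}, the same conclusion follows by applying \eqref{kvir01} to an arbitrary $X\in\Rnm$ and substituting the four one-sided reductions verbatim, reading off \eqref{ptlm1} and \eqref{ptlm2} from the resulting expressions.
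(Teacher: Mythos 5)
Your proposal is correct and follows essentially the same route as the paper, which likewise derives \eqref{ptlm1}--\eqref{ptlm2} from Definition \ref{itsk1} together with Theorems \ref{itsk5} and \ref{itsk5r}; you simply spell out the index bookkeeping (the swap $(P_1,P_2)\leftrightarrow(P_2,P_1)$ needed to match \eqref{piku01}--\eqref{piku02} to the right factor $\rgyr[P_2,P_1]$ in \eqref{kvir01}) that the paper leaves implicit. The appeal to Lemma \ref{itsk3} is harmless but only its trivial direction is needed.
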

\begin{proof}
Identities \eqref{ptlm1} and \eqref{ptlm2} follow from Def.~\ref{itsk1}
of $\gyr$ in terms of $\lgyr$ and $\rgyr$, and from Theorems
\ref{itsk5} and \ref{itsk5r}.
\end{proof}

\section{Gyrogroups} \label{seding}

We are now in a position to present the definition of
the abstract gyrocommutative gyrogroup, and prove that any
bi-gyrogroup $(\Rnm,\opp)$ is a gyrocommutative gyrogroup.

Forming a natural generalization of groups, gyrogroups emerged
in the 1988 study of the  parametrization of the Lorentz group
of Einstein's special relativity theory \cite{parametrization,mybook01}.
Einstein velocity addition, thus, provides a concrete example of a 
gyrocommutative gyrogroup operation in the ball of all relativistically
admissible velocities.
\begin{ddefinition}\label{defroupx}
{\bf (Gyrogroups).}
{\it
A groupoid $(G , \op )$
is a gyrogroup if its binary operation satisfies the following
axioms (G1)\,--\,(G5).
In $G$ there is at least one element, $0$, called a left identity, satisfying

\noindent
(G1) \hspace{1.2cm} $0 \op a=a$

\noindent
for all $a \in G$. There is an element $0 \in G$ satisfying axiom $(G1)$ such
that for each $a\in G$ there is an element $\om a\in G$, called a
left inverse of $a$, satisfying

\noindent
(G2) \hspace{1.2cm} $\om a \op a=0\,.$

\noindent
Moreover, for any $a,b,c\in G$ there exists a unique element $\gyr[a,b]c \in G$
such that the binary operation obeys the left gyroassociative law

\noindent
(G3) \hspace{1.2cm} $a\op(b\op c)=(a\op b)\op\gyrab c\,.$

\noindent
The map $\gyr[a,b]:G\to G$ given by $c\mapsto \gyr[a,b]c$
is an automorphism of the groupoid $(G,\op)$, that is,

\noindent
(G4) \hspace{1.2cm} $\gyrab\in\Aut (G,\op) \,,$

\noindent
and the automorphism $\gyr[a,b]$ of $G$ is called
the gyroautomorphism, or the gyration, of $G$ generated by $a,b \in G$.
The operator $\gyr : G\times G\rightarrow\Aut (G,\op)$ is called the
gyrator of $G$.
Finally, the gyroautomorphism $\gyr[a,b]$ generated by any $a,b \in G$
possesses the left reduction property

\noindent
(G5) \hspace{1.2cm} $\gyrab=\gyr [a\op b,b] \,,$
\newline
called the reduction axiom.
}
\end{ddefinition}

The gyrogroup axioms ($G1$)\,--\,($G5$)
in Definition \ref{defroupx} are classified into three classes:
\begin{enumerate}
\item
The first pair of axioms, $(G1)$ and $(G2)$, is a reminiscent of the
group axioms.
\item
The last pair of axioms, $(G4)$ and $(G5)$, presents the gyrator
axioms.
\item
The middle axiom, $(G3)$, is a hybrid axiom linking the two pairs of
axioms in (1) and (2).
\end{enumerate}

As in group theory, we use the notation
$a \om b = a \op (\om b)$
in gyrogroup theory as well.

In full analogy with groups, gyrogroups are classified into gyrocommutative and
non-gyrocommutative gyrogroups.

\begin{ddefinition}\label{defgyrocomm}
{\bf (Gyrocommutative Gyrogroups).}
{\it
A gyrogroup $(G, \oplus )$ is gyrocommutative if
its binary operation obeys the gyrocommutative law

\noindent
(G6) \hspace{1.2cm} $a\oplus b=\gyrab(b\oplus a)$

\noindent
for all $a,b\in G$.
}
\end{ddefinition}

\begin{ttheorem}\label{itsk6}
{\bf (Gyrocommutative Gyrogroup).}
Any bi-gyrogroup $(\Rnm,\opp)$, $n,m\in\Nb$, is a
gyrocommutative gyrogroup.
\end{ttheorem}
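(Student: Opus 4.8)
**

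The plan is to verify, one by one, that the bi-gyrogroup $(\Rnm,\opp)$ satisfies the five gyrogroup axioms (G1)--(G5) of Definition \ref{defroupx} together with the gyrocommutative law (G6) of Definition \ref{defgyrocomm}, since essentially all of the needed ingredients have already been assembled in the preceding sections. Most of the work is simply a matter of pointing to the right earlier result and using Def.~\ref{itsk1}, which packages the left gyration $\lgyrab\in SO(n)$ and the right gyration $\rgyrba\in SO(m)$ into a single gyration $\gyrabp X = \lgyrab X\rgyrba$.

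First I would dispatch the group-reminiscent axioms. For (G1): the identity element of $(\Rnm,\opp)$ is $0_{n,m}$, as noted just after Theorem \ref{smldt}; indeed $0_{n,m}\opp P = (0_{n,m}\op P)\rgyr[P,0_{n,m}]$, and both $0_{n,m}\op P = P$ (first equation of \eqref{anim94s}) and $\rgyr[P,0_{n,m}]=I_m$ (Corollary \ref{vftr}) hold, so $0_{n,m}\opp P = P$. For (G2): by \eqref{madkit00} the left inverse is $\omp P = -P$, and $\omp P\opp P = (\om P\op P)\rgyr[P,\om P] = 0_{n,m}\cdot I_m = 0_{n,m}$ using \eqref{anim95p3}. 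Next, the hybrid axiom (G3) is precisely the left gyroassociative law \eqref{kvir02} of Theorem \ref{itsk2}, and uniqueness of $\gyrabp c$ follows because $\gyrabp$ is a genuine map $\Rnm\to\Rnm$ (Def.~\ref{itsk1}), so $c\mapsto\gyrabp c$ is single-valued. The gyrator axiom (G4), that $\gyrabp\in\Aut(\Rnm,\opp)$, is exactly Theorem \ref{itsk4}. The reduction axiom (G5), $\gyrabp=\gyr[P_1\opp P_2,P_2]$, is exactly \eqref{ptlm1} of Theorem \ref{itskf}. Finally the gyrocommutative law (G6), $P_1\opp P_2=\gyrabp(P_2\opp P_1)$, is \eqref{kvir03d5} of Theorem \ref{itsk2}.

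Therefore the proof is little more than a dictionary translation: for each axiom I cite the corresponding theorem or displayed identity from Sections \ref{trgd}--\ref{trgf}. The one point that deserves an explicit word rather than a bare citation is the verification that $(\Rnm,\opp)$ is closed and that $\opp$ is genuinely well-defined as a binary operation with values in $\Rnm$ — but this is immediate from Def.~\ref{hdken}, since $P_1\op P_2\in\Rnm$ by \eqref{anim94s} and $\rgyr[P_2,P_1]\in SO(m)$ acts on the right of an $n\times m$ matrix to produce an $n\times m$ matrix. I would also remark in passing that the gyrocommutative gyrogroup structure does not depend on the choice between $\op$ and $\opp$ in the sense that both give the same gyrations $\gyrabp$, by Lemma \ref{itsk3}.

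I do not anticipate a genuine obstacle here; the real work was front-loaded into the earlier sections (the bi-gyroassociative law, Theorem \ref{thmasso45}; the bi-gyrocommutative law, Theorem \ref{timbk}; the reduction properties, Theorems \ref{itsk5}--\ref{itskf}; and the automorphism property, Theorem \ref{itsk4}). If anything, the only care needed is bookkeeping: making sure the left/right asymmetry is respected when invoking \eqref{kvir03} versus \eqref{kvir02} and \eqref{madaysa}--\eqref{madaysb} for gyration inversion, so that the roles of $P_1,P_2$ and of $SO(n)$ versus $SO(m)$ are not interchanged. The structure of the final write-up will be a short enumerated list, axiom by axiom, each line a citation, concluding that all six axioms hold and hence $(\Rnm,\opp)$ is a gyrocommutative gyrogroup.
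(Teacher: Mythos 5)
Your proposal is correct and follows essentially the same route as the paper: the paper's proof is likewise a six-item checklist validating (G1)--(G6) by citing the identity element $0_{n,m}$, the inverse $\omp P=-P$, the left gyroassociative law \eqref{kvir02} and gyrocommutative law \eqref{kvir03d5} of Theorem \ref{itsk2}, the automorphism property of Theorem \ref{itsk4}, and the left reduction property \eqref{ptlm1} of Theorem \ref{itskf}. Your additional explicit verifications of (G1), (G2), and well-definedness are harmless extra detail beyond what the paper records.
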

\begin{proof}
We will validate each of the six gyrocommutative gyrogroup axioms
$(G1)$--$(G6)$ in Defs.~\ref{defroupx} and \ref{defgyrocomm}.
\begin{enumerate}
\item \label{nvd01}
The bi-gyrogroup $(\Rnm,\opp)$ possesses the left identity $0_{n,m}$,
thus validating Axiom $(G1)$.
\item \label{nvd02}
Every element $P\in\Rnm$ possesses the left inverse
$\omp P:=-P\in\Rnm$, thus validating Axiom $(G2)$.
\item \label{nvd03}
The binary operation $\opp$ obeys the left gyroassociative law
\eqref{kvir02} by Theorem \ref{itsk2}, thus validating Axiom $(G3)$.
\item \label{nvd04}
The map $\gyr[P_1,P_2]$ is an automorphism of
$(\Rnm,\opp)$ by Theorem \ref{itsk4}, that is,
$\gyr[P_1,P_2]\in\Aut(\Rnm,\opp)$, thus validating Axiom $(G4)$.
\item \label{nvd05}
The binary operation $\opp$ in $\Rnm$ possesses the left reduction
property \eqref{ptlm1} by Theorem \ref{itskf},
thus validating Axiom $(G5)$.
\item \label{nvd06}
The binary operation $\opp$ in $\Rnm$ possesses the
gyrocommutative law \eqref{kvir03d5} by Theorem \ref{itsk2},
thus validating Axiom $(G6)$.
\end{enumerate}
\end{proof}

\section{The Abstract Bi-gyrogroup} \label{safdi1}

Following the key features of the bi-gyrogroups $(\Rnm,\opp)$, the
abstract (bi-gyrocommutative) bi-gyrogroup is defined to be an
abstract (gyrocommutative) gyrogroup the gyrations of which are
bi-gyrations. In order to define bi-gyrations in the abstract context,
we introduce the concept of bi-automorphisms of a groupoid.

An automorphism of a groupoid $(S,+)$ is a bijective map $f$ of $S$ onto
itself that respects the groupoid binary operation, that is,
$f(s_1+s_2)=f(s_1)+f(s_2)$ for all $s_1,s_2\in S$.
An automorphism group, $\Auto(S,+)$, of $(S,+)$ is a group of
automorphisms of $(S,+)$ with group operation given by
automorphism composition.

Let $\Aut_L(S,+)$ and $\Aut_R(S,+)$ be two automorphism groups of
$(S,+)$, called a left and a right automorphism group of $(S,+)$,
such that
\begin{equation} \label{kerm01}
\Aut_L(S,+) \cap \Aut_R(S,+) = I
\,,
\end{equation}
$I$ being the identity automorphism of $(S,+)$.

Finally, let
\begin{equation} \label{kerm02}
\Auto(S,+) = \Aut_L(S,+) \times \Aut_R(S,+)
\end{equation}
be the direct product of $\Aut_L(S,+)$ and $\Aut_R(S,+)$.
\begin{enumerate}
\item\label{avye01}
The application of $f_L\in\Aut_L(S,+)$ to $s\in S$ is denoted by
$f_L(s)$ or $f_Ls$.
\item\label{avye02}
The application of $f_R\in\Aut_R(S,+)$ to $s\in S$ is denoted by
$(s)f_R$ or $sf_R$.
\item\label{avye03}
Accordingly, the application of $(f_L,f_R)\in\Auto(S,+)$ to $s\in S$
is denoted by
\begin{equation} \label{kerm03}
(f_L,f_R)s = f_Lsf_R
\end{equation}
where we assume that the composed map in \eqref{kerm03} is associative,
that is
\begin{equation} \label{kerm04}
(f_Ls)f_R = f_L(sf_R)
\,.
\end{equation}
Furthermore, we assume that the composed map in \eqref{kerm03} is unique,
that is,
\begin{equation} \label{kerm05}
f_{L,1}sf_{R,1} = f_{L,2}sf_{R,2}
~~ \Longrightarrow ~~
f_{L,1}=f_{L,2} \hspace{0.8cm} {\rm and} f_{R,1}=f_{R,2}
\end{equation}
for any $f_{L,k}\in\Aut_L(S,+)$, $f_{R,k}\in\Aut_R(S,+)$, $k=1,2$,
and $s\in S$.
\end{enumerate}

The automorphism group $\Auto(S,+)=\Aut_L(S,+)\times\Aut_R(S,+)$
is said to be a
{\it bi-automorphism group} of the groupoid $(S,+)$.

Let now the groupoid $(S,+)$ be a gyrogroup.
A gyroautomorphism group $\Auto(S,+)$ of $(S,+)$ is any
automorphism group of $(S,+)$ that contains the gyrations of $(S,+)$.
If $\Auto(S,+)$ is a bi-automorphism group of $(S,+)$ then its
direct product structure \eqref{kerm02} induces a direct product
structure for its subset of gyrations
\begin{equation} \label{kerm06}
\gyr[s_1,s_2] = (\lgyr[s_1,s_2],\rgyr[s_1,s_2])
\end{equation}
for all $s_1,s_2\in (S,+)$, where
\begin{equation} \label{kerm07}
\begin{split}
\gyr[s_1,s_2] &\in \Auto(S,+)
\\
\lgyr[s_1,s_2] &\in \Aut_L(S,+)
\\
\rgyr[s_1,s_2] &\in \Aut_G(S,+)
\,.
\end{split}
\end{equation}

The gyrations $\gyr[s_1,s_2]$ in \eqref{kerm06}
of the gyrogroup $(S,+)$ are said to be {\it bi-gyrations}. The
application of a bi-gyration $\gyr[s_1,s_2]$ to $s$ is denoted by
\begin{equation} \label{kerm08}
\gyr[s_1,s_2]s = (\lgyr[s_1,s_2],\rgyr[s_1,s_2])s
=\lgyr[s_1,s_2]s\rgyr[s_1,s_2]
\,.
\end{equation}

\begin{ddefinition}\label{defgyrocomy}
{\bf (Bi-gyrogroups).}
{\it
A (gyrocommutative) gyrogroup whose gyrations are bi-gyrations is
said to be a (bi-gyrocommutative) bi-gyrogroup.
}
\end{ddefinition}

A detailed study of the abstract bi-gyrogroup is presented in
\cite{sukung15}.

Remarkably, our study of special (or, unimodular) pseudo-orthogonal
groups $SO(m,n)$ can be extended straightforwardly to an analogous study
of special (or, unimodular) pseudo-unitary groups $SU(m,n)$.
Accordingly, bi-gyrocommutative bi-gyrogroup theory for $(\Rnm,\opp)$,
as developed in this article,
can be extended straightforwardly to $(\Cb^{n\times m},\opp)$ where
\begin{enumerate}
\item
real $n\times m$ matrices $P\in\Rnm$ are replaced by complex $n\times m$
matrices $P\in\Cb^{n\times m}$;
\item
the transpose $P^t$ of $P\in\Rnm$ is replaced by the
conjugate transpose $P^*=(\bar{P})^t$ of $P\in\Cb^{n\times m}$; and
\item
the special orthogonal matrices $O_k\in SO(k)$, $k=m,n$, are replaced by
special unitary matrices $U_k\in SU(k)$.
\end{enumerate}

{\bf Acknowledgments}
The author owes a huge debt of gratitude to Nikita Barabanov for his
generous collaboration.


\begin{thebibliography}{99}

\bibitem{chatelin12a}
Fran{\c{c}}oise Chatelin.
\newblock {\em Qualitative Computing: A Computational Journey into
  Nonlinearity}.
\newblock World Scientific Publishing Co. Inc., 2012.

\bibitem{demirel13}
O{\u{g}}uzhan Demirel and Emine~Soyt{\"u}rk Seyrantepe.
\newblock The cogyrolines of {M}\"obius gyrovector spaces are metric but not
  periodic.
\newblock {\em Aequationes Math.}, 85(1-2):185--200, 2013.

\bibitem{dragon12}
Norbert Dragon.
\newblock {\em The Geometry of Special Relativity--a Concise Course}.
\newblock Springer, 2012.

\bibitem{ferreira09}
Milton Ferreira.
\newblock Factorizations of {M}\"obius gyrogroups.
\newblock {\em Adv. Appl. Clifford Algebr.}, 19(2):303--323, 2009.

\bibitem{ferreira14}
Milton Ferreira.
\newblock Harmonic analysis on the {E}instein gyrogroup.
\newblock {\em J. Geom. Symmetry Phys.}, 35:21--60, 2014.

\bibitem{ferreira11}
Milton Ferreira and G.~Ren.
\newblock M\"obius gyrogroups: a {C}lifford algebra approach.
\newblock {\em J. Algebra}, 328(1):230--253, 2011.

\bibitem{ferreira13}
Milton Ferreira and Frank Sommen.
\newblock Complex boosts: a {H}ermitian {C}lifford algebra approach.
\newblock {\em Adv. Appl. Clifford Algebr.}, 23(2):339--362, 2013.

\bibitem{gantmacher59}
F.~R. Gantmacher.
\newblock {\em The theory of matrices. {V}ol. 1}.
\newblock AMS Chelsea Publishing, Providence, RI, 1998.
\newblock Translated from the Russian by K. A. Hirsch, Reprint of the 1959
  translation.

\bibitem{hamermesh62}
Morton Hamermesh.
\newblock {\em Group theory and its application to physical problems}.
\newblock Addison-Wesley Series in Physics. Addison-Wesley Publishing Co.,
  Inc., Reading, Mass.-London, 1962.

\bibitem{horn90}
Roger~A. Horn and Charles~R. Johnson.
\newblock {\em Matrix analysis}.
\newblock Cambridge University Press, Cambridge, 1990.
\newblock Corrected reprint of the 1985 original.

\bibitem{solarin14}
T.~G. Jaiy{\'e}{\d{o}}l{\'a}, A.~R.~T. S{\`o}l{\'a}r{\`{\i}}n, and J.~O.
  Ad{\'e}n{\'{\i}}ran.
\newblock Some {B}ol-{M}oufang characterization of the {T}homas precession of a
  gyrogroup.
\newblock {\em Algebras Groups Geom.}, 31(3):341--362, 2014.

\bibitem{kasparian04}
Azniv~K. Kasparian and Abraham~A. Ungar.
\newblock Lie gyrovector spaces.
\newblock {\em J. Geom. Symmetry Phys.}, 1(1):3--53, 2004.

\bibitem{moller52}
C.~M{\o}ller.
\newblock {\em The theory of relativity}.
\newblock Oxford, at the Clarendon Press, 1952.

\bibitem{suksumran15b}
Teerapong Suksumran and Keng Wiboonton.
\newblock Isomorphism theorems for gyrogroups and {L}-subgyrogroups.
\newblock {\em J. Geom. Symmetry Phys.}, 37:67--83, 2015.

\bibitem{suksumran15a}
Teerapong Suksumran.
\newblock The algebra of gyrogroups: {C}ayley's theorem, {L}agrange's theorem
  and isomorphism theorems.
\newblock In {\em Essays in Mathematics and its Applications: in Honor of
  Vladimir Arnold}. Springer, New York, 2015.
\newblock eds: Themistocles M.~Rassias and Panos M.~Pardalos.

\bibitem{suksumran15c}
Teerapong Suksumran and Keng Wiboonton.
\newblock Lagrange's theorem for gyrogroups and the cauchy property.
\newblock {\em Quasigroups Related Systems}, 22:283--294, 2015.

\bibitem{sukung15}
Teerapong Suksumran and Abraham~A. Ungar.
\newblock Bi-gyrogroup: The group-like structure induced by bi-decomposition of
  groups.
\newblock In Preparation.

\bibitem{parametrization}
Abraham~A. Ungar.
\newblock Thomas rotation and the parametrization of the {L}orentz
  transformation group.
\newblock {\em Found. Phys. Lett.}, 1(1):57--89, 1988.

\bibitem{mybook01}
Abraham~A. Ungar.
\newblock {\em Beyond the {E}instein addition law and its gyroscopic {T}homas
  precession: The theory of gyrogroups and gyrovector spaces}, volume 117 of
  {\em Fundamental Theories of Physics}.
\newblock Kluwer Academic Publishers Group, Dordrecht, 2001.

\bibitem{mybook02}
Abraham~A. Ungar.
\newblock {\em Analytic hyperbolic geometry: Mathematical foundations and
  applications}.
\newblock World Scientific Publishing Co. Pte. Ltd., Hackensack, NJ, 2005.

\bibitem{ungardem05}
Abraham~A. Ungar.
\newblock The proper-time {L}orentz group demystified.
\newblock {\em J. Geom. Symmetry Phys.}, 4:69--95, 2005.

\bibitem{mybook03}
Abraham~A. Ungar.
\newblock {\em Analytic hyperbolic geometry and {A}lbert {E}instein's special
  theory of relativity}.
\newblock World Scientific Publishing Co. Pte. Ltd., Hackensack, NJ, 2008.

\bibitem{mybook04}
Abraham~A. Ungar.
\newblock {\em A gyrovector space approach to hyperbolic geometry}.
\newblock Morgan \& Claypool Pub., San Rafael, California, 2009.

\bibitem{mybook05}
Abraham~A. Ungar.
\newblock {\em Hyperbolic triangle centers: {T}he special relativistic
  approach}.
\newblock Springer-Verlag, New York, 2010.

\bibitem{mybook06}
Abraham~A. Ungar.
\newblock {\em Barycentric calculus in {E}uclidean and hyperbolic geometry: A
  comparative introduction}.
\newblock World Scientific Publishing Co. Pte. Ltd., Hackensack, NJ, 2010.

\bibitem{ungarhyp13}
Abraham~A. Ungar.
\newblock Hyperbolic geometry.
\newblock {\em J. Geom. Symmetry Phys.}, 32:61--86, 2013.

\bibitem{mybook07}
Abraham~A. Ungar.
\newblock {\em Analytic hyperbolic geometry in n dimensions: An introduction}.
\newblock CRC Press, Boca Raton, FL, 2015.

\bibitem{ungar15}
Abraham~A. Ungar.
\newblock Novel tools to determine hyperbolic triangle centers.
\newblock In {\em Essays in Mathematics and its Applications: in Honor of
  Vladimir Arnold}. Springer, New York, 2015.
\newblock eds: Themistocles M.~Rassias and Panos M.~Pardalos.

\end{thebibliography}
\end{document}